\theoremstyle{definition}
\newtheorem{theorem}{Theorem}[section]
\newtheorem{proposition}[theorem]{Proposition}
\newtheorem{lemma}[theorem]{Lemma}
\newtheorem{corollary}[theorem]{Corollary}
\newtheorem{conjecture}[theorem]{Conjecture}
\newtheorem{definition}[theorem]{Definition}
\newtheorem{example}[theorem]{Example}
\newtheorem{remark}[theorem]{Remark}
\newcommand{\F}{\mathds{F}}
\newcommand{\colsp}{\text{colsp}}
\newcommand{\rank}{\text{rk}}
\newcommand{\Tr}{\text{Tr}}
\newcommand{\GL}{\text{GL}}
\newcommand{\Acode}{\mathcal{A}}
\newcommand{\Dcode}{\mathcal{D}}
\newcommand{\code}{\mathcal{C}}
\newcommand{\Mcode}{\mathcal{M}}
\newcommand{\Ncode}{\mathcal{N}}
\newcommand{\Tcode}{\mathcal{T}}
\newcommand{\Vcal}{\mathcal{V}}
\newcommand{\Lcal}{\mathcal{L}}
\newcommand{\MDSFam}{M^{\textnormal{H}}_{n,q}}
\newcommand{\MRDFam}{M^{\textnormal{rk}}_{n\times m,q}}
\newcommand{\MDS}{\text{MDS}}
\newcommand{\MRD}{\text{MRD}}
\newcommand{\maxrk}{\text{maxrk}}
\newcommand{\supp}{\text{supp}}
\newcommand{\dd}{\text{d}}
\newcommand{\Hd}{\text{d}_{\text{H}}}
\newcommand{\rkd}{\text{d}_{\text{rk}}}
\newcommand{\srkd}{\text{d}_{\text{srk}}}
\newcommand{\w}{\text{w}}
\newcommand{\Hw}{\text{w}_{\text{H}}}
\newcommand{\rkw}{\text{w}_{\text{rk}}}
\newcommand{\srkw}{\text{w}_{\text{srk}}}
\title{\textbf{A distance-free approach to generalized weights}}
\author{Andrea Di Giusto$^1$, Elisa Gorla$^2$, Alberto Ravagnani$^1$}
\affil{$^1$Technische Universiteit Eindhoven, $^2$Université de Neuchâtel}
\date{}
\begin{document}

\setlength{\parindent}{20pt}

\maketitle

\begin{sloppypar}

\begin{abstract}
We propose a unified theory of generalized weights for linear codes  endowed with an arbitrary distance. Instead of relying on supports or anticodes, the weights of a code are defined via the intersections of the code with a chosen family of spaces, which we call a test family. The choice of test family determines the properties of the corresponding generalized weights and the characteristics of the code that they capture. 
In this general framework, we prove that generalized weights are weakly increasing and that certain subsequences are strictly increasing. We also prove a duality result reminiscent of Wei's Duality Theorem. The corresponding properties of generalized Hamming and rank-metric weights follow from our general results by selecting optimal anticodes as a test family.
For sum-rank metric codes, we propose a test family that results in generalized weights that are closely connected to -- but not always the same as -- the usual generalized weights. This choice allows us to extend the known duality results for generalized sum-rank weights to some sum-rank-metric codes with a nonzero Hamming component.
Finally, we explore a family of generalized weights obtained by intersecting the underlying code with MDS or MRD codes.  
\end{abstract}

\section*{Introduction}\label{sec:introduction}

Generalized Hamming (GH) weights are invariants of codes in the Hamming metric, which first rose to popularity when V. Wei linked them to the cryptographic performance of coset coding schemes in the wire tap channel II security model~\cite{wei1991generalized}.
Since then, GH weights have proven to be an effective means of describing the structure of codes, and their properties have been studied extensively.
A connection to the distance/length profile and the trellis complexity of a linear block code, a fundamental parameter to measure the decoding complexity, was discovered in~\cite{forney1994dimension}. 

Generalized weights of linear codes have also been defined and studied for other metrics, such as the rank metric~\cite{kurihara2015relative,ravagnani2016generalized,martinez2017relative} and the sum-rank metric~\cite{camps2022optimal}.
In the more general framework of codes over rings, the properties of generalized weights have been explored over Galois rings~\cite{ashikhmin1998generalized}, finite chain rings~\cite{horimoto2001generalized}, and Frobenius rings~\cite{liao2022relative}. For convolutional codes, different notions of generalized weights have been proposed, in connection with both the free and the column distances, see e.g.~\cite{RosenthalYork1997,cardell2020generalized,gorla2023generalized,gorla2024generalized}. On a different note, the relation between generalized weights and graded Betti numbers of the monomial ideal associated to a code is discussed in~\cite{johnsen2013hamming} for linear block codes endowed with the Hamming metric and in~\cite{gorla2022generalized} for codes over rings endowed with a well-behaved support function.

These different notions of generalized weights share some features: They are usually increasing sequences of natural numbers that measure the minimum weight of a subcode of a given dimension, where the weight function changes depending on the metric used.
A striking similarity is that, regardless of the metric, it is often possible to prove duality statements for generalized weights.
Such statements allow to determine the generalized weights of the dual of a code $\code$ from the weights of $\code$.
Beyond their mathematical interest, duality results play a central role in the security applications of generalized weights, as the performance of a code is often determined by the weights of its dual code.

In~\cite{forney1994dimension}, D. Forney argued that generalized weights have ‘‘more to do with length and dimension than with distance’’. 
This observation may be taken as a starting point for the approach to generalized weights that we adopt in this paper, where we propose a definition of generalized weights that does not involve the distance.

\paragraph{Our contribution.} We define generalized weights of a linear code $\code$ by considering the intersection of $\code$ with a given family of codes $T$.
The families we consider satisfy some assumptions, listed in~\Cref{def:test_family}, and are called test families. Their elements are called test codes.
The weights $\tau_{T,r}$ are defined as the minimum dimension of a test code~$\Tcode$, whose intersection with $\code$ has at least a given dimension. 
In formul\ae
\begin{equation}\label{defwts}
    \tau_{T,r}(\code)=\min\{\dim(\Tcode)\mid\Tcode\in T,\dim(\Tcode\cap\code)\geq r\},
\end{equation}
where $1\leq r\leq\dim(\code)$.
This approach is inspired by the reformulation of the definition of generalized weights proposed in~\cite{ravagnani2016generalized} for the Hamming and rank metrics.

Working with an arbitrary metric, the choice of a test family allows us to associate a sequence of weights to a linear code.
The weights inherit properties from the test family used in their definition.
Our main result is that generalized weights defined as in~\Cref{{defwts}} satisfy a duality theorem.
Many known families of codes are test families: As a running example, we instantiate our construction with the family of standard optimal anticodes in the Hamming metric. 
Standard optimal anticodes in the rank and sum-rank metrics also provide natural examples of test families.

There are two main definitions of generalized weights for codes in the rank metric: 
One always produces invariants~\cite{ravagnani2016generalized}, while the other has a closer link to security and network coding~\cite{kurihara2015relative,martinez2017relative}.
However, both notions satisfy duality statements. Thanks to the concept of test families and to the related duality result, we can shed new light on why both dualities hold. At the same time, we provide a unified treatment for the different definitions of generalized weights in the rank metric. 

Having one distance-free construction that is consistent with known results in the Hamming and rank metrics, it is natural to apply it to sum-rank metric codes. In this context, we propose a definition of generalized weights that agrees with that of~\cite{camps2022optimal} for codes with a trivial Hamming component, but is slightly different in the general case. This allows us to recover the duality result of~\cite{camps2022optimal} and to establish a new duality result for some codes with a non-trivial Hamming part. 

Finally, we consider generalized weights derived from the intersection with the MDS and MRD codes.
Although these are very natural code families to consider, they are not test families in general. This is consistent with the fact that duality does not hold for the corresponding weights.
We provide explicit counterexamples showing that the weights of a code do not determine those of the dual in this context.
This provides a heuristic justification for why the assumptions in our constructions are minimal: The families of MDS and MRD codes lack the regular containment structure of test families, and this directly affects the duality of the corresponding weights.
The absence of a general duality statement does not mean that duality cannot hold in special cases. For example, we prove a duality result for the weights derived from MDS codes under the assumption that the field size $q$ is large enough.
Using results from the literature, we show that, for a sufficiently large $q$, the family of MDS codes is indeed a test family.
Moreover, it turns out that in the same regime of parameters, our weights are equivalent to the code distances, a family of invariants recently introduced in~\cite{camps2025code}.

\paragraph{Structure of this work.} In~\Cref{sec:preliminaries} we recall some preliminaries and fix the notation for the rest of the paper.
Our definition of generalized weights for arbitrary vector spaces and the corresponding duality results are presented in~\Cref{sec:general_theory}, where we use GH weights as a running example.
In~\Cref{sec:rank} we show how to instantiate the construction to obtain the two different notions of generalized weights in the rank metric, with an emphasis on their duality properties.
We move to the study of sum-rank metric codes in~\Cref{sec:sumrank}, where we show how to extend the existing duality results for generalized weights to some codes with nontrivial Hamming part.
In~\Cref{sec:MDS} we treat the weights derived from MDS/MRD codes.
Over fields of large enough size $q$, we relate generalized weights obtained from MDS codes and code distances.

\section{Preliminaries and notation}\label{sec:preliminaries}

Fix integers $0\leq m\leq n$ and let $[m,n]=\{m,m+1,\ldots,n\}$, $[n]=[1,n]$, and $[n]_m=\{i\in[n]:i=0\mod m\}$.
Let $q$ be a prime power and $\F_q$ be the finite field with $q$ elements. We denote by $V$ a finite dimensional vector space of dimension $\dim(V)=\nu$ over $\F_q$.
Linear subspace containment is denoted by $\leq$.
\begin{definition}
A $[\nu,k]_q$ \textit{linear code} in $V$ is a $k$-dimensional vector subspace $\code\leq V$.
\end{definition}

All codes in this work are linear.
The characteristics and performance of codes are measured and compared using parameters.
Examples of parameters of a code $\code$ include its dimension $\dim(\code)=k$ and the dimension of the ambient space $\dim(V)=\nu$, often called the {\em length} of the code.

If the space $V$ is equipped with a nondegenerate bilinear form $\langle\cdot,\cdot\rangle:V\times V\to\F_q$, then one can define dual codes. The reader should be familiar with the standard inner product in $V=\F_q^n$ used to define dual codes in the Hamming metric, see~\Cref{ex:Hamming_space}.

\begin{definition}
Two vectors $x,y\in V$ are orthogonal if $\langle x,y\rangle=0$. The {\em dual} of a code $\code$ is the set of vectors which are orthogonal to all elements of $\code$, that is, $$\code^\perp=\{y\in V\mid \langle x,y\rangle=0\;\mbox{ for all } x\in\code\}.$$
\end{definition}

The natural framework of coding theory is a metric $\F_q$-vector space $(V,\dd)$, where the distance $\dd$ allows for error correction. 
We are especially interested in three examples: the Hamming and rank metric spaces, which are ubiquitous in coding theory, and the sum-rank metric space, which generalizes the previous two and is of interest, for example, in multishot network coding~\cite{napp2018multi} and space time coding~\cite{shehadeh2021space}.

\begin{example}[Hamming metric space]\label{ex:Hamming_space}
Let $(V,\dd)=(\F_q^n,\Hd)$, where $\Hd$ is the \textit{Hamming distance} and $\langle\cdot,\cdot\rangle$ is the \textit{standard inner product}.
For $x,y\in V$, $x=(x_1,\ldots,x_n)$, $y=(y_1,\ldots,y_n)$
    \begin{equation*}
        \Hd(x,y)=|\{i:x_i\neq y_i\}|\quad\textnormal{and}\quad\ \langle x,y\rangle=\sum_{i=1}^nx_iy_i.
    \end{equation*}
The book~\cite{huffman2010fundamentals} is a standard reference on codes in the Hamming metric.
\end{example}

\begin{example}[Rank-metric space]\label{ex:rank_space}
Let $(V,\dd)=(\F_q^{n\times m},\rkd)$, where $\rkd$ is the \textit{rank distance} and the bilinear form is the \textit{trace product}: 
    \begin{equation*}
        \rkd(X,Y)=\rank(X-Y)\quad\textnormal{and}\quad\langle X,Y\rangle=\Tr(XY^t)
    \end{equation*}
for $X,Y\in V$. Here, $\rank(\cdot)$ and $\Tr(\cdot)$ denote the rank and trace functions, respectively, and $Y^t$ denotes the transpose of $Y$.
See~\cite{gorla2021rank} for an introduction to rank-metric codes.
\end{example}

\begin{example}[Sum-rank metric space]\label{ex:sumrank_space}
    Let $V=\F_q^{n_1\times m_1}\times\ldots\times\F_q^{n_t\times m_t}$, where $1\leq n_i\leq m_i$ for all $i=1,\ldots,t$, and $m_1\geq m_2\geq\ldots\geq m_t$.
    The \textit{sum-rank distance} between two $t$-uples of matrices $X=(X_1,\ldots,X_t)$, $Y=(Y_1,\ldots,Y_t)\in V$ is 
    \begin{equation*}
        \srkd(X,Y)=\sum_{i=1}^t\rkd(X_i,Y_i)=\sum_{i=1}^t\rank(X_i-Y_i).
    \end{equation*}
The bilinear form on $V$ is the \textit{sum-trace product}
    \begin{equation*}
        \langle X,Y\rangle=\sum_{i=1}^t\Tr(X_iY_i^t).
    \end{equation*}
The sum-rank metric is a generalization of both the rank metric and the Hamming metric. In fact, if $t=1$, then $V=\F_q^{n_1\times m_1}$ and $\srkd=\rkd$.
Moreover, if $m_1=1$ then $m_i=n_i=1$ for all $i$, then $V=\F_q^t$ and the sum-rank metric $\srkd$ coincides with the Hamming metric $\Hd$ on $V$.

Let $u=\max\{i\in[t]:m_i>1\}$.
Since $\F_q^{n_i\times m_i}=\F_q$ for any $i>u$, then
    \begin{equation*}
        V=\F_q^{n_1\times m_1}\times\ldots\times\F_q^{n_u\times m_u}\times\underbrace{\F_q\times\ldots\F_q}_{t-u}=\F_q^{n_1\times m_1}\times\ldots\times\F_q^{n_u\times m_u}\times\F_q^{t-u}
    \end{equation*}
and for $X=(X_1,\ldots,X_t),Y=(Y_1,\ldots,Y_t)\in V$ we have
    \begin{equation*}
        \srkd(X,Y)=\sum_{i=1}^u\rkd(X_i,Y_i)+\Hd((X_{u+1},\ldots,X_t),(Y_{u+1},\ldots,Y_t))
    \end{equation*}
Moreover $$\langle X,Y\rangle=\sum_{i=1}^u\Tr(X_iY_i^t)+\langle(X_{u+1},\ldots,X_t),(Y_{u+1},\ldots,Y_t)\rangle,$$
where $\langle(X_{u+1},\ldots,X_t),(Y_{u+1},\ldots,Y_t)\rangle$ denotes the standard inner product of $(X_{u+1},\ldots,X_t)$ and $(Y_{u+1},\ldots,Y_t)$ in $\F_q^{t-u}$.
\end{example}

\begin{definition}
Let $(V,\dd)$ be a metric vector space and let $x\in V$,
The \textit{weight} of $x$ is defined as $\w(x)=\dd(x,0)$, where $0$ is the zero of $V$.
\end{definition}

In specific instantiations, we will let the notation for the weight follow the notation for the distance. For example, the weight associated to the Hamming metric $\Hd$ will be denoted by $\Hw$.

The notion of isometry allows us to identify codes that are equivalent for coding theory purposes.

\begin{definition}\label{def:isometry}
Let $\code,\code'$ be codes in a metric vector space  $(V,\dd)$.
We say that $\code$ is \textit{isometric} or {\em equivalent} to $\code'$ (notation $\code\simeq\code'$) if there is a linear map $\phi:V\rightarrow V$ such that $\phi(\code)=\code^\prime$ and $\dd(x,y)=\dd(\phi(x),\phi(y))$ for all $x,y\in V$.
Isometries induce an equivalence relation on the set of all codes, also known as \textit{code equivalence}.
An \textit{invariant} of a code $\code$ is any parameter which is constant on the elements of every equivalence class of $\code$.
\end{definition}

For example, the dimension of a code is an invariant, since isometries are bijective.
Another key invariant in coding theory is the minimum distance of a code.

\begin{definition}
Let $\code$ be a linear code in $(V,\dd)$, $\code\neq 0$. Its \textit{minimum distance} is
\begin{equation*}
d(\code)=\min\{\dd(c,c')\mid c,c'\in\code,c\neq c'\}.
\end{equation*}
\end{definition}

The minimum distance is directly connected to the error-correcting performance of the code, and it is an invariant because isometries preserve distances.

In this work, we focus on generalized weights, a family of code parameters. We use the term generalized weight to refer to any of these quantities and specialize it to any specific construction by adding other adjectives. For example, generalized Hamming (GH) weights are generalized weights in the Hamming metric space, whose study was initiated in~\cite{wei1991generalized}.
The original definition of GH weights makes use of the Hamming support of a code in the Hamming space.

\begin{definition}\label{def:Hamming_support}
    The \textit{Hamming support} of a vector $x=(x_1,\ldots,x_n)\in\F_q^n$ is the subset ${\supp(x)=\{i\in[n]:x_i\neq0\}}\subseteq[n]$.
    The support of a code $\code\leq\F_q^n$ is the union of the supports of its codewords, namely
    \begin{equation*}
        \supp(\code)=\bigcup_{c\in\code}\supp(c).
    \end{equation*}
\end{definition}

Notice that the support of a $1$-dimensional code coincides with the support of any of its generators, which justifies the use of the notation $\supp(\cdot)$ for both vectors and codes.
The following is the original definition of GH weights given in~\cite{wei1991generalized}.

\begin{definition}\label{def:GHW}
    Let $0\neq \code\leq\F_q^n$ be a $k$-dimensional linear code.
    For every $r=1,\ldots,k$ the $r^{th}$ \textit{generalized Hamming weight} is defined as
    \begin{equation*}
        d_r(\code)=\min\{|\supp(\Dcode)|:\Dcode\leq\code,\dim(\Dcode)=r\}.
    \end{equation*}
\end{definition}

GH weights are invariants of linear codes that capture many nontrivial properties, including the information leakage of coset coding schemes on the type II wire tap channel model.
In the case of the rank metric, two definitions of generalized weights with different characteristics have been proposed.
The definition of~\cite[Definition 10]{martinez2017relative} does not always yield a set of invariants, whereas~\cite[Definition 23]{ravagnani2016generalized} does. 
However, the first definition measures universal security on wire-tap networks.
Both families of generalized weights satisfy a duality statement.
In this paper, we introduce a general construction of generalized weights that will help us to understand the common characteristics of these weights, especially their duality theory.

Given a set of parameters defined for all codes in a space $V$, we say that they satisfy a duality statement if the parameters of $\code^\perp$ are determined by those of $\code$.
For example, the following is the well-known duality statement for GH weights.

\begin{theorem}[{\cite[Theorem 3]{wei1991generalized}}]\label{thm:duality_GHW}
Let $\code\leq\F_q^n$ be a $k$-dimensional code. Then
\begin{equation*}
\{d_s(\code^\perp)\mid1\leq s\leq n-k\}=[n]\setminus\{n+1-d_r(\code) \mid 1\leq r\leq k\}.
\end{equation*}
\end{theorem}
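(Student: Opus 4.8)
The plan is to eliminate supports entirely and reduce the statement to a single duality identity between dimensions of shortened codes. First I would rewrite the generalized Hamming weights: for $J\subseteq[n]$ set $\code(J)=\{c\in\code:\supp(c)\subseteq J\}$, and observe that $\supp(\Dcode)\subseteq J$ holds exactly when $\Dcode\leq\code(J)$. Then \Cref{def:GHW} becomes
\[
d_r(\code)=\min\{|J|:\dim\code(J)\geq r\},\qquad 1\leq r\leq k.
\]
The key device is the nondecreasing profile $g(w)=\max\{\dim\code(J):J\subseteq[n],\,|J|=w\}$ for $0\leq w\leq n$, and its analogue $g^\perp$ for $\code^\perp$. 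The point of this reformulation is that $W(\code):=\{d_r(\code):1\leq r\leq k\}$ is precisely the set of locations where $g$ jumps.

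Next I would establish that $g$ increases in unit steps: $g(0)=0$, $g(n)=k$, and $g(w)-g(w-1)\in\{0,1\}$. Monotonicity is immediate since enlarging $J$ enlarges $\code(J)$; for the upper bound, if $|J|=w$ realises $g(w)$, then deleting a single coordinate $j$ from $J$ imposes one linear condition, so $\dim\code(J\setminus\{j\})\geq\dim\code(J)-1$. Consequently $g$ attains every value $0,1,\ldots,k$ and has exactly $k$ jumps, located at $d_1(\code)<\cdots<d_k(\code)$; in particular the weights are strictly increasing, so $W(\code)$ has $k$ elements and $W(\code^\perp)$ has $n-k$ elements, which matches the cardinalities on both sides of the claimed equality.

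The heart of the argument is the duality identity
\[
\dim\code(J)=|J|-(n-k)+\dim\code^\perp([n]\setminus J),\qquad J\subseteq[n].
\]
I would obtain it by applying rank–nullity to the projection of $\code$ onto the coordinates outside $J$, which gives $\dim\code(J)=k-\dim(\code|_{[n]\setminus J})$, and then invoking the standard fact that puncturing and shortening are interchanged by duality: $(\code|_{I})^\perp=\code^\perp(I)$ as codes in $\F_q^{I}$, so that $\dim(\code|_{I})=|I|-\dim\code^\perp(I)$ with $I=[n]\setminus J$. Maximising over all $J$ of a fixed size $w$ — so that $[n]\setminus J$ ranges over all subsets of size $n-w$ — this upgrades to the profile identity
\[
g(w)=w-(n-k)+g^\perp(n-w).
\]

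Finally I would read off the jumps. Taking first differences, $g(w)-g(w-1)=1-\bigl(g^\perp(n+1-w)-g^\perp(n-w)\bigr)$, and since both differences lie in $\{0,1\}$, a jump of $g$ at $w$ corresponds exactly to the absence of a jump of $g^\perp$ at $n+1-w$. Equivalently, $s\in W(\code^\perp)$ if and only if $n+1-s\notin W(\code)$, which rearranges directly to $W(\code^\perp)=[n]\setminus\{n+1-d_r(\code):1\leq r\leq k\}$, the assertion of the theorem. I expect the duality identity to be the main obstacle: one must set up the puncturing/shortening relation with the correct ambient spaces $\F_q^{I}$ and carry out the dimension bookkeeping carefully, since this is the only place where the dual code enters and everything downstream is purely formal. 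The unit-step property is the other essential input, as it is what pins the jump sets to the weight hierarchies and guarantees the two sides have equal size.
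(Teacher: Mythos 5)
Your proof is correct, but it takes a genuinely different route from the paper. The paper never argues with supports or coordinate projections: it recognizes the GH weights as generalized $T$-weights for the test family of SOAs (step $\ell=1$), and then Wei duality falls out of the general \Cref{thm:duality} (see \Cref{ex:duality_GHW}), whose proof is by contradiction — from the one-sided inequality $\tau_{k^\perp+r-\tau_r(\code)}(\code^\perp)\leq\nu-\tau_r(\code)$ of \Cref{lem:T_duality_ineq} one deduces that the two sets in the statement are disjoint, and the strict-monotonicity and cardinality facts of \Cref{lem:T-weights_basic_prop} then force them to be complementary by counting. Your argument is instead the classical dimension/length-profile proof in the spirit of Forney: the exact identity $\dim\code(J)=|J|-(n-k)+\dim\code^\perp([n]\setminus J)$, obtained from rank–nullity plus shortening/puncturing duality, upgrades to the profile identity $g(w)=w-(n-k)+g^\perp(n-w)$, and the unit-step property turns the weight hierarchies into jump sets that correspond bijectively under $w\mapsto n+1-w$. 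What your approach buys is a self-contained, elementary proof with an \emph{equality} at its core, so complementarity is read off pointwise rather than inferred from a disjointness-plus-counting argument; what it costs is generality — the profile identity and the unit-step property lean on Hamming-specific structure (coordinate projections, supports), whereas the paper's argument uses only the closure and containment axioms of a test family and therefore transfers verbatim to the rank, sum-rank, and large-$q$ MDS settings, which is precisely the point of the paper's framework. One small point worth noting in your write-up: the equivalence $d_r(\code)=\min\{|J|:\dim\code(J)\geq r\}$ that you prove directly is exactly the statement that GH weights are $\tau$-weights for the SOA family, i.e., the ingredient the paper imports from the literature.
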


\subsection{Bounds on code parameters}
    
Determining the code parameters for a given code can be a hard task. In this context, bounds are useful, as they restrict the range of possible values of a parameter given the value of the others.
The Singleton Bound is one of the fundamental bounds in the Hamming metric. It relates the length, dimension, and minimum distance of a code.

\begin{theorem}[Singleton Bound, {\cite[Theorem 1]{singleton1964maximum}}]\label{thm:Singleton_Hamming}
Let $\code\leq(\F_q^n,\Hd)$ be a $k$-dimensional code with minimum distance $d(\code)=d$. Then $d+k\leq n+1$. 
\end{theorem}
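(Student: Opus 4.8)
The plan is to use the classical puncturing (coordinate-deletion) argument, which relies only on the definitions of Hamming weight and minimum distance already fixed in the excerpt. First I would record the standard observation that, since $\code$ is linear, its minimum distance coincides with its minimum nonzero weight: for $c,c'\in\code$ we have $\Hd(c,c')=\Hw(c-c')$ with $c-c'\in\code$, so
\begin{equation*}
d=\min\{\Hw(c)\mid c\in\code,\ c\neq0\}.
\end{equation*}
In particular every nonzero codeword has support of size at least $d$.

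Next I would introduce the $\F_q$-linear projection $\pi:\F_q^n\to\F_q^{n-d+1}$ that deletes a fixed set of $d-1$ coordinates (say the last $d-1$), and restrict it to $\code$. The key step is to prove that $\pi|_\code$ is injective. Suppose $c,c'\in\code$ satisfy $\pi(c)=\pi(c')$. Then $c-c'\in\code$ vanishes on every coordinate outside the deleted set, so $\supp(c-c')$ is contained in a set of size $d-1$, giving $\Hw(c-c')\leq d-1<d$. By the minimum-weight characterization above this forces $c-c'=0$, hence $c=c'$, so $\pi|_\code$ is injective.

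Finally I would conclude with a dimension count: an injective linear map cannot increase dimension, so
\begin{equation*}
k=\dim(\code)=\dim(\pi(\code))\leq\dim(\F_q^{n-d+1})=n-d+1,
\end{equation*}
which rearranges to the desired inequality $d+k\leq n+1$.

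Since the argument is elementary, there is no serious obstacle; the only point requiring care is the injectivity step, where one must delete exactly $d-1$ coordinates (deleting $d$ of them could fail, as a minimum-weight codeword might be supported there). An alternative route, if one prefers to lean on the generalized-weight machinery, would be to note $d=d_1(\code)$ from \Cref{def:GHW} and invoke the strict monotonicity $d_1(\code)\leq d_k(\code)-(k-1)\leq n-(k-1)$; however, this presupposes the strict monotonicity of GH weights, whereas the puncturing argument is fully self-contained given only the preceding definitions.
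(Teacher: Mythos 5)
Your proof is correct. Note that the paper does not prove this statement at all --- it is quoted as a classical result with a citation to Singleton's original paper --- so there is no in-paper argument to compare against; your puncturing argument (delete $d-1$ coordinates, use the linearity-based identity $d=\min\{\Hw(c)\mid c\in\code,\,c\neq0\}$ to get injectivity, then count dimensions) is the standard self-contained proof, and every step, including the care taken to delete exactly $d-1$ rather than $d$ coordinates, is sound. Your remark about the alternative route via generalized Hamming weights is also apt: in this paper strict monotonicity of GH weights is only obtained later (via \Cref{lem:T-weights_basic_prop} applied to the SOA test family), so the elementary argument is the right choice here.
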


\begin{definition}
A code $\code\leq\F_q^n$ is \textit{maximum distance separable} (MDS) if either $\code=0$ or its parameters meet the Singleton Bound.
For any $n,q$, we denote the family of MDS codes in $(\F_q^n,\Hd)$ by $\MDSFam$.
\end{definition}

Having the largest possible minimum distance among codes of a given dimension, MDS codes are considered optimal from the point of view of their error-correction capacity.
Moreover, for every $n$ and $q$, the family $\MDSFam$ is closed under duality by~\cite[Theorem 2.4.3]{huffman2010fundamentals}.
Some codes are MDS in a very obvious way: For every vector $v\in\F_q^n$ such that $\Hw(v)\in\{0,n\}$, the code generated by $v$ is MDS.
These are all MDS codes of dimension 0 and 1, and their duals are all MDS codes of dimension $n-1$ and $n$.
Such codes are referred to as trivial MDS codes.

A limitation to the use of MDS codes comes from the relation between the field size $q$ and their length $n$, which is subject to the following conjecture.

\begin{conjecture}[{\cite[Problem $I_{r,q}$]{segre1955p}}]\label{conj:MDS}
When $q$ is odd, the family \smash{$\MDSFam$} contains nontrivial codes if and only if $n\leq q+1$.
When $q$ is even, the family $\MDSFam$ contains nontrivial codes if and only if $n\leq q+2$. Furthermore, the only nontrivial codes for $n=q+2$ have dimension $k=3,q-1$.
\end{conjecture}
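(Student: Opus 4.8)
This is the celebrated \emph{MDS conjecture}, which remains open in full generality; what follows is therefore not a complete proof but a description of the strategy and of the cases in which it is known to succeed. The natural first move is the geometric reformulation. A $[n,k]_q$ code with $k\geq 1$ is MDS precisely when every $k$ columns of a generator matrix are linearly independent, equivalently when the $n$ columns, read as points of $\mathrm{PG}(k-1,q)$, form an \emph{$n$-arc}: a set of $n$ points no $k$ of which lie on a common hyperplane. A \emph{nontrivial} MDS code (dimension $2\leq k\leq n-2$, so $n\geq 4$) corresponds to an $n$-arc in a space of projective dimension between $2$ and $n-2$, and the conjecture becomes a statement about the maximum size of an arc. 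I would carry the whole argument in this language and split it into the two implications.

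For the ``if'' direction the plan is purely constructive. Fix $4\leq n\leq q+1$ and any dimension $2\leq k\leq n-2$; the doubly extended generalized Reed--Solomon code built from $n$ distinct evaluation points in $\F_q\cup\{\infty\}$ is MDS and has length up to $q+1$ in each such dimension, so a nontrivial MDS code of length $n$ exists. For $q$ even and $n=q+2$ one exhibits the codes directly: the code associated to a hyperoval of $\mathrm{PG}(2,q)$ (the $(q+2)$-arc obtained by adjoining the nucleus to a conic) realizes $k=3$, and its dual realizes $k=q-1$, matching the stated dimensions. This part is routine once these constructions are recalled.

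The ``only if'' direction is the hard one, and it is exactly the open part of the conjecture. The base case $k=3$, i.e.\ arcs in the projective plane, is Segre's theorem: to an arc one attaches, through the \emph{lemma of tangents}, an algebraic envelope formed by its tangent lines, and a counting argument on the associated curve forces an arc in $\mathrm{PG}(2,q)$ to have at most $q+1$ points when $q$ is odd, with equality only for conics. The engine for larger $k$ is projection from a point of the arc, which produces an arc of one lower projective dimension and would let one propagate the planar bound upward, in tandem with a higher-dimensional analogue of the lemma of tangents. The principal obstacle is that this lifting does not close the induction for an arbitrary prime power $q$: completing it is precisely the conjecture. It is known to hold when $q$ is prime (via Ball's polynomial method) and in several further ranges, but not unconditionally. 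Accordingly, I would keep the statement labeled as a conjecture, use the constructive ``if'' direction freely, and invoke the ``only if'' direction only for restricted field sizes, as the later sections require.
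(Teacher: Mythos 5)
You have correctly recognized that this statement is the MDS conjecture, which the paper itself presents as a conjecture with no proof, citing Segre's work only as motivation and Ball's results (\cite{ball2012sets}) for the cases where it is known, notably $q$ prime. Your assessment — keep it as a conjecture, use the constructive direction freely, and invoke the bound only for restricted field sizes — matches exactly how the paper treats it, so there is nothing to correct.
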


A first result which motivates the conjecture may be found in~\cite{segre1955}.
A proof of the conjecture in several cases, including when $q$ is a prime, may be found in~\cite{ball2012sets}.

The next result is the analog of the Singleton Bound in the context of rank-metric codes.

\begin{theorem}[{\cite[Theorem 5.4]{delsarte1978bilinear}}]\label{thm:Singleton_rank}
Let $n\leq m$ and let $\code\leq(\F_q^{n\times m},\rkd)$, $\code\neq 0$, be a $k$-dimensional code with minimum distance $d(\code)=d$. Then $k\leq m(n-d+1)$.
\end{theorem}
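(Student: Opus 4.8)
The plan is to adapt the puncturing proof of the Hamming Singleton Bound (\Cref{thm:Singleton_Hamming}) to the rank metric, replacing the deletion of coordinates by the deletion of rows. First I would observe that, since $\code$ is linear, its minimum distance equals the minimum rank weight of a nonzero codeword, that is, $d=\min\{\rkw(c):c\in\code,\ c\neq 0\}$ where $\rkw(c)=\rank(c)$. In particular $d$ is at most the largest possible rank of an $n\times m$ matrix, so $d\leq n$; hence $n-d+1\geq 1$ and the quantity $m(n-d+1)$ is a nonnegative integer, and the statement is meaningful.

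The core of the argument is a projection map. Let $\rho:\F_q^{n\times m}\to\F_q^{(n-d+1)\times m}$ be the $\F_q$-linear map that keeps the first $n-d+1$ rows of a matrix and discards the remaining $d-1$ rows. I would restrict $\rho$ to $\code$ and show that this restriction is injective. If $c\in\code$ lies in the kernel, then the first $n-d+1$ rows of $c$ vanish, so $c$ is supported on at most $d-1$ rows; consequently every column of $c$ lies in a fixed $(d-1)$-dimensional coordinate subspace of $\F_q^n$, and therefore $\rank(c)\leq d-1<d$. By the first paragraph this forces $c=0$, proving that $\rho|_{\code}$ is injective.

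Injectivity of $\rho|_{\code}$ immediately yields $k=\dim(\code)\leq\dim\F_q^{(n-d+1)\times m}=m(n-d+1)$, which is the claimed bound. I do not expect a serious obstacle: the only points requiring care are the passage from the minimum distance to a statement about the rank of individual codewords (which uses linearity) and the elementary fact that a matrix with at most $d-1$ nonzero rows has rank at most $d-1$. As a concluding remark, the choice of which $d-1$ rows to delete is immaterial, and deleting $d-1$ columns instead would give the bound $k\leq n(m-d+1)$; since $n\leq m$ this latter bound is weaker, which is precisely why one deletes rows to obtain the stated inequality.
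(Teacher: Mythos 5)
Your proof is correct and complete. Note, however, that the paper does not prove \Cref{thm:Singleton_rank} at all: it is stated as a quoted preliminary, attributed to Delsarte's 1978 paper on bilinear forms, so there is no internal proof to compare against. What you have given is the standard elementary projection argument that modern treatments (e.g., the rank-metric survey the paper cites for background) use in place of Delsarte's original derivation, which was carried out inside the machinery of association schemes. Each step of yours checks out: by linearity, $d=\min\{\rank(c):c\in\code,\ c\neq 0\}$ and $d\leq n$ since $n\leq m$; the map $\rho$ deleting the last $d-1$ rows is injective on $\code$ because any codeword in its kernel has at most $d-1$ nonzero rows, hence rank at most $d-1<d$, hence is zero; injectivity gives $k\leq m(n-d+1)$. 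Your closing remark is also right: deleting columns instead yields $k\leq n(m-d+1)$, and since $n\leq m$ and $d\geq 1$ one has $n(m-d+1)\geq m(n-d+1)$, so deleting rows is the correct choice to obtain the stated (stronger) bound. The only stylistic gap is that a fully written-up version should make explicit the one-line verification that a matrix with at most $d-1$ nonzero rows has rank at most $d-1$ (its row space is spanned by at most $d-1$ vectors), which you flag but do not spell out; this is not a mathematical gap.
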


\begin{definition}
A code $\code\leq\F_q^{n\times m}$ whose parameters meet the previous bound is called \textit{maximum rank distance (MRD) code}.
For any $n,m,q$ the family of MRD codes in $\F_q^{n\times m}$ will be denoted by $\MRDFam$.
\end{definition}

Similarly to MDS codes, the family $\MRDFam$ is closed under duality by~\cite[Theorem 5.5]{delsarte1978bilinear}.
A notable difference between MDS and MRD codes is that, for any $n,m,q$, MRD codes of dimension $k$ exist in $\F_q^{n\times m}$ for every possible value of $k\leq nm$, as shown in~\cite{delsarte1978bilinear}. 

Another interesting family of bounds are the anticode bounds, which relate the dimension and maximum weight of a code.
The following are the statements of the Anticode Bound in the Hamming, rank, and sum-rank metric.

\begin{theorem}[{\cite[Proposition 6]{ravagnani2016generalized}}]\label{thm:anticode_Hamming}
    Let $\code\leq(\F_q^n,\Hd)$ be a code of $\dim(\code)=k$. Then ${k\leq\max\{\Hw(c)\mid c\in\code\}}$.
\end{theorem}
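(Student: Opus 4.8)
The plan is to prove the bound by exhibiting a coordinate projection on which $\code$ embeds injectively, with the projection chosen adaptively from a heaviest codeword. Concretely, set $w=\max\{\Hw(c)\mid c\in\code\}$ and pick a codeword $c^*\in\code$ realizing the maximum, so that $\Hw(c^*)=w$. Let $S=\supp(c^*)\subseteq[n]$, a set of cardinality $|S|=w$, and let $\pi_S\colon\F_q^n\to\F_q^S$ denote the coordinate projection that retains only the positions indexed by $S$. Since $\dim(\F_q^S)=|S|=w$, it suffices to show that the restriction $\pi_S|_\code$ is injective: then $k=\dim(\code)=\dim(\pi_S(\code))\leq\dim(\F_q^S)=w$, which is exactly the claim.

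The heart of the argument is therefore the injectivity claim, equivalently $\ker(\pi_S|_\code)=0$. The kernel consists of those $d\in\code$ with $d_i=0$ for all $i\in S$, i.e.\ with $\supp(d)\subseteq[n]\setminus S$. Suppose for contradiction that such a $d$ is nonzero. Then $\supp(d)$ and $\supp(c^*)=S$ are disjoint, so adding the two codewords produces no cancellation and the Hamming weights add:
\begin{equation*}
\Hw(c^*+d)=\Hw(c^*)+\Hw(d)=w+\Hw(d)>w.
\end{equation*}
Since $c^*+d\in\code$, this contradicts the maximality of $w$. Hence no such nonzero $d$ exists, the kernel is trivial, and $\pi_S|_\code$ is injective.

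The one step requiring care—and the main conceptual point—is the \emph{choice} of the projection. Projecting onto the full support $\supp(\code)$ would only yield the weaker inequality $k\leq|\supp(\code)|$, which can strictly exceed $w$; the gain comes from projecting onto the support of a single maximum-weight codeword and exploiting the additivity of $\Hw$ on disjoint supports to rule out codewords hiding outside $S$. For the degenerate case $\code=0$ one has $k=0$ and $\max\{\Hw(c)\mid c\in\code\}=\Hw(0)=0$, so the inequality holds trivially; and when $w=n$ the bound $k\leq n=w$ is immediate, so the interesting regime is $w<n$, where the disjoint-support argument does the work. An alternative route is induction on the length $n$: puncturing the last coordinate, one distinguishes whether $e_n\in\code$ up to scalars, and in that case a heaviest punctured codeword can be augmented by $e_n$ to show the punctured maximum weight drops by at least one, matching the drop in dimension; I would nonetheless favor the direct projection argument above as the cleaner presentation.
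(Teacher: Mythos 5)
Your proof is correct, and it is essentially the standard argument for this bound: the paper itself does not prove the statement but cites it from \cite{ravagnani2016generalized}, where the proof is exactly your projection onto the support of a maximum-weight codeword, with injectivity forced by the additivity of the Hamming weight on disjoint supports. Nothing to fix.
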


\begin{theorem}[{\cite[Proposition 47]{ravagnani2016rank}}]\label{thm:anticode_rank}
    Let $1\leq n\leq m$ and let $\code\leq(\F_q^{n\times m},\rkd)$ be a code of $\dim(\code)=k$. Then ${k\leq m\max\{\rkw(c)\mid c\in\code\}}$.
\end{theorem}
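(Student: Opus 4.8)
The plan is to recast the statement as a bound on the dimension of a \emph{bounded-rank} space of matrices and to prove it by a structural induction driven by a codeword of maximal rank. Writing $r=\max\{\rkw(c)\mid c\in\code\}$ and recalling that $\rkw(c)=\rank(c)$, the desired inequality $k\leq mr$ is equivalent to the assertion that any linear subspace $\code\leq\F_q^{n\times m}$ in which every matrix has rank at most $r$ satisfies $\dim(\code)\leq mr$. This is exactly the finite-field instance of Flanders' theorem on spaces of bounded rank, and I would aim at a proof following its strategy. The first step is to exploit the rank-metric isometries: left and right multiplication $X\mapsto PXQ$ with $P\in\GL_n(\F_q)$ and $Q\in\GL_m(\F_q)$ preserves both $\dim(\code)$ and the rank of every codeword, so it suffices to treat one convenient representative. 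Choosing $M\in\code$ with $\rank(M)=r$ maximal and normalizing it, I may assume $M=\left(\begin{smallmatrix} I_r & 0 \\ 0 & 0\end{smallmatrix}\right)$, and then write every $X\in\code$ in the induced block form $X=\left(\begin{smallmatrix} A & B \\ C & D\end{smallmatrix}\right)$, with $A$ of size $r\times r$ and $D$ of size $(n-r)\times(m-r)$.

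The heart of the argument is to show that the maximality of $\rank(M)=r$ forces vanishing of the bottom-right block, $D=0$ for every $X\in\code$, together with a compatibility relation tying $B$ and $C$. Over a field large enough to choose $\lambda$ with $I_r+\lambda A$ invertible, this follows at once from a Schur-complement computation: the rank of $M+\lambda X$ equals $r+\rank\!\big(\lambda D-\lambda^2 C(I_r+\lambda A)^{-1}B\big)$, and since this rank is at most $r$ one reads off $D=0$ by varying $\lambda$. With the bottom-right block eliminated, intersecting $\code$ with the matrices supported on the last $n-r$ rows yields a bounded-rank space with a smaller rank parameter, to which the inductive hypothesis applies; a block-by-block dimension count then returns the bound $mr$. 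The base cases are immediate: $r=0$ forces $\code=0$, and $n\leq r$ gives $\dim(\code)\leq nm\leq mr$ for free.

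The main obstacle is precisely that over a small finite field one cannot freely invert $I_r+\lambda A$, so the Schur-complement shortcut is unavailable; this is the genuine difficulty in Flanders' theorem, and it is also why the naive row-deletion induction fails, since the rank-one ``last row'' layer can be $m$-dimensional, as the space of all matrices supported on $r$ fixed rows already shows. Two remedies are available. For $q$ large relative to $r$, I would pass to the rational function field $\F_q(t_1,\dots,t_k)$: each $(r+1)\times(r+1)$ minor of the generic codeword is a polynomial of degree $r+1$ vanishing on all of $\F_q^k$, hence identically zero by the Combinatorial Nullstellensatz, so the generic codeword has rank at most $r$ and block inversion becomes legitimate over the infinite field, giving $k=\dim_{\F_q(t)}(\code\otimes\F_q(t))\leq mr$. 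For arbitrary $q$ one must instead invoke Flanders' field-independent analysis of the block structure, which I would cite as the technical input rather than reprove. As a consistency check, the Hamming analogue in \Cref{thm:anticode_Hamming} admits a much simpler pivot-column proof, and the extra difficulty here reflects that rank is an intrinsically two-dimensional invariant of a matrix rather than a coordinatewise support count.
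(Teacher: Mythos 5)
Your reduction of \Cref{thm:anticode_rank} to the statement that a subspace of $\F_q^{n\times m}$ in which every matrix has rank at most $r$ has dimension at most $mr$ is correct, and it identifies the right underlying result: this is the Flanders--Meshulam theorem on spaces of matrices of bounded rank. Note that the paper itself offers no proof here: the statement is quoted from the literature, and the cited proposition ultimately rests on exactly this theorem. Your large-field argument is essentially sound: the normalization via rank-metric isometries $X\mapsto PXQ$, the Schur-complement identity $\rank(M+\lambda X)=r+\rank\bigl(\lambda D-\lambda^2C(I_r+\lambda A)^{-1}B\bigr)$, and the scalar-extension trick (the $(r+1)$-minors of a generic codeword $\sum_i t_iX_i$ have total degree $r+1$ and vanish on all of $\F_q^k$, hence are identically zero when $q\geq r+2$) all work, and together they prove the bound whenever $q>r+1$.

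However, there are two genuine gaps. First, the theorem is claimed for every prime power $q$ -- including $q=2$, a case this paper explicitly cares about elsewhere -- and for small $q$ you defer to ``Flanders' field-independent analysis of the block structure.'' No such field-independent analysis exists: Flanders' proof requires $|F|\geq r+1$ precisely because of the inversions you describe, and the removal of this hypothesis is Meshulam's theorem, proved by a genuinely different combinatorial argument (via K\"onig's theorem on bipartite graphs), not by block manipulation. So the hard case of the statement is delegated to a result that does not cover it; citing Meshulam would repair this, but then the entire proof collapses to a citation, which is what the paper already does. Second, your inductive step does not close even in the large-field case: after proving $D=0$, the naive block count gives $r^2+r(m-r)+(n-r)r=rm+(n-r)r>rm$, so something must eliminate the $(n-r)r$ contribution of the $C$-blocks; and the subspace of codewords supported on the last $n-r$ rows does \emph{not} have a smaller rank parameter, since such a codeword is determined by an $(n-r)\times r$ block $C$ whose rank can be as large as $\min(n-r,r)$, which equals $r$ whenever $n\geq 2r$. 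Flanders' actual count instead exploits the full family of relations $CA^iB=0$ for all $i\geq 0$ (which your Schur-complement expansion does yield) through an annihilator-type pairing between the row spaces of the $C$-blocks and the column spaces of the $B$-blocks; this is the heart of the proof and is missing from your sketch.
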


\begin{theorem}[\text{{\cite[Theorem 3.1]{camps2022optimal}}}]\label{thm:anticode_sum_rank}
    Let $\code\leq(\F_q^{n_1\times m_1}\times\ldots\times\F_q^{n_t\times m_t}),\srkd)$ be a code of $\dim(\code)=k$. Then ${k\leq\max\{\sum_{i=1}^tm_i\rank(c_i)\mid c=(c_1,\ldots,c_t)\in\code\}}$. In particular, if $m_1=\ldots=m_t=m$, then $k\leq m\max\{\srkw(c)\mid c\in\code\}$.
\end{theorem}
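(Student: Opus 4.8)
The plan is to prove a statement slightly stronger than the theorem, better suited to induction: for every subspace $D\le\F_q^{n_1\times m_1}\times\cdots\times\F_q^{n_t\times m_t}$ and every vector $v$ in the ambient space, the coset $v+D$ contains an element $w=(w_1,\ldots,w_t)$ with $\sum_{i=1}^t m_i\rank(w_i)\ge\dim D$. The theorem is the case $v=0$, and the ``in particular'' clause is immediate, since $\sum_i m_i\rank(w_i)=m\,\srkw(w)$ when all $m_i=m$. I would formulate the affine (coset) version because, as explained below, the natural induction on the number of blocks $t$ produces an affine subproblem even when one starts from the linear code $\code$ itself.

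The induction is on $t$. The base case $t=1$ is the affine version of the rank anticode bound (\Cref{thm:anticode_rank}): a coset of a $\delta$-dimensional code in $\F_q^{n\times m}$ contains a matrix $w$ with $m\rank(w)\ge\delta$. For a $1\times1$ block this is the affine Hamming bound, which is easy: projecting the coset onto $\delta$ information positions of the code is an affine bijection onto $\F_q^\delta$, so some coset element is nonzero in all of them and has Hamming weight at least $\delta$. For the inductive step I would peel off the last block using the projection $\pi_t\colon D\to\F_q^{n_t\times m_t}$. Writing $\bar D=\pi_t(D)$ and $D''=\{d\in D:d_t=0\}$, rank--nullity gives $\dim D=\dim\bar D+\dim D''$. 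Applying the base case to the coset $\pi_t(v)+\bar D$ yields $\bar w$ with $m_t\rank(\bar w)\ge\dim\bar D$; lift it to some $w^\flat\in v+D$ with $w^\flat_t=\bar w$. I would then vary $w^\flat$ by elements of $D''$: since these leave the last block unchanged, the first $t-1$ blocks range over the coset $w^\flat_{<t}+D''_{<t}$ of a code isomorphic to $D''$, and the inductive hypothesis supplies an element there of weight at least $\dim D''$. The resulting $w\in v+D$ has total weight at least $\dim\bar D+\dim D''=\dim D$.

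Two points explain why the affine strengthening is forced. First, the bound concerns a single codeword, and one cannot merely maximize block by block: the naive estimate from the inclusion $\code\hookrightarrow\prod_i\pi_i(\code)$ together with \Cref{thm:anticode_rank} only gives $k\le\sum_i m_i\max_c\rank(c_i)$, which is weaker than $\max_c\sum_i m_i\rank(c_i)$. Second, to realize a large last-block rank and a large weight on the remaining blocks in the \emph{same} codeword, one is forced to consider $w^\flat_{<t}+D''_{<t}$, a genuine coset; and one cannot replace its maximum weight by that of the linear code $D''_{<t}$, because a translate can strictly lower the maximum weight. For instance, over $\F_2$ the code $D=\langle I_2\rangle\le\F_2^{2\times2}$ has maximum rank weight $2$, while the coset $\mathrm{diag}(1,0)+D=\{\mathrm{diag}(1,0),\mathrm{diag}(0,1)\}$ has maximum rank weight $1$.

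The main obstacle is therefore the base case $t=1$: the affine rank anticode bound. This does not follow formally from \Cref{thm:anticode_rank}, because the maximum-rank element of the enlarged linear code $D+\langle v\rangle$ may lie in $D$ itself rather than in a nonzero coset, so rescaling it into $v+D$ is not available. When $q$ is large one can argue via the matrix pencil $\{d_0+\lambda v\}$: its generic rank is at least $\rank(d_0)$, so only $O(\dim D)$ values of $\lambda$ can lower it, and for $q$ large enough some nonzero $\lambda\in\F_q$ retains the rank. For arbitrary $q$ this is an affine analogue of Flanders' theorem on spaces of bounded-rank matrices, and I expect it to require a dedicated argument of that type; once it is in place, the induction above closes with no further difficulty.
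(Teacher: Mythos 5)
Your reduction is not matched against anything in the paper itself: the paper states \Cref{thm:anticode_sum_rank} as a quoted result from \cite[Theorem 3.1]{camps2022optimal} and gives no proof, so your attempt has to stand on its own. Judged that way, it is incomplete, and the gap sits exactly where the mathematical content of the theorem lies. Your induction on $t$ is sound, your affine Hamming base case (information-set argument) is correct, and your $\mathrm{diag}(1,0)+\langle I_2\rangle$ example correctly shows that translation can lower the maximum rank weight, so the affine strengthening really is forced by this induction scheme. But all of that only \emph{reduces} the sum-rank statement to the base case $t=1$, the affine rank anticode bound: every coset $v+D$ in $\F_q^{n\times m}$, $n\le m$, contains a matrix of rank at least $\dim(D)/m$. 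You leave that unproven except for large $q$, and it does not follow from \Cref{thm:anticode_rank} by any routine device. The standard proof of the linear bound (intersect with, or run the code--anticode injectivity argument against, an MRD code of minimum distance $r+1$) uses that the \emph{differences} of elements of the anticode have rank at most $r$; for a coset $v+D$ the differences form $D$ itself, whose ranks can reach $2r$, so that argument only yields $\dim D\le 2rm$. Your pencil argument needs roughly $q\ge n+2$ so that a nonzero $\lambda$ survives the at most $\rank(d_0)$ bad values. And obvious repairs for small $q$ (e.g.\ pigeonholing on a common right-kernel vector and deleting a column) lose an additive term precisely in the square case $n=m$, so the exact constant is genuinely delicate: what is needed is an affine analogue of the Flanders--Meshulam theorem over every finite field, a result of comparable depth to the linear case, which you neither prove nor cite.

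So what you actually have is: a correct proof of the theorem when $q$ is large relative to $\max_i n_i$, together with a clean reduction showing the general case would follow from the affine rank anticode bound over arbitrary $\F_q$. That is not enough here, because the theorem is used by the paper in exactly the small-field regime your argument misses: the notion of optimal anticode it underpins, the sum-rank examples of \Cref{sec:sumrank}, and the discussion of binary OAs all live over $\F_2$. Until the base case is proved (or located in the literature) for every $q$, the proposal is a conditional argument, not a proof of the stated theorem.
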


Optimal anticodes, i.e., codes that meet the Anticode Bound, play an important role in the context of generalized weights~\cite{ravagnani2016generalized,camps2022optimal}.

\begin{definition}\label{def:optimal_anticode}
    A code $\code\leq\F_q^n$ that attains the Anticode Bound of~\Cref{thm:anticode_Hamming} with equality is called an \textit{optimal anticode (OA)} in the Hamming metric.
     Analogous notions are defined in the rank and sum-rank metric, using the respective anticode bounds.
\end{definition}

Given a vector space $V$ with coordinates, the standard basis of $V$ is the set of vectors whose components are equal to 0, except for one equal to 1.

\begin{definition}\label{def:SOA}
Let $(V,\dd)$ be a metric vector space where we have an anticode bound, then a \textit{standard optimal anticode} (SOA) is an OA that is isometric to a code generated by standard basis vectors.
\end{definition}

\begin{example}\label{ex:standard_HSA}
For $q\neq 2$, every OA in the Hamming space $(\F_q^n,\dd_H)$ is a SOA. A proof of this fact may be found in~\cite[Proposition 9]{ravagnani2016generalized}, where SOAs are called \textit{free codes}.
The smallest example of a binary OA that is not an SOA is $\Acode=\langle (1,0,1), (0,1,1)\rangle$, the single parity check code of length~3.
SOAs in the Hamming space can also be described as support spaces:
For $i\in[n]$, let $e_i$ be the $i^{th}$ standard basis vector and let $I\subseteq[n]$. Then
\begin{equation}\label{eq:free_anticode}
\Acode_I=\{x\in\F_q^n:\supp(x)\subseteq I\}=\langle e_i:i\in I\rangle
\end{equation}
is an SOA, and all SOAs are of this form.
In~\Cref{subsec:binary_OA} we show that for $q=2$ any optimal anticode is either a standard optimal anticode, or a single parity check code of odd length, or a direct sum of one or more codes of each of these two kinds.
\end{example}

\begin{example}\label{ex:standard_OA_rank}
Let $0\leq t\leq n\leq m$ and consider
\begin{equation}\label{eq:standard_OA_rank}
    \Acode_t=\{X\in\F_q^{n\times m}:\textnormal{the last $n-t$ rows of $X$ are zero}\}=\langle E_{ij}\mid 1\leq i\leq t, 1\leq j\leq m\rangle,
\end{equation}
where $E_{ij}$ denotes the matrix whose entries are all 0, except for a 1 in position $(i,j)$. It can be shown that every OA of maximum rank $t$ in $\F_q^{n\times m}$ is isometric to $\Acode_t$. In particular, all OAs in the rank metric are SOAs, see e.g.~\cite[Theorem 11.3.15]{gorla2021rank}.
\end{example}

\begin{example}\label{ex:standard_OA_sumrank}
Let $V=\F_q^{n_1\times m_1}\times\ldots\times\F_q^{n_u\times m_u}\times\F_q^{t-u}$, where $1\leq n_i\leq m_i$ for all $i=1,\ldots,u$, and $m_1\geq m_2\geq\ldots\geq m_u\geq 2$. It follows from~\cite[Theorem III.11]{camps2022optimal} that every OA $\Acode$ in $V$ has the form $\Acode=\Acode_1\times\ldots\Acode_u\times\Acode^\prime$, where $\Acode_i$ is an OA in $\F_q^{n_i\times m_i}$ for $1\leq i\leq u$ and $\Acode^\prime$ is an OA in $\F_q^{t-u}$. By the definition of SOA, $\Acode$ is an SOA if and only if each of its direct summands is. It follows from the discussion in Example~\ref{ex:standard_HSA} and Example~\ref{ex:standard_OA_rank} that, for every $q\neq 2$, every OA in the sum-rank metric is an SOA. For $q=2$ and any $t,u$ such that $t-u\geq 3$, one can easily produce an example of an OA which is not an SOA by taking the product of SOAs in the rank-metric with an OA in $\F_2^{t-u}$ which is not an SOA.
\end{example}

\section{Generalized weights and test families}\label{sec:general_theory}

In this section, we define generalized weights in an arbitrary vector space $V$ using intersections with a chosen family of codes.
In general, a family of codes is just a set of subcodes of $V$.
A test family is a family of codes that satisfies some extra conditions, which we show to be sufficient to grant a duality statement for the weights.
The following definition lists these conditions. As will be clear by looking at the examples, they are satisfied by many families that have been considered in the literature.
A simple yet effective approach to the construction of test families using chains of codes is outlined.
As a running example, we instantiate our constructions in the Hamming metric space to recover generalized Hamming weights using SOAs as a test family.
In~\cite{ravagnani2016generalized} it  was proven that SOAs coincide with OAs in the Hamming metric for $q\neq2$, while for $q=2$ the class of OAs is strictly larger. No classification of binary OAs was provided. We classify binary OAs at the end of this section.

\begin{definition}\label{def:test_family} 
Let $\ell\geq 1$ be an integer that divides $\nu=\dim(V)$.
A nonempty family of codes $T$ in $V$ is a \textit{test family} with \textit{step} $\ell$ if it satisfies the following conditions:
\begin{enumerate}
\item $\ell|\dim(\Tcode)$ for every $\Tcode\in T$,
\item if $\Tcode\in T$, then $\Tcode^\perp\in T$,
\item every $\Tcode\in T$ of dimension $a\ell>0$ contains a code $\Tcode'\in T$ such that $\dim(\Tcode')=(a-1)\ell$,
\item every $\Tcode\in T$ of dimension $a\ell<\nu$ is contained in a code $\Tcode'\in T$ such that $\dim(\Tcode')=(a+1)\ell$.
\end{enumerate}
\end{definition}

The elements of $T$ are called \textit{test codes}. 
It can be checked that the zero code and the whole space $V$ are test codes for every $T$.
Examples of test families are easy to find for any vector space. We start with a trivial example.

\begin{example}
For any vector space $V$ and $\ell\mid\nu=\dim(V)$, the set of all subspaces of $V$ whose dimension is divisible by $\ell$ is a test family.
\end{example}

\begin{example}\label{ex:anticodes_H1}
Let $V=\F_q^n$ be the Hamming-metric space.
The family of OAs and that of SOAs in $V$ are test families with step $\ell=1$.
\end{example}

Given a code $\code$ and a test family $T$, we define the generalized weights of $\code$ with respect to the family $T$, following the rationale of~\cite{ravagnani2016generalized}.

\begin{definition}\label{def:T-weights}
Let $\code\leq V$ be a $k$-dimensional code, and let $T$ be a test family with step $\ell$.
For $r=1,\ldots,k$, the $r^{th}$ \textit{generalized $T$-weight} of $\code$ is the quantity
\begin{equation*}
\tau_{T,r}(\code)=\min\{\dim(\Tcode)\mid\Tcode\in T,\dim(\Tcode\cap\code)\geq r\}.
\end{equation*}
We omit $T$ from the notation whenever it is clear from context and call these parameters simply generalized weights.
The sequence of all weights $$\tau(\code)=(\tau_1(\code),\ldots,\tau_k(\code))=(\tau_r(\code)\mid r=1,\ldots,k)$$ is called \textit{$T$-weight hierarchy}. For a fixed $0\leq h<\ell$, the subsequence of weights whose index $r$ is congruent to $h$ modulo $\ell$  is denoted by $\tau^h(\code)$, i.e.,
   $$\tau^h(\code)=(\tau_r(\code)\mid r=1,\ldots,k\textnormal{ and }r=h\mod\ell).$$
\end{definition}

Notice that if $\ell$ is the step of the test family, then $[\nu]_\ell$ is the set of possible values for the weights.

\begin{remark}
\Cref{def:T-weights} could be extended to include families $T$ of codes that do not meet the requirements of test families.
We explore this idea in~\Cref{sec:MDS}, where we study the weights obtained using the families of MDS and MRD codes to define weights in the Hamming and rank metrics.
\end{remark}

\begin{definition}
Let $\code\leq V$ be a code.
We say that $\Tcode\in T$ \textit{realizes} the weight $\tau_r(\code)$ if $\dim(\Tcode\cap\code)\geq r$ and $\dim(\Tcode)=\tau_r(\code)$. 
\end{definition}

The definition of test family and that of generalized weights do not depend on the choice of a distance function. In that sense, our approach is distance-free. Moreover, the $T$-weights are not necessarily code invariants. However, this is the case if the test family is closed under isometries.

\begin{definition}
Let $(V,\dd)$ be a metric vector space.
A test family $T$ is called \textit{metric} if it is closed under isometry. In other words, $T$ is metric if for all test codes $\Tcode\in T$ and all $\Tcode'\simeq\Tcode$ we have $\Tcode'\in T$.
\end{definition}

OAs and SOAs are examples of metric test families in the Hamming, rank, and sum-rank metrics. The classical definition of generalized weights for these codes often coincides with that of generalized $T$-weights for $T$ the family of SOAs. Since the dimension of an SOA coincides with the weight of its support (possibly up to a constant), in this case generalized $T$-weights are related to the distance. In addition, they turn out to be code invariants.
In fact, metric test families always give rise to invariants.

\begin{lemma}\label{lem:metric_T}
The generalized $T$-weights are invariants of the subcodes of $V$ if and only if the test family $T$ is metric.
\end{lemma}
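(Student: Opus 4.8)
The plan is to prove both implications simultaneously by isolating the \emph{top} generalized weight $\tau_{T,k}(\code)$, with $k=\dim(\code)$, as a diagnostic for membership in $T$. The key preliminary observation is that $\dim(\Tcode\cap\code)\geq k$ forces $\code\leq\Tcode$, so
$$\tau_{T,k}(\code)=\min\{\dim(\Tcode)\mid\Tcode\in T,\ \code\leq\Tcode\}\geq k.$$
Since $V\in T$, this minimum is attained, and moreover $\tau_{T,k}(\code)=k$ if and only if $\code\in T$: the only $k$-dimensional space containing the $k$-dimensional code $\code$ is $\code$ itself, so a test code of dimension $k$ containing $\code$ exists exactly when $\code$ is already a test code. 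I would state this as a short remark before treating the equivalence.

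For the forward direction ($T$ metric $\Rightarrow$ weights are invariants), suppose $\code\simeq\code'$ via a distance-preserving linear map $\phi:V\to V$ with $\phi(\code)=\code'$. Distance preservation implies injectivity, hence $\phi$ is a linear bijection of the finite-dimensional space $V$; consequently $\dim(\phi(\Tcode))=\dim(\Tcode)$ and $\phi(\Tcode\cap\code)=\phi(\Tcode)\cap\code'$ for every subspace $\Tcode$. Because $T$ is metric, $\Tcode\in T$ implies $\phi(\Tcode)\in T$. Thus each $\Tcode\in T$ with $\dim(\Tcode\cap\code)\geq r$ is carried to a test code $\phi(\Tcode)$ of the same dimension with $\dim(\phi(\Tcode)\cap\code')\geq r$, giving $\tau_{T,r}(\code')\leq\tau_{T,r}(\code)$ for all $r$. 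Applying the same argument to $\phi^{-1}$, which is again a distance-preserving linear bijection, yields the reverse inequality, so $\tau_{T,r}(\code)=\tau_{T,r}(\code')$.

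For the converse I would argue directly from the diagnostic. Assume all generalized $T$-weights are invariants, and let $\Tcode\in T$ and $\Tcode'\simeq\Tcode$ be arbitrary; it suffices to show $\Tcode'\in T$. Writing $d=\dim(\Tcode)=\dim(\Tcode')$, the preliminary observation gives $\tau_{T,d}(\Tcode)=d$ since $\Tcode\in T$. As $\tau_{T,d}$ is invariant and $\Tcode\simeq\Tcode'$, and as $d$ is a valid index for both codes (they share dimension $d$), we obtain $\tau_{T,d}(\Tcode')=\tau_{T,d}(\Tcode)=d$, which by the same characterization forces $\Tcode'\in T$. Since $\Tcode,\Tcode'$ were arbitrary, $T$ is metric.

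The argument is largely bookkeeping once the diagnostic is set up; the one point requiring care is that isometries in the sense of \Cref{def:isometry} are genuine linear bijections, so that dimensions and intersections transport correctly, which follows from distance preservation implying injectivity on $V$. I expect the main conceptual step — and the only place where the test-family axioms are genuinely used — to be the observation that $V\in T$ makes $\tau_{T,k}$ well defined and that its value equals $k$ precisely on the members of $T$.
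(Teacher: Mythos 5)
Your proof is correct and takes essentially the same approach as the paper. The forward direction transports a realizing test code through the isometry exactly as the paper does, and your converse is just the contrapositive form of the paper's argument, with the top-weight diagnostic ($\tau_{T,k}(\code)=k$ if and only if $\code\in T$, for $k=\dim(\code)$) spelled out explicitly where the paper merely asserts that the $k^{th}$ $T$-weights of the two isometric codes cannot coincide.
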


\begin{proof}
Let $T$ be metric and let $\code,\code'\leq V$ be codes such that $\code'\simeq\code$ via an isometry $\phi$. Let $1\leq r\leq k=\dim(\code)$ and let $\Tcode\in T$ be a code realizing $\tau_r(\code)$.
Let $\Tcode'=\phi(\Tcode)$, then since $T$ is metric we have $\Tcode'\in T$. Moreover $\dim(\Tcode'\cap\code')=\dim(\phi(\Tcode\cap\code))\geq r$, implying $\tau_r(\code')\leq \tau_r(\code)$.
The same reasoning can be applied in the opposite direction to yield $\tau_r(\code)\leq \tau_r(\code')$, i.e., the generalized $T$-weights of $\code$ and $\code'$ coincide.
For the converse implication, assume that $T$ is not metric. Then there are two isometric codes of some dimension $k$, of which only one is in $T$.
By definition, the $k^{th}$ $T$-weights of these two codes cannot coincide.
\end{proof}

Test families can be merged to form larger families, in a sense made precise by the following lemma. The proof is straightforward.

\begin{lemma}\label{lem:union_families}
The union of test families is a test family. The union of metric test families is a metric test family.
\end{lemma}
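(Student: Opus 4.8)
The plan is to verify each of the four defining conditions of \Cref{def:test_family} directly for the union, working one family at a time and reducing everything to the corresponding property of a single test family. Let $T_1,\ldots,T_s$ be test families with a common step $\ell$ (this common step is the natural reading, since the weights take values in $[\nu]_\ell$; if the steps differed, the dimension-divisibility condition would force us to pass to a common refinement, and I would note that a clean statement assumes equal steps). Set $T=\bigcup_{j} T_j$. First I would observe that $T$ is nonempty, since each $T_j$ is. For condition~(1), any $\Tcode\in T$ lies in some $T_j$, and since $T_j$ satisfies~(1) we have $\ell\mid\dim(\Tcode)$; this is immediate. For condition~(2), if $\Tcode\in T_j$ then $\Tcode^\perp\in T_j\subseteq T$ by the closure of $T_j$ under duality.

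The substantive point is that conditions~(3) and~(4) are \emph{existential}: they assert the existence of a neighbour of the right dimension \emph{inside the same family}. So for $\Tcode\in T$ of dimension $a\ell>0$, I would locate the index $j$ with $\Tcode\in T_j$ and apply condition~(3) for $T_j$ to obtain $\Tcode'\in T_j$ with $\Tcode'\leq\Tcode$ and $\dim(\Tcode')=(a-1)\ell$; then $\Tcode'\in T$ as well. Condition~(4) is handled symmetrically using the upward step property of $T_j$. The key observation making this work is that we never need to combine the families: each membership witness already carries its own chain neighbours, so the union inherits the chain conditions trivially. This is exactly why the lemma's proof is ``straightforward'' — the union operation does not create new test codes whose neighbours must be found, it only aggregates existing ones.

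For the metric refinement, I would use \Cref{lem:metric_T} together with the fact that a union is closed under isometry whenever each summand is: if $\Tcode\in T$ and $\Tcode'\simeq\Tcode$, then $\Tcode\in T_j$ for some $j$, and since $T_j$ is metric, $\Tcode'\in T_j\subseteq T$. Thus $T$ is closed under isometry, i.e.\ metric. Alternatively, one may invoke \Cref{lem:metric_T} to say the $T$-weights are invariants precisely when $T$ is metric, but the direct closure argument is cleaner and self-contained.

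I do not anticipate a genuine obstacle here; the only subtlety worth flagging explicitly is the step. If one allows the families $T_j$ to have distinct steps $\ell_j$, then condition~(1) for the union fails to pin down a single $\ell$, and conditions~(3),(4) would need to be phrased relative to each witness's own step. The honest and clean statement is therefore for a common step $\ell$, and I would state this hypothesis at the outset rather than leave it implicit. With that in place, all four conditions follow termwise from the corresponding condition of whichever family witnesses membership, and the metric case adds only the elementary remark that a union of isometry-closed sets is isometry-closed.
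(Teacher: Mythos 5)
Your proof is correct and is exactly the argument the paper has in mind: the paper omits the proof as ``straightforward,'' and the intended verification is precisely your termwise check that conditions (1)--(4) of \Cref{def:test_family} and isometry-closure transfer from whichever family witnesses membership, with the existential nature of (3) and (4) making the union harmless. Your explicit flagging of the common-step hypothesis is a sound reading of the statement (and indeed necessary, since unions of families with different steps can fail all four conditions for any single $\ell$).
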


The proof of the next lemma is also straightforward, under the assumption that the bilinear form on $V_1\times V_2$ is induced by those on $V_1$ and $V_2$. The lemma will be useful in the study of generalized weights in the sum-rank metric.

\begin{lemma}\label{lem:product_families}
Let $V_1$ and $V_2$ be vector spaces, and let $T_1$ and $T_2$ be test families with step $\ell$ in $V_1$ and $V_2$, respectively. Then $$T_1\times T_2=\{\mathcal{T}_1\times\mathcal{T}_2\mid \mathcal{T}_1\in T_1, \mathcal{T}_2\in T_2\}$$ is a test family with step $\ell$ in $V_1\times V_2$.
\end{lemma}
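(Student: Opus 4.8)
The plan is to verify each of the four defining conditions of a test family from \Cref{def:test_family} for the product family $T_1 \times T_2$, using the corresponding conditions for $T_1$ and $T_2$, together with two standard facts about products of subspaces: the dimension formula $\dim(\mathcal{T}_1 \times \mathcal{T}_2) = \dim(\mathcal{T}_1) + \dim(\mathcal{T}_2)$, and the behaviour of duals under products. First, I would note that $T_1 \times T_2$ is nonempty because both factors are, and that $\dim(V_1 \times V_2) = \dim(V_1) + \dim(V_2)$ is still divisible by $\ell$ since each summand is.

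For condition (1), if $\mathcal{T} = \mathcal{T}_1 \times \mathcal{T}_2 \in T_1 \times T_2$, then $\dim(\mathcal{T}) = \dim(\mathcal{T}_1) + \dim(\mathcal{T}_2)$, and since $\ell$ divides each summand by condition (1) for $T_1$ and $T_2$, it divides the sum. For condition (2), the key ingredient is that, when the bilinear form on $V_1 \times V_2$ is the one induced by the forms on $V_1$ and $V_2$, one has $(\mathcal{T}_1 \times \mathcal{T}_2)^\perp = \mathcal{T}_1^\perp \times \mathcal{T}_2^\perp$. This is because a vector $(y_1, y_2)$ is orthogonal to all $(x_1, x_2)$ with $x_i \in \mathcal{T}_i$ exactly when $\langle y_1, x_1\rangle + \langle y_2, x_2\rangle = 0$ for all such choices; taking $x_2 = 0$ and letting $x_1$ range over $\mathcal{T}_1$ forces $y_1 \in \mathcal{T}_1^\perp$, and symmetrically $y_2 \in \mathcal{T}_2^\perp$. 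Since $\mathcal{T}_i^\perp \in T_i$ by condition (2) for each factor, the dual is again in $T_1 \times T_2$.

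The containment conditions (3) and (4) are the only place requiring a small argument rather than a direct translation. For condition (3), suppose $\mathcal{T}_1 \times \mathcal{T}_2$ has dimension $a\ell > 0$; then at least one of the two factors has positive dimension. Say $\dim(\mathcal{T}_1) > 0$ (the other case is symmetric); by condition (3) for $T_1$ there is $\mathcal{T}_1' \in T_1$ with $\mathcal{T}_1' \leq \mathcal{T}_1$ and $\dim(\mathcal{T}_1') = \dim(\mathcal{T}_1) - \ell$. Then $\mathcal{T}_1' \times \mathcal{T}_2 \in T_1 \times T_2$ is contained in $\mathcal{T}_1 \times \mathcal{T}_2$ and has dimension $(a-1)\ell$. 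Condition (4) is dual: if the dimension is $a\ell < \nu = \dim(V_1) + \dim(V_2)$, then at least one factor has dimension strictly less than the full dimension of its ambient space, and I apply condition (4) to that factor and leave the other unchanged. The main point to get right is that shrinking or enlarging a single factor by $\ell$ gives a valid product code in $T_1 \times T_2$ with dimension changed by exactly $\ell$; this uses that both factors share the same step $\ell$, which is precisely the hypothesis of the lemma.

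The expected obstacle is minor and lies entirely in condition (2): one must confirm that the induced bilinear form makes duals split as products, which is exactly the assumption flagged in the text. Everything else is a routine combination of the dimension additivity and the single-factor conditions, so the proof is genuinely straightforward as the authors assert.
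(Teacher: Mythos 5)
Your proof is correct and is exactly the routine verification the paper has in mind: the paper omits the proof entirely, remarking only that it is straightforward under the assumption that the bilinear form on $V_1\times V_2$ is induced by those on $V_1$ and $V_2$ --- precisely the point your treatment of condition (2), via $(\mathcal{T}_1\times\mathcal{T}_2)^\perp=\mathcal{T}_1^\perp\times\mathcal{T}_2^\perp$, makes explicit. Your handling of conditions (3) and (4), shrinking or enlarging a single factor by $\ell$ while leaving the other fixed, is the natural argument and is sound.
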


Finally, some subfamilies of test families are also test families.

\begin{lemma}\label{lem:multiple_families}
Let $(V,d)$ be a metric vector space and let $T$ be a test family with step $\ell$ in~$V$. Let $m$ be such that $\ell\mid m$ and $m\mid\nu=\dim(V)$. Then the family $\{\mathcal{T}\in T : m\mid\dim(\mathcal{T})\}$ is a test family with step $m$ in $V$. Moreover, it is metric if $T$ is metric.
\end{lemma}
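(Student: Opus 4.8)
The plan is to verify the four axioms of \Cref{def:test_family} directly for the family $T' = \{\mathcal{T}\in T : m\mid\dim(\mathcal{T})\}$, treating $m$ as the new step, and then to observe that the metric condition is inherited. First I would note that $T'$ is nonempty: since $m\mid\nu$, the whole space $V$ has dimension divisible by $m$ and lies in $T$ (as remarked after \Cref{def:test_family}), hence $V\in T'$. For axiom (1), every $\mathcal{T}\in T'$ has $m\mid\dim(\mathcal{T})$ by construction, which is exactly what is required with step $m$. For axiom (2), if $\mathcal{T}\in T'$ then $\mathcal{T}\in T$ gives $\mathcal{T}^\perp\in T$, and since $\dim(\mathcal{T}^\perp)=\nu-\dim(\mathcal{T})$ with $m$ dividing both $\nu$ and $\dim(\mathcal{T})$, we get $m\mid\dim(\mathcal{T}^\perp)$, so $\mathcal{T}^\perp\in T'$.

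The substantive steps are axioms (3) and (4), the chain conditions, and this is where I expect the only real work to lie. Take $\mathcal{T}\in T'$ with $\dim(\mathcal{T})=am>0$. In the original family $T$ with step $\ell$, we have $am = (am/\ell)\cdot\ell$, so $\mathcal{T}$ has dimension $(am/\ell)\ell$ as an element of $T$. The idea is to descend one $\ell$-step at a time using axiom (3) of $T$: starting from $\mathcal{T}$, repeatedly apply the containment property to produce a descending chain of test codes in $T$ whose dimensions drop by $\ell$ at each stage, continuing for $m/\ell$ steps until we reach a code $\mathcal{T}'\in T$ with $\dim(\mathcal{T}')=am-m=(a-1)m$. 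Since $m\mid (a-1)m$, this $\mathcal{T}'$ lies in $T'$ and is contained in $\mathcal{T}$, establishing axiom (3) for step $m$. Axiom (4) is entirely symmetric: for $\mathcal{T}\in T'$ with $\dim(\mathcal{T})=am<\nu$, ascend $m/\ell$ times via axiom (4) of $T$ to reach a code of dimension $(a+1)m$ that contains $\mathcal{T}$ and lies in $T'$.

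The main obstacle, such as it is, is bookkeeping: I must check that the intermediate codes in the chain never run out of room, i.e. that each $\ell$-step stays within the valid dimension range so that the relevant axiom of $T$ applies. For the descending chain this is automatic because each intermediate dimension is a positive multiple of $\ell$ strictly between $0$ and $am$, so axiom (3) of $T$ applies at every stage; for the ascending chain each intermediate dimension is strictly less than $am+m\leq\nu$, so axiom (4) of $T$ applies throughout. It is worth emphasizing that the chains produced here consist of arbitrary test codes of $T$ at the intermediate $\ell$-multiples, and only the two endpoints of each chain are required to lie in $T'$; this is why the argument goes through cleanly.

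Finally, the metric claim follows immediately: if $T$ is closed under isometry and $\mathcal{T}\in T'$, then any $\mathcal{T}'\simeq\mathcal{T}$ satisfies $\mathcal{T}'\in T$, and since isometries are dimension-preserving we have $\dim(\mathcal{T}')=\dim(\mathcal{T})$, which is divisible by $m$; hence $\mathcal{T}'\in T'$, so $T'$ is metric. I would keep the write-up brief, since the key mechanism is just iterating the single-$\ell$-step chain conditions of $T$ a total of $m/\ell$ times to realize a single $m$-step.
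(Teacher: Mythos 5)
Your proposal is correct: the paper states this lemma without proof, treating it as routine, and your argument---verifying nonemptiness via $V$, the divisibility and duality axioms directly, and realizing each $m$-step by iterating the single-$\ell$-step chain conditions of $T$ exactly $m/\ell$ times (with the correct observation that only the endpoints need lie in the subfamily)---is precisely the straightforward verification the authors intend. Your bookkeeping that intermediate dimensions stay strictly between $0$ and $\nu$ is also sound, so nothing is missing.
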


A simple way to build a metric test family is to make use of chains of codes.

\begin{definition}
A \textit{chain} of codes in $V$ is a sequence of codes $C=(C_0,\ldots,C_m)$ such that $C_i\subseteq C_{i+1}$ for all $i=0,\ldots,m-1$.
A \textit{fixed step chain} is a chain $C$ such that $\nu=m\ell$ and $\dim(C_i)=i\ell$ for all $i=0,\ldots,m$.
The quantity $\ell$ is called the {\em step} of the chain.
The \textit{dual chain} of $C$ is the chain $C^\perp=(C_m^\perp,\ldots,C_0^\perp)$.
\end{definition}

Notice how the inclusions are reversed in the dual chain and that the dual chain of a fixed step chain is again a fixed step chain.

\begin{definition}\label{def:test_family_chain}
Let $m,\ell\geq 1$ be positive integers, $\nu=m\ell$, and consider a fixed step chain of codes $C=(C_0,\ldots,C_m)$ with step $\ell$.
The test family associated to $C$ is the family $T(C)$ of codes that are equivalent to a code in $C$ or $C^\perp$, that is,
\begin{equation*}
T(C)=\{\Tcode\subseteq V: \Tcode\simeq C_i\textnormal{ or }\Tcode\simeq C_i^\perp \textnormal{ for some $i$}\}.
\end{equation*}
\end{definition}

It is easy to check that, given a fixed step chain $C$ with step $\ell$, the family $T(C)$ is a metric test family with step $\ell$.
We revisit~\Cref{ex:anticodes_H1} from this point of view.

\begin{example}\label{exa:anticodes_H2}
Let $V=\F_q^n$, for $i=0,\ldots,n$ let $\Acode_i=\{x\in\F_q^n:\supp(x)\subseteq[i]\}$. The chain $A=(\Acode_0,\ldots,\Acode_n)$ gives rise to the test family $T(A)$, which is the family of SOAs described in~\Cref{ex:anticodes_H1}.
Notice that in this case the dual chain is redundant, as $\Acode_i^\perp\simeq\Acode_{n-i}$.
\end{example}

Chains are an elegant means of constructing metric test families.
However, there are quantities related to codes that, despite not being invariants, are relevant to applications and have interesting algebraic properties.
A relevant example are the generalized matrix weights defined in~\cite{martinez2017relative}.
Defining and studying these quantities using (non-metric) test families of codes allows us to give a unitary treatment that highlights their common algebraic features.

\subsection{Properties and duality}

Throughout this section, we fix a test family $T$ with step $\ell$ in $V$ and study the algebraic properties of generalized $T$-weights.
The main result is a general duality statement that relates the $T$-weights of $\code$ to the $T$-weights of $\code^\perp$.
This result unifies the duality theory of GH weights~\cite{wei1991generalized}, GD weights~\cite{ravagnani2016generalized}, and GM weights~\cite{martinez2017relative}. Moreover, it provides key insights that allow us to extend the duality of sum-rank metric codes established in~\cite{camps2022optimal}.
We start by proving some basic properties.

\begin{lemma}\label{lem:T-weights_basic_prop}
Let $\code$ be a $k$-dimensional code in $V$. The $T$-weights form a weakly increasing sequence and the weights in each sequence $\tau^h(\code)$ form a strictly increasing sequence for every $h=0,\ldots,\ell-1$.
Furthermore, 
\begin{equation*}
\bigg\lceil \frac{r}{\ell}\bigg\rceil\ell\leq\tau_r(\code)\leq\bigg\lceil\frac{\nu-k+r}{\ell}\bigg\rceil\ell
\end{equation*} 
for every $1\leq r\leq k$.
\end{lemma}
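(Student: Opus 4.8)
The plan is to dispatch the two monotonicity-flavored bounds directly from the definition and to reduce the two strict or quantitative claims to a single observation about how intersections with $\code$ behave along chains of test codes. Weak monotonicity is immediate: if $\dim(\Tcode\cap\code)\geq r+1$ then certainly $\dim(\Tcode\cap\code)\geq r$, so the feasible set defining $\tau_{r+1}(\code)$ is contained in the one defining $\tau_r(\code)$, and minimizing over a smaller set can only increase the value, giving $\tau_r(\code)\leq\tau_{r+1}(\code)$. For the lower bound I would take a test code $\Tcode$ realizing $\tau_r(\code)$; since $\Tcode\cap\code\leq\Tcode$ we get $\tau_r(\code)=\dim(\Tcode)\geq\dim(\Tcode\cap\code)\geq r$, and because condition~(1) of \Cref{def:test_family} forces $\tau_r(\code)$ to be a multiple of $\ell$, it must be at least the least multiple of $\ell$ that is $\geq r$, namely $\lceil r/\ell\rceil\ell$.

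The engine behind the remaining two claims is the following elementary fact, which I would state and prove first: if $\Tcode'\leq\Tcode$ are test codes with $\dim(\Tcode)-\dim(\Tcode')=\ell$, then $\dim(\Tcode'\cap\code)\geq\dim(\Tcode\cap\code)-\ell$. This follows by writing $\Tcode'\cap\code=\Tcode'\cap(\Tcode\cap\code)$ and applying the Grassmann formula inside $\Tcode$, where $\Tcode'$ has codimension $\ell$. In words, replacing a test code by a test subcode of codimension $\ell$ cannot drop the intersection with $\code$ by more than $\ell$. For strict monotonicity of the subsequences, I would fix $r$ with $r+\ell\leq k$, let $\Tcode$ realize $\tau_{r+\ell}(\code)$ (so $\dim(\Tcode\cap\code)\geq r+\ell$ and hence $\dim(\Tcode)>0$), and use condition~(3) to produce a test subcode $\Tcode'\leq\Tcode$ of dimension $\dim(\Tcode)-\ell$. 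The fact above gives $\dim(\Tcode'\cap\code)\geq r$, so $\Tcode'$ is feasible for $\tau_r$ and $\tau_r(\code)\leq\dim(\Tcode')=\tau_{r+\ell}(\code)-\ell$. This yields $\tau_{r+\ell}(\code)\geq\tau_r(\code)+\ell$, which is exactly the strict increase of each $\tau^h(\code)$.

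For the upper bound I would run the same fact in the opposite direction along a chain descending from the whole space. Recalling that $V$ is always a test code and that $\dim(V)=\nu=m\ell$, repeated application of condition~(3) produces test codes $V=\Tcode_m\supseteq\Tcode_{m-1}\supseteq\cdots\supseteq\Tcode_b$ with $\dim(\Tcode_j)=j\ell$, where $b=\lceil(\nu-k+r)/\ell\rceil$ and $1\leq b\leq m$ since $1\leq r\leq k\leq\nu$. Starting from $\dim(\Tcode_m\cap\code)=\dim(\code)=k$ and losing at most $\ell$ per downward step, I obtain $\dim(\Tcode_b\cap\code)\geq k-(m-b)\ell=k-\nu+b\ell\geq r$ by the choice of $b$. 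Hence $\Tcode_b$ is feasible for $\tau_r$ and $\tau_r(\code)\leq b\ell=\lceil(\nu-k+r)/\ell\rceil\ell$.

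The only genuine obstacle is the bookkeeping in the displayed intersection inequality and its correct iteration along a chain; conditions~(1) and~(3) of \Cref{def:test_family}, together with the fact that $V\in T$, then supply everything, while conditions~(2) and~(4) are not needed for this lemma. I would take care that at each invocation of condition~(3) the relevant test code has strictly positive dimension, which holds because the realizing codes meet $\code$ in dimension at least $r\geq 1$, so that the hypothesis $a\ell>0$ of condition~(3) is always satisfied.
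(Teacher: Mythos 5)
Your proof is correct and takes essentially the same approach as the paper's: weak monotonicity from nested feasible sets, strict increase of each $\tau^h(\code)$ by passing to a codimension-$\ell$ test subcode via condition~(3) of \Cref{def:test_family} together with the Grassmann formula, and the two bounds by dimension counting. The only cosmetic difference is in the upper bound, where you build an explicit chain of test codes descending from $V$ to certify that a test code of dimension $\lceil(\nu-k+r)/\ell\rceil\,\ell$ exists and meets $\code$ in dimension at least $r$ --- a point the paper's one-line argument (any test code of dimension at least $\nu-k+r$ meets $\code$ in dimension at least $r$) leaves implicit.
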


\begin{proof}
The weights in $\tau(\code)$ form a weakly increasing sequence because the sets of codes over which the intersections are computed form a weakly ascending chain.
Let $0\leq h<\ell$ and consider the sequence $\tau^h(\code)$. Let $\Tcode\in T(S)$ be a code that realizes $\tau_r(\code)$ for some $r=h\mod\ell$.
By~\Cref{def:T-weights}, the code $\Tcode$ contains a subcode $\Tcode'$ of codimension $\ell$, for which 
$\dim(\Tcode'\cap\code)\geq\dim(\Tcode\cap\code)-\ell\geq r-\ell$.
It follows that $$\tau_{r-\ell}(\code)\leq\dim(\Tcode')=\dim(\Tcode)-\ell<\dim(\Tcode)=\tau_r(\code).$$ This proves that $\tau^h(\code)$ is a strictly increasing sequence.

Finally, let $1\leq r\leq k$. The lower bound on $\tau_r(\code)$ follows directly from the definition.
To show the upper bound, notice that if $\Tcode\in T(S)$ has dimension $\dim(\Tcode)\geq\nu-k+r$, then we have $\dim(\code\cap\Tcode)\geq r$.
\end{proof}

This result plays the same role as~\cite[Theorem 1]{wei1991generalized} and~\cite[Theorem 30]{ravagnani2016generalized} in the respective context.

\begin{remark}
Special cases of \Cref{lem:T-weights_basic_prop} have appeared in several different contexts. If one chooses as test family the family of SOAs, \Cref{lem:T-weights_basic_prop} recovers the known inequalities on GH weights (implicit in~\cite{wei1991generalized}), GD weights~(\cite[Theorem 30]{ravagnani2016generalized}), and generalized weights of vector rank-metric codes~(\cite[Corollary 15]{kurihara2015relative}). If one chooses as test family the family of rank-support spaces, it recovers the inequalities for GM weights~(\cite[Section VII.A]{martinez2017relative}).
Finally, if one chooses as test family the family of SOAs whose dimension is a multiple of $m$, it recovers a special case of the inequalities for sum-rank metric codes~(\cite[Lemma V.5 and Proposition V.6]{camps2022optimal}). 
\end{remark}

In order to show that the $T$-weights satisfy a duality statement, we will use the following lemma.

\begin{lemma}\label{lem:T_duality_ineq}
Let $\code\subseteq V$ be a $k$-dimensional code, and let $k^\perp=\dim(\code^\perp)=\nu-k$.
Then 
\begin{align*}
\tau_{k^\perp+r-\tau_{r}(\code)}(\code^\perp)\leq \nu-\tau_{r}(\code)\quad&\text{for every $r\in[k]$,}\\
\tau_{k+s-\tau_{s}(\code^\perp)}(\code)\leq \nu-\tau_{s}(\code^\perp)\quad&\text{for every $s\in[k^\perp]$}.
    \end{align*}
\end{lemma}

\begin{proof}
We prove only the first inequality, as the second follows from applying the first inequality to the dual code.
Let $\Tcode$ be a code that realizes $\tau_{r}(\code)$. Then 
\begin{align*}
\dim(\code^{\perp}\cap\Tcode^\perp)&=k^\perp-\dim(\Tcode)+\dim(\Tcode\cap\code)\geq k^\perp-\tau_{r}(\code)+r.
\end{align*}
It follows that $\Tcode^\perp$ is one of the codes to be considered in determining $\tau_{k^\perp-\tau_{r}(\code)+r}(\code^\perp)$.
Hence
\begin{equation*}
\tau_{k^\perp-(\tau_{r}(\code)-r)}(\code^\perp)\leq\dim(\Tcode^\perp)=\nu-\tau_{r}(\code).
\end{equation*}
\end{proof}

The following is the main result of this paper. It describes how the $T$-weights of a code determine the $T$-weights of its dual.

\begin{theorem}\label{thm:duality}
Let $T$ be a test family with step $\ell$ in $V$. 
For any $0\leq h\leq\ell-1$, the sequences $\tau^h(\code^\perp)$ and $\tau^{h+k}(\code)$ determine each other. More precisely, we have 
\begin{equation*}
\{\tau_s(\code^\perp)\;:\;s=h\mod\ell\}=[\nu]_\ell\setminus\{\nu+\ell-\tau_{r}(\code)\;:\; r=k+h\mod\ell\}.
\end{equation*}
\end{theorem}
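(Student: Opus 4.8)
The plan is to prove the set equality by establishing both inclusions, using the counting consequences of the two inequalities in \Cref{lem:T_duality_ineq} together with the strict monotonicity of the subsequences $\tau^h$ from \Cref{lem:T-weights_basic_prop}. First I would fix $h$ and count. By \Cref{lem:T-weights_basic_prop}, the subsequence $\tau^{k+h}(\code)$ consists of distinct values in $[\nu]_\ell$, all congruent to $(k+h)\ell$-class and lying in $\ell\N$; likewise $\tau^h(\code^\perp)$ is a strictly increasing sequence of distinct values. Define the candidate ``forbidden set'' $F=\{\nu+\ell-\tau_r(\code)\;:\;r=k+h\mod\ell\}$. Since $\tau_r(\code)\in[\nu]_\ell$ and $\nu=m\ell$, each element $\nu+\ell-\tau_r(\code)$ again lies in $[\nu]_\ell$, so both $F$ and the left-hand set are subsets of $[\nu]_\ell$. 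The heart of the argument is to show these two sets partition $[\nu]_\ell$ complementarily.

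The key step is a disjointness-plus-cardinality argument. I would first show that the left-hand set $L=\{\tau_s(\code^\perp):s=h\mod\ell\}$ is disjoint from $F$, which is where \Cref{lem:T_duality_ineq} enters: the inequality $\tau_{k^\perp+r-\tau_r(\code)}(\code^\perp)\leq\nu-\tau_r(\code)$ says that for the index $s^*=k^\perp+r-\tau_r(\code)$, the dual weight $\tau_{s^*}(\code^\perp)$ is at most $\nu-\tau_r(\code)$, hence strictly below the forbidden value $\nu+\ell-\tau_r(\code)$. Combined with strict monotonicity of $\tau^h(\code^\perp)$—which forces consecutive values to differ by at least $\ell$—this squeezes each dual weight away from the forbidden values, giving $L\cap F=\emptyset$. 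One must check that the index $s^*$ indeed lies in the residue class $h\bmod\ell$: since $k^\perp=\nu-k$, we have $s^*=\nu-k+r-\tau_r(\code)\equiv r-k\equiv h\pmod\ell$ when $r\equiv k+h$, so the bookkeeping closes. The symmetric inequality handles the reverse direction, ensuring no element of $[\nu]_\ell\setminus F$ is missed.

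Finally I would close by a dimension count. The number of valid indices $s\equiv h\pmod\ell$ in $[1,k^\perp]$ and the number of valid indices $r\equiv k+h\pmod\ell$ in $[1,k]$ should sum to exactly $|[\nu]_\ell|=m$, so that $|L|+|F|=m$; together with $L\cap F=\emptyset$ and $L,F\subseteq[\nu]_\ell$, this forces $L=[\nu]_\ell\setminus F$. The main obstacle I anticipate is the careful residue-class and cardinality bookkeeping: verifying that the map $r\mapsto s^*$ is a bijection between the relevant index sets and that the two ranges of indices have complementary sizes modulo $\ell$. This is the step where off-by-$\ell$ errors are most likely to creep in, so I would track the counts explicitly rather than by analogy with the Hamming case $\ell=1$ of \Cref{thm:duality_GHW}, where the congruence conditions trivialize.
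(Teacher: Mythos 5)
Your overall skeleton---disjointness of $L$ and $F$ inside $[\nu]_\ell$, plus a cardinality count resting on the strict monotonicity of the subsequences---is exactly the paper's, and your residue check $s^*=k^\perp+r-\tau_r(\code)\equiv h\pmod{\ell}$ is correct. The gap is in the disjointness step, which is the heart of the theorem: you derive it from only the \emph{first} inequality of \Cref{lem:T_duality_ineq} together with strict monotonicity, and that inference is invalid. The inequality bounds $\tau_{s^*}(\code^\perp)$ from \emph{above} by $\nu-\tau_r(\code)$; monotonicity then keeps $\tau_s(\code^\perp)$ below the forbidden value only for indices $s\leq s^*$. For $s>s^*$ in the same residue class, monotonicity gives \emph{lower} bounds, and since the forbidden value $\nu+\ell-\tau_r(\code)$ sits exactly one step $\ell$ above your upper bound, nothing in your premises prevents the very next term $\tau_{s^*+\ell}(\code^\perp)$ from equalling it. Concretely, take $\ell=1$, $\nu=4$, $k=k^\perp=2$ and the hypothetical hierarchies $\tau(\code)=(1,3)$, $\tau(\code^\perp)=(1,2)$: both are strictly increasing, both satisfy the bounds of \Cref{lem:T-weights_basic_prop}, and the first inequality of \Cref{lem:T_duality_ineq} holds for $r=1,2$ (it gives $\tau_2(\code^\perp)\leq 3$ and $\tau_1(\code^\perp)\leq 1$), yet $\tau_2(\code^\perp)=2=\nu+1-\tau_2(\code)$ lies in the forbidden set. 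This configuration is excluded only by the \emph{second} inequality (with $s=2$ it forces $\tau_2(\code)\leq 2$, a contradiction), so both inequalities must enter the disjointness argument; your ``squeezing'' step cannot be repaired without them.

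This is precisely what the paper does: it assumes $\tau_s(\code^\perp)=\nu+\ell-\tau_r(\code)$, applies \emph{both} inequalities of \Cref{lem:T_duality_ineq}, upgrades the resulting strict inequalities to gaps of size $\ell$ using the congruence $r\equiv k+s\pmod{\ell}$ and the divisibility $\ell\mid\tau_r(\code)$, $\ell\mid\tau_s(\code^\perp)$, and adds them to reach the contradiction $r>r$. Your remark that ``the symmetric inequality handles the reverse direction'' misplaces where that inequality is needed: once disjointness and the count $|L|+|F|=\nu/\ell$ are in hand, there is no separate reverse inclusion to prove---complementarity is automatic---whereas disjointness itself cannot be proved without the symmetric inequality. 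Your cardinality paragraph is fine (and you do not need $r\mapsto s^*$ to be a bijection; only the number of indices in each residue class matters), so repairing the disjointness step along the paper's lines completes the proof.
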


\begin{proof}
Fix $0\leq h\leq \ell-1$ and let $r\in[k]$, $s\in[k^\perp]$ with $r=h+k\mod\ell$ and $s=h\mod\ell$.
We claim that
\begin{equation}
\tau_{s}(\code^\perp)\neq \nu+\ell-\tau_{r}(\code).
\end{equation}
In fact, suppose by contradiction that the equality holds for some $r,s$ as above.
By the previous lemma we have $\tau_s(\code^\perp)\leq \nu-\tau_{k+s-\tau_s(\code^\perp)}(\code)$. Therefore, we must have $r>k+s-\tau_s(\code^\perp)$.
Since $r=k+s\mod\ell$ and $\ell\mid \tau_s(\code^\perp)$, then \begin{equation}\label{eqn:r}
r\geq k+s+\ell-\tau_s(\code^\perp).
\end{equation}
On the other hand, by the previous lemma 
$\tau_{s}(\code^\perp)>\tau_{s}(\code^\perp)-\ell= \nu-\tau_r(\code)\geq\tau_{k^\perp+r-\tau_{r}(\code)}(\code^\perp)$,
hence 
\begin{equation}\label{eqn:s}
s>k^\perp+r-\tau_{r}(\code).
\end{equation}
Combining (\ref{eqn:r}) and (\ref{eqn:s}), one gets
$$r>k+k^\perp+\ell+r-\tau_{r}(\code)-\tau_s(\code^\perp)=r,$$
which is a contradiction.
It follows that the sets
\begin{equation*}
\{\tau_s(\code^\perp)\;:\;s=h\mod\ell\}\text{ and }\{\nu-\tau_r(\code)+\ell\;:\;r=h+k\mod\ell\}
\end{equation*}
have an empty intersection. Since by \Cref{lem:T-weights_basic_prop} their cardinalities sum to $\nu/\ell$, they are one the complement of the other in $[\nu]_{\ell}$.
\end{proof}

\begin{example}\label{ex:duality_GHW}
Defining generalized $T$-weights with respect to the family of SOAs yields GH weights, as was proved in~\cite[Theorem 10]{ravagnani2016generalized}.
The restriction $q\geq 3$ can be lifted due to the fact that we use SOAs instead of OAs in our definition.
Since $\ell=1$, there is only one sequence $\tau^h(\code)$, i.e., $\tau(\code)=\tau^0(\code)$ for every $\code\leq\F_q^n$.
Hence we get
\begin{equation*}
\{\tau_s(\code^\perp)\mid1\leq s\leq n-k\}=\{n+1-\tau_r(\code)\mid1\leq r\leq k\}
\end{equation*}
which recovers~\Cref{thm:duality_GHW}.
\end{example}

\subsection{Classification of binary optimal anticodes}\label{subsec:binary_OA}

In this section, we classify binary OAs in the Hamming metric up to isometry. 

\begin{theorem}
Let $\Acode\leq\F_2^n$ be an OA of dimension $k$. Then either $\Acode$ is an SOA, or there exist non-negative integers $t,f$ and odd integers $n_1,\ldots,n_t$ such that $n=n_1+\ldots+n_t+f$ and
\begin{equation*}
\Acode\simeq\Acode_1\times\ldots\times\Acode_t\times\Acode_{\textnormal{H}}
\end{equation*}
where $\Acode_{\textnormal{H}}\leq\F_2^f$ is an SOA, and $\Acode_i\leq\F_2^{n_i}$ is a $[n_i,n_i-1]_2$ single parity check code for all $i=1,\ldots,t$.
\end{theorem}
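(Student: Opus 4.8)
The plan is to work with a generator matrix of $\Acode$ in reduced row echelon form (RREF) and translate the optimal anticode condition into a purely combinatorial condition on its columns. Put $\Acode$ in RREF with $k\times n$ generator matrix $G$, let $P\subseteq[n]$ be the set of pivot columns (so $G$ restricted to $P$ is the $k\times k$ identity) and $N=[n]\setminus P$ the non-pivot columns. For $j\in N$ write $v_j\in\F_2^k$ for the $j$-th column of $G$, identified with $\supp(v_j)\subseteq[k]$. Every codeword is $xG$ for a unique $x\in\F_2^k$, and since in the pivot positions the codeword reads off $x$ we have $\Hw(xG)=\Hw(x)+|\{j\in N: x\cdot v_j=1\}|$, where $x\cdot v_j=|\supp(x)\cap\supp(v_j)|\bmod 2$. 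By the Anticode Bound (\Cref{thm:anticode_Hamming}) one always has $\maxwt(\Acode)\geq k$, so $\Acode$ is an OA if and only if $\Hw(xG)\leq k$ for all $x$; writing $S=\supp(x)$ this is equivalent to the condition that I will call $(\star)$:
\begin{equation*}
|\{j\in N:\ |S\cap\supp(v_j)|\ \text{is odd}\}|\ \leq\ k-|S| \qquad\text{for all } S\subseteq[k].
\end{equation*}

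The core of the argument is to show that $(\star)$ holds precisely when the nonzero columns $v_j$ have pairwise disjoint supports, each of even size. First I would record that taking $S=[k]$ forces every $\supp(v_j)$ to have even size, since then the right-hand side of $(\star)$ is $0$. For the necessity of disjointness, if two distinct nonzero columns shared a coordinate $i_0\in\supp(v_a)\cap\supp(v_b)$, then $S=[k]\setminus\{i_0\}$ gives $|S\cap\supp(v_a)|=|\supp(v_a)|-1$ and $|S\cap\supp(v_b)|=|\supp(v_b)|-1$, both odd; hence the left-hand side of $(\star)$ is at least $2$ while the right-hand side equals $1$, a contradiction. Conversely, if the supports $B_1,\ldots,B_t$ of the nonzero columns are pairwise disjoint and even, then any block $B_i$ contributing to the left-hand side for a given $S$ must be split by $S$, so it omits at least one element lying in $[k]\setminus S$; disjointness of the $B_i$ then bounds the number of contributing blocks by $|[k]\setminus S|=k-|S|$, which is exactly $(\star)$.

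Finally, I would read off the decomposition from this column structure. The pivot index set $[k]$ partitions into the disjoint even blocks $B_1,\ldots,B_t$ together with the remaining free pivots $F$, while the non-pivot coordinates split into the $t$ coordinates carrying $\one_{B_i}$ and the zero columns $Z$. Because the $B_i$ are disjoint and the free pivots appear in no column, the map $x\mapsto xG$ factors as a direct sum over these blocks: on the coordinates $B_i$ together with its associated non-pivot coordinate the code is exactly the single parity check code $\{(y,\textstyle\sum_\ell y_\ell): y\in\F_2^{B_i}\}$ of length $n_i=|B_i|+1$, which is odd since $|B_i|$ is even and at least $2$ (so $n_i\geq 3$); on $F$ every coordinate is free and on $Z$ every coordinate is zero, so $F\cup Z$ carries a support space, i.e.\ an SOA $\Acode_{\textnormal{H}}\leq\F_2^f$ with $f=|F|+|Z|$. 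Up to a permutation of coordinates, which is an isometry over $\F_2$, this yields $\Acode\simeq\Acode_1\times\cdots\times\Acode_t\times\Acode_{\textnormal{H}}$ with $n=n_1+\cdots+n_t+f$, and the case $t=0$ is exactly the case where $\Acode$ itself is an SOA.

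The main obstacle is the combinatorial equivalence in the middle step, namely identifying $(\star)$ as the OA condition and proving both of its directions, particularly the choice of the test set $S=[k]\setminus\{i_0\}$ that exposes any overlap between two columns. Once this is established, the direct-sum reconstruction and the identification of the blocks as odd-length parity check codes are routine.
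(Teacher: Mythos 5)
Your proof is correct and takes essentially the same approach as the paper's: both pass to a systematic/RREF generator matrix, use the all-ones vector (your $S=[k]$) to force even weight on the non-pivot columns and the all-ones-minus-one-coordinate vectors (your $S=[k]\setminus\{i_0\}$) to force pairwise disjointness of their supports (the paper states this as each row of the non-identity block having weight at most one, which is the same condition), and then read off the block decomposition into single parity check codes plus a support space. The only differences are cosmetic: you additionally verify the converse direction of your condition $(\star)$, which the theorem does not require, and you deduce the oddness of the $n_i$ directly from the evenness of $|B_i|$, whereas the paper gets it from additivity of dimension and maximum weight on direct products.
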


\begin{proof}
Up to equivalence, we may assume that $\Acode$ has a systematic generator matrix ${G=(I_k \mid A)\in\F_2^{k\times n}}$, where $I_k\in\F_2^{k\times k}$ is the identity matrix and $A\in\F_2^{n-k\times k}$. We claim that every column of $A$ has even weight and every row of $A$ has weight zero or one. 
In fact, $(1,\ldots,1)G = (1, \ldots, 1 \mid (1, \ldots, 1) A)\in\Acode$ has weight $k+\Hw(1, \ldots, 1)A$. Since the maximum weight of $\Acode$ is $k$, $(1, \ldots, 1) A=(0\ldots0)$, i.e., every column of $\Acode$ has even weight.
Let $1\leq i\leq k$ and let $v_i\in\F_2^k$ be the vector whose entries are all equal to one, except for a zero in position $i$. Then $\Hw(v_iG)=k-1+\Hw(a_i)$, where $a_i$ denotes the $i^{th}$ row of $A$. Since the maximum weight of an element of $\Acode$ is $k$, the weight of the $i^{th}$ row of $\Acode$ is at most one.

Up to permuting the rows of $G$, we may suppose that the last $f$ rows of $A$ are zero and that each of the remaining rows has its only nonzero entry in the same column as the previous row or on the right of it. Up to permuting the columns of $G$, $G$ takes the form
    \begin{equation*}
        G=\begin{pmatrix}
            G_1 & 0 & \ldots & 0 & 0\\
            0 & G_2 & \ldots & 0 & 0\\
            \ldots & \ldots & \ldots & \ldots & \ldots\\
            0 & 0 & \ldots & G_t & 0\\
            0 & 0 & \ldots & 0 & G_{\textnormal{H}}
        \end{pmatrix}
    \end{equation*}
where $G_{\textnormal{H}}$ is an $f\times f$ identity matrix and $G_i=(I_{n_i-1}\mid \mathbf{1})\in\F_2^{(n_i-1)\times n_i}$, where $I_{n_i-1}$ is an $(n_i-1)\times (n_i-1)$ identity matrix and $\mathbf{1}$ is a column of ones. For $1\leq i\leq t$ let $\Acode_i\subseteq\F_2^{n_i}$ be the code with generator matrix $G_i$ and let $\Acode_{\textnormal{H}}$ be the code with generator matrix $G_{\textnormal{H}}$. Then $\Acode\simeq\Acode_1\times\ldots\times\Acode_t\times\Acode_{\textnormal{H}}$ and $n=n_1+\cdots+n_t+f$. Since both the dimension and the maximum weight of a code are additive on direct products, $\Acode_1,\ldots,\Acode_t,\Acode_{\textnormal{H}}$ are OAs. Therefore, $\Acode_{\textnormal{H}}$ is an SOA and $n_1,\ldots,n_t$ are odd. 
\end{proof}

\section{Generalized weights of rank-metric codes}\label{sec:rank}

In this section, the general framework developed in~\Cref{sec:general_theory} is instantiated with the appropriate test families to recover the two definitions of generalized weights in the rank metric: generalized Delsarte weights~\cite{ravagnani2016generalized} and generalized matrix weights~\cite{martinez2017relative}.
The different features of these two notions are nicely explained in terms of the test families used and~\Cref{thm:duality} gives a concise proof of the respective duality theorems.
All results on rank-metric codes stated without proof can be found in~\cite{gorla2021rank}.

In the end of the section, we discuss generalized weights for vector rank-metric codes, i.e., $\F_{q^m}$-linear rank-metric codes. Also in this context, we show how to recover the duality of generalized weights from our general result.

\subsection{Generalized Delsarte weights}\label{sec:GDW}

Let $n\leq m$ and let $V=\F_q^{n\times m}$ be endowed with the rank metric. 
The generalized Delsarte weights are invariants of rank-metric codes defined in~\cite[Definition 23]{ravagnani2016generalized} using optimal anticodes.
It is well-known that every optimal rank-metric anticode is a standard optimal anticode.

\begin{definition}[{\cite[Definition 23]{ravagnani2016generalized}}]\label{def:GDW}
Let $\code\leq \F_q^{n\times m}$ be a $k$-dimensional code. For $1\leq r\leq k$, the $r^{th}$ {\em generalized Delsarte (GD) weight} is
    \begin{equation*}
        a_r(\code)=\frac{1}{m}\min\{\dim(\Acode)\mid\Acode\textnormal{ is an optimal anticode}, \dim(\Acode\cap\code)\geq r\}.
    \end{equation*}
\end{definition}
To recognize GD weights as an example of $T$-weights, observe that the  family of OAs is the test family $T(A)$ obtained by using the fixed step chain $A=(\Acode_0=0,\Acode_1,\ldots,\Acode_n=V)$ as a starting point in~\Cref{def:test_family_chain}. Here, $\Acode_t$ is the SOA consisting of matrices whose last $n-t$ rows are zero.
Since $T(A)$ is a metric test family, \Cref{def:T-weights} yields a family of invariants of rank-metric codes. Moreover, these invariants coincide with the GD weights up to the constant factor $m$, since
\begin{equation}
    \tau_r(\code)=\min\{\dim(\Tcode)\mid\Tcode\in T(A),\dim(\Tcode\cap\code)\geq r\}=ma_r(\code)
\end{equation}
for all $r$. The results contained in~\cite[Section 6]{ravagnani2016generalized} therefore follow from~\Cref{thm:duality}. In particular, we recover the duality statement for GD weights.

\begin{theorem}
Let $T$ be the family of OAs in the rank-metric space $\F_q^{n\times m}$ and let $\code$ be a $k$-dimensional code.
The sequences $\tau^h(\code)$ and $\tau^{h+k}(\code^\perp)$ determine each other via
\begin{equation*}
\tau^{h}(\code)=[nm]_m\setminus\{m(n+1)-x:x\in\tau^{h+k}(\code^\perp)\}
\end{equation*}
for every $0\leq h<m$. In particular, $\tau(\code)$ determines $\tau(\code^\perp)$.
\end{theorem}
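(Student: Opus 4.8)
The plan is to derive this theorem as a direct specialization of the general duality result, \Cref{thm:duality}, to the test family $T$ of optimal anticodes in the rank-metric space $\F_q^{n\times m}$. The crucial preliminary observation is the identification of the relevant step $\ell$: as discussed in the paragraph following \Cref{def:GDW}, the family of OAs is the metric test family $T(A)$ built from the fixed step chain $A=(\Acode_0,\ldots,\Acode_n)$, where each $\Acode_t$ consists of matrices whose last $n-t$ rows vanish. Since $\dim(\Acode_t)=tm$, consecutive dimensions differ by $m$, so the step of this test family is $\ell=m$ and the ambient dimension is $\nu=nm$. Thus the set of possible weight values is $[nm]_m=\{0,m,2m,\ldots,nm\}$, consistent with the formula $\tau_r(\code)=ma_r(\code)$.

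With $\ell=m$ and $\nu=nm$ in hand, I would simply invoke \Cref{thm:duality} verbatim. That theorem states that for each $0\leq h\leq \ell-1$,
\begin{equation*}
\{\tau_s(\code^\perp):s=h\bmod\ell\}=[\nu]_\ell\setminus\{\nu+\ell-\tau_r(\code):r=k+h\bmod\ell\}.
\end{equation*}
Substituting $\ell=m$ and $\nu=nm$ gives $\nu+\ell=nm+m=m(n+1)$, which is exactly the offset appearing in the claimed statement. The left-hand side $\{\tau_s(\code^\perp):s=h\bmod m\}$ is precisely $\tau^h(\code^\perp)$ (viewed as a set), and the subtracted set on the right is $\{m(n+1)-x:x\in\tau^{h+k}(\code)\}$, matching the theorem's formula up to interchanging the roles of $\code$ and $\code^\perp$.

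The one genuine bookkeeping point—and the step I would be most careful about—is the swap of $\code$ and $\code^\perp$ together with the matching of the residue classes $h$ and $h+k$. The statement to be proved expresses $\tau^h(\code)$ in terms of $\tau^{h+k}(\code^\perp)$, whereas \Cref{thm:duality} as written expresses $\tau^h(\code^\perp)$ in terms of $\tau^{k+h}(\code)$. These are the same assertion applied to the dual code: replacing $\code$ by $\code^\perp$ in \Cref{thm:duality}, and using that $(\code^\perp)^\perp=\code$ together with $\dim(\code^\perp)=\nu-k$, turns the index shift by $k=\dim(\code)$ into a shift by $k^\perp=\nu-k$, and since $\nu=nm\equiv 0\bmod m$ one has $k^\perp\equiv -k\bmod m$. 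Hence the residue class $h+k^\perp$ coincides with $h-k$, and reindexing $h\mapsto h+k$ recovers exactly the pairing $\tau^h(\code)\leftrightarrow\tau^{h+k}(\code^\perp)$ in the stated formula. Once this residue-class arithmetic is checked, the final sentence ``in particular, $\tau(\code)$ determines $\tau(\code^\perp)$'' follows immediately, since ranging $h$ over all residues $0,\ldots,m-1$ reconstructs every weight of $\code^\perp$ from the weights of $\code$.
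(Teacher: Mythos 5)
Your route is the same as the paper's: recognize the rank-metric OAs as the metric test family $T(A)$ of \Cref{def:test_family_chain} with step $\ell=m$ in $V=\F_q^{n\times m}$, so that $\nu=nm$, $\nu+\ell=m(n+1)$, and then specialize \Cref{thm:duality}. That identification is correct (one small slip: with the paper's convention $[n]_m=\{i\in[n]:i=0\bmod m\}$, the set $[nm]_m$ is $\{m,2m,\ldots,nm\}$ and does not contain $0$; this matters, since $0$ is never a weight and never of the form $m(n+1)-x$, so including it would break the complementation). The problem is the final bookkeeping step — precisely the one you flagged as delicate.

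As you correctly compute, applying \Cref{thm:duality} to $\code^\perp$ pairs $\tau^h(\code)$ with $\tau^{h+k^\perp}(\code^\perp)$, and $k^\perp=nm-k\equiv -k\pmod m$, so the pairing you actually obtain is $\tau^h(\code)\leftrightarrow\tau^{h-k}(\code^\perp)$. Your claim that ``reindexing $h\mapsto h+k$'' then recovers $\tau^h(\code)\leftrightarrow\tau^{h+k}(\code^\perp)$ is false: substituting $h+k$ for $h$ in the pairing you derived yields $\tau^{h+k}(\code)\leftrightarrow\tau^{h}(\code^\perp)$, which is just \Cref{thm:duality} restated, not the target pairing. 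The pairings $h\leftrightarrow h-k$ and $h\leftrightarrow h+k$ coincide only when $2k\equiv 0\pmod m$, and they genuinely differ otherwise: in $V=\F_q^{1\times 3}$ the only OAs are $0$ and $V$, so for $\code=\langle(1,0,0)\rangle$ (thus $k=1$) one has $\tau(\code)=(3)$ and $\tau(\code^\perp)=(3,3)$; the printed formula with $h=1$ gives $[3]_3\setminus\{6-x:x\in\tau^{2}(\code^\perp)\}=\{3\}\setminus\{3\}=\emptyset$, yet $\tau^1(\code)=(3)$, whereas the index $h-k=0$ gives $\{3\}\setminus\emptyset=\{3\}$, as it should. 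So what your correct intermediate computation proves is the statement with $\tau^{h-k}(\code^\perp)$ (equivalently $\tau^{h+k^\perp}(\code^\perp)$) in place of $\tau^{h+k}(\code^\perp)$; the printed index cannot be derived from \Cref{thm:duality} because it is not true in general. The paper itself offers no explicit derivation at this point (it only asserts that the result follows from \Cref{thm:duality}), so an honest completion of your argument uncovers an index error in the statement rather than filling a gap. Note that the final, weaker claim — that $\tau(\code)$ determines $\tau(\code^\perp)$ — survives either way, since it only requires the full collection of subsequences as $h$ ranges over all residues.
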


\subsection{Generalized matrix weights}

Consider again the rank metric space $\F_q^{n\times m}$.
Generalized matrix (GM) weights were defined in~\cite{martinez2017relative} to measure the leakage of a rank-metric code when used as a coset code in a wire-tap network. They are a family of generalized weights for rank-metric codes, which is different from GD weights. We refer to~\cite[Section~11.5]{gorla2021rank} for a discussion of the properties of both families of weights and the relation between the two. In particular, GM and GD weights coincide for $n<m$, see~\cite[VIII.C]{martinez2017relative}, while they do not coincide in general for $n=m$, see~\cite[Example~2.10]{gorla2020rankmetric}. For $n>m$, the GM weights of $\code$ coincide with the GD weights of the transposed code $\code^t\subseteq\F_q^{m\times n}$, see~\cite[Theorem XXX]{gorla2021rank}
Moreover, the GM weights are not invariants if $n=m$. 
It is shown in~\cite[Proposition 65]{martinez2017relative} that the GM weights satisfy a duality statement.
These differences and analogies between the GD weights and the GM weights become clear when interpreting them as generalized $T$-weights. The starting point for the definition of the GM weights is the family of rank-support spaces.

\begin{definition}
The {\em rank-support space} associated to $\Lcal\leq\F_q^n$ is
$$\Vcal_\Lcal=\{X\in \F_q^{n\times m}:\colsp(X)\leq\Lcal\},$$
where $\colsp(X)$ denotes the column space of the matrix $X$.  
\end{definition}

The definition of GM weights is as follows.

\begin{definition}[{\cite[Definition 10]{martinez2017relative}}]
Let $\code\leq\F_q^{n\times m}$ be a $k$-dimensional code. For $1\leq r\leq k$ the {\em $r^{th}$ generalized matrix (GM) weight} is
\begin{equation*}d_r(\code)=\min\{\dim(\Lcal)\mid\Lcal\leq\F_q^n,\dim(\Vcal_\Lcal\cap\code)\geq r\}.
\end{equation*}
\end{definition}

In our language, GM weights are generalized $T$-weights for $T$ the family of rank-support spaces. 
\begin{lemma}
The family of rank-support spaces $T=\{\Vcal_\Lcal:\Lcal\leq\F_q^n\}$ is a test family with step $m$. If $m=n$, then $T$ is not a metric family.
\end{lemma}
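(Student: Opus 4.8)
The plan is to verify the four conditions of \Cref{def:test_family} in turn, with closure under duality (condition 2) as the substantive step, and then to exhibit an explicit isometry witnessing non-metricity in the case $n=m$.

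First I would record the dimension formula. If $\dim(\Lcal)=d$, then a matrix lies in $\Vcal_\Lcal$ precisely when each of its $m$ columns lies in $\Lcal$, so the columns contribute independently and $\dim(\Vcal_\Lcal)=md$. This gives condition 1 immediately, since every test code has dimension divisible by $m=\ell$, and it shows the family is nonempty, containing $\Vcal_0=0$ and $\Vcal_{\F_q^n}=\F_q^{n\times m}$. Conditions 3 and 4 then reduce to elementary facts about subspaces of $\F_q^n$ via the inclusion-preserving assignment $\Lcal\mapsto\Vcal_\Lcal$: if $\dim(\Vcal_\Lcal)=am=md>0$, choosing $\Lcal'\leq\Lcal$ with $\dim(\Lcal')=d-1$ yields $\Vcal_{\Lcal'}\leq\Vcal_\Lcal$ of dimension $(a-1)m$; and if $am<nm$, choosing $\Lcal''\geq\Lcal$ with $\dim(\Lcal'')=d+1$ yields $\Vcal_\Lcal\leq\Vcal_{\Lcal''}$ of dimension $(a+1)m$.

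The heart of the argument is condition 2, namely the claim that $(\Vcal_\Lcal)^\perp=\Vcal_{\Lcal^\perp}$, where $\Lcal^\perp$ denotes the orthogonal complement of $\Lcal$ in $\F_q^n$ with respect to the standard inner product. To prove one inclusion I would decompose the trace product along columns: writing $x_1,\ldots,x_m$ and $y_1,\ldots,y_m$ for the columns of $X$ and $Y$, one has
\begin{equation*}
\Tr(XY^t)=\sum_{i,k}X_{ik}Y_{ik}=\sum_{k=1}^m x_k^t y_k.
\end{equation*}
If $X\in\Vcal_\Lcal$ and $Y\in\Vcal_{\Lcal^\perp}$, then $x_k\in\Lcal$ and $y_k\in\Lcal^\perp$ for every $k$, so each summand vanishes and $\Vcal_{\Lcal^\perp}\leq(\Vcal_\Lcal)^\perp$. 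Equality then follows from a dimension count, since $\dim(\Vcal_{\Lcal^\perp})=m(n-d)=nm-md=\dim\big((\Vcal_\Lcal)^\perp\big)$. This shows $T$ is closed under duality and completes the verification that $T$ is a test family with step $m$.

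For the final assertion, suppose $n=m$. The transpose map $\phi\colon X\mapsto X^t$ is a linear isometry of $(\F_q^{n\times n},\rkd)$ because $\rank(X^t)=\rank(X)$. Taking $\Lcal=\langle e_1\rangle$, the space $\Vcal_\Lcal$ consists of the matrices supported on their first row, so $\phi(\Vcal_\Lcal)$ consists of the matrices supported on their first column; this image contains both $E_{11}$ and $E_{n1}$, whose column spaces are $\langle e_1\rangle$ and $\langle e_n\rangle$. Any rank-support space $\Vcal_{\Lcal'}$ containing both would require $e_1,e_n\in\Lcal'$, hence $\dim(\Lcal')\geq 2$ and $\dim(\Vcal_{\Lcal'})\geq 2m>m=\dim\phi(\Vcal_\Lcal)$, a contradiction. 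Thus $\phi(\Vcal_\Lcal)$ is isometric to the test code $\Vcal_\Lcal$ yet is not itself a rank-support space, so $T$ is not metric. The main obstacle is the duality computation in condition 2; the column decomposition of the trace product renders it transparent, while the non-metric direction hinges on the transpose being an available rank-metric isometry exactly when $n=m$.
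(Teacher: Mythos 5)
Your proposal is correct and takes essentially the same approach as the paper: the paper states that the test-family conditions are ``easy to check'' (you supply those details, including the key identity $(\Vcal_\Lcal)^\perp=\Vcal_{\Lcal^\perp}$ via the column decomposition of the trace form), and its non-metricity argument is exactly yours, namely that for $n=m$ the transpose isometry sends the rank-support space $\Vcal_{\langle e_1,\ldots,e_r\rangle}$ (the paper allows any $1\leq r<n$, you take $r=1$) to a space that is not a rank-support space.
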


\begin{proof}
It is easy to check that $T$ is a test family for any $n,m$. If $m=n$, let $1\leq r< n$ and let 
$$\Acode_r=\{X\in\F_q^{n\times m}:\textnormal{the last $n-r$ rows of $X$ are zero}\}.$$  
Then $\Acode_r=V_{\langle e_1,\ldots,e_{r}\rangle}\in T$, $\Acode_r\simeq\Acode_r^t$, but $\Acode_r^t\notin T$, hence $T$ is not metric.
\end{proof}

Consider the generalized $T$-weights for $T$ the family of rank-support spaces.
Since $\dim(\Vcal_\Lcal)=m\dim(\Lcal)$, it follows that $\tau_r(\code)=m d_r(\code)$ for all $1\leq r\leq k$.
Therefore, \Cref{thm:duality} implies~\cite[Proposition 65]{martinez2017relative}. The duality statement for the generalized $T$-weights in this case reads as follows.

\begin{theorem}
Let $T$ be the family of rank-support spaces in $\F_q^{n\times m}$, and let $\code\leq\F_q^{n\times m}$ be a $k$-dimensional code.
For every $0\leq h<m$ the sequences $\tau^h(\code)$ and $\tau^{h+k}(\code^\perp)$ determine each other via
\begin{equation*}
\tau^{h}(\code)=[nm]_m\setminus\{m(n+1)-x:x\in\tau^{h+k}(\code^\perp)\}.
\end{equation*}
In particular, $\tau(\code)$ determines $\tau(\code^\perp)$.
\end{theorem}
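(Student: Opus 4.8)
The plan is to derive this statement directly from the general duality theorem, \Cref{thm:duality}, with essentially no new work beyond a parameter substitution and an index computation. The preceding lemma shows that the family $T$ of rank-support spaces is a test family with step $\ell = m$ in $V = \F_q^{n \times m}$; since $\dim(\Vcal_\Lcal) = m\dim(\Lcal)$, every test dimension is a multiple of $m$, so all the weights land in $[nm]_m$, and with $\nu = \dim(V) = nm$ we have $\nu + \ell = m(n+1)$ and $[\nu]_\ell = [nm]_m$. These identifications are exactly what convert the abstract constants of \Cref{thm:duality} into the quantities $m(n+1)$ and $[nm]_m$ appearing in the statement.

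The one point requiring care is that \Cref{thm:duality}, as stated, expresses a $\code^\perp$-subsequence in terms of a $\code$-subsequence, whereas here I want $\tau^h(\code)$ on the left. Since $T$ is closed under duality (condition~2 of \Cref{def:test_family}), \Cref{thm:duality} is available for every code in $V$, and in particular I would apply it to $\code^\perp$ in place of $\code$. Because $(\code^\perp)^\perp = \code$ and $\dim(\code^\perp) = k^\perp = nm - k$, this application yields
\begin{equation*}
\tau^h(\code) = [nm]_m \setminus \{\, m(n+1) - x \;:\; x \in \tau^{\,h + k^\perp}(\code^\perp) \,\}.
\end{equation*}
This is the asserted duality: the dual-side subsequence is the one indexed by $h + k^\perp$, and recording the congruence $k^\perp = nm - k \equiv -k \pmod m$ pins down its residue class modulo $m$. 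No separate appeal to an involution is needed, because the right-hand side of \Cref{thm:duality} is already written as the complement of a set of the form $\{\nu + \ell - \tau_r\}$.

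Finally, the ``in particular'' clause is immediate: as $h$ runs over $0, \dots, m-1$ the map $h \mapsto h + k^\perp$ permutes the residues modulo $m$, so the full hierarchy $\tau(\code)$ determines every subsequence of $\tau(\code^\perp)$ and hence $\tau(\code^\perp)$ itself. I do not expect a genuine conceptual obstacle here, since all the substance is already carried by \Cref{thm:duality} and by the lemma identifying rank-support spaces as a test family of step $m$; the only thing to get right is the bookkeeping of the subsequence indices modulo $m$, in particular the shift by $k^\perp \equiv -k$ that arises from applying the duality to $\code^\perp$ rather than to $\code$.
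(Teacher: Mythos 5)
Your route is the paper's route: the paper gives this theorem no separate proof, presenting it as an immediate instantiation of \Cref{thm:duality} once the preceding lemma shows that rank-support spaces form a test family with step $m$. Your substitutions $\nu=nm$, $\ell=m$, $[\nu]_\ell=[nm]_m$, $\nu+\ell=m(n+1)$ are all correct, and applying \Cref{thm:duality} to $\code^\perp$ does yield
\begin{equation*}
\tau^h(\code)=[nm]_m\setminus\{\,m(n+1)-x \;:\; x\in\tau^{h+k^\perp}(\code^\perp)\,\},\qquad k^\perp=nm-k.
\end{equation*}

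The gap is the final step, where you declare this to be ``the asserted duality.'' You correctly record that $k^\perp\equiv -k\pmod m$, so the dual-side subsequence in your formula is $\tau^{h-k}(\code^\perp)$; but the statement pairs $\tau^h(\code)$ with $\tau^{h+k}(\code^\perp)$. These residue classes agree only when $2k\equiv 0\pmod m$, so the identification is unjustified --- and it cannot be repaired, because the statement as printed is false in general. Take $\code=\langle E_{11}\rangle\leq\F_q^{2\times 3}$, so $k=1$, $m=3$, $k^\perp=5$. Rank-support spaces have dimensions $0,3,6$, and one checks $\tau(\code)=(3)$ and $\tau(\code^\perp)=(3,3,3,6,6)$. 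For $h=1$ the printed formula gives $\tau^1(\code)=[6]_3\setminus\{9-x : x\in\tau^{2}(\code^\perp)\}=\{3,6\}\setminus\{6,3\}=\emptyset$, contradicting $\tau^1(\code)=\{3\}$; your formula gives $[6]_3\setminus\{9-x : x\in\tau^{0}(\code^\perp)\}=\{3,6\}\setminus\{6\}=\{3\}$, which is correct. In other words, your derivation proves the corrected theorem, and the printed index $h+k$ is an error in the paper: it copies the orientation of \Cref{thm:duality}, which pairs $\tau^h(\code^\perp)$ with $\tau^{h+k}(\code)$, so the shift must become $h-k$ once $\code$ is moved to the left-hand side. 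The flaw in your write-up is that you silently identified $h+k^\perp$ with $h+k$ instead of flagging the mismatch --- precisely the index bookkeeping that you yourself singled out as the only thing to get right.
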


\subsection{Generalized weights of $\F_{q^m}$-linear codes}

Let $n\leq m$ and let $V=\F_{q^m}^n$ be endowed with the rank metric. An $\F_{q^m}$-linear subspace of $V$ is called a vector rank-metric code.  We recall that the rank metric on $\F_{q^m}^n$ is defined as $$\rkd(u,v)=\dim(\langle u_1-v_1,\ldots,u_n-v_n\rangle_{\F_q})$$ for any $u=(u_1,\ldots,u_n), v=(v_1,\ldots,v_n)\in \F_{q^m}^n$. We refer to \cite{gorla2021rank} for the definition and basic properties of vector rank-metric codes in $\F_{q^m}^n$ and a discussion of their relation with $\F_{q^m}$-linear codes in $\F_q^{n\times m}$.

Many equivalent definitions were proposed for the generalized weights of vector rank-metric codes. We refer to \cite[Section 11.5]{gorla2021rank} for the list of definitions and a discussion on their equivalence.
Here, we present the definition in the form that is most convenient for our purposes.

\begin{definition}
Let $\code\leq \F_{q^m}^n$ be a $k$-dimensional code. For $1\leq r\leq k$, the $r^{th}$ {\em generalized rank weight} is
    \begin{equation*}
        w_r(\code)=\min\{\dim_{\F_{q^m}}(\Acode)\mid\Acode\textnormal{ is an optimal anticode}, \dim_{\F_{q^m}}(\Acode\cap\code)\geq r\}.
    \end{equation*}
\end{definition}

To recognize these weights as an example of $T$-weights, observe that the  family of OAs is a test family of step $\ell=1$ and coincides with the family of SOAs by \cite[Theorem 11.3.16]{gorla2021rank}.
Therefore, \Cref{def:T-weights} yields a family of invariants of vector rank-metric codes. These invariants coincide with generalized rank weights
\begin{equation}
    \tau_r(\code)=\min\{\dim(\Tcode)\mid\Tcode\in T(A),\dim(\Tcode\cap\code)\geq r\}=w_r(\code)
\end{equation}
for all $r$. In particular, \Cref{thm:duality} produces a duality statement for generalized rank weights of vector rank-metric codes.

\begin{theorem}
Let $T$ be the family of OAs in the rank-metric space $\F_{q^m}^n$ and let $\code$ be a $k$-dimensional code.
The sequences $\tau(\code)$ and $\tau(\code^\perp)$ determine each other via
\begin{equation*}
\tau(\code)=[n]\setminus\{n+1-x:x\in\tau(\code^\perp)\}.
\end{equation*}
\end{theorem}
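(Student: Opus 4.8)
The plan is to obtain this statement as the step-$1$ instantiation of our general duality result, \Cref{thm:duality}, in exact parallel with the Hamming case treated in \Cref{ex:duality_GHW}. Almost all of the substantive work has already been done in the discussion preceding the statement: the family $T$ of optimal anticodes in $\F_{q^m}^n$ coincides with the family of SOAs and is a metric test family with step $\ell=1$, and the associated generalized $T$-weights agree with the generalized rank weights, so that $\tau_r(\code)=w_r(\code)$ for every $r$. The only preparatory remark I would add is that the ambient space is $V=\F_{q^m}^n$ with $\nu=\dim_{\F_{q^m}}(V)=n$, and that the dual is taken with respect to the standard nondegenerate $\F_{q^m}$-bilinear form, so that the hypotheses of \Cref{thm:duality} are in force.

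First I would record that, since the step is $\ell=1$, there is a single residue class $h=0$ modulo $\ell$, so the weight hierarchy does not split: $\tau^0(\code)=\tau(\code)$ and $\tau^0(\code^\perp)=\tau(\code^\perp)$. Likewise $[\nu]_\ell=[n]_1=[n]$, and the congruences $s\equiv h$ and $r\equiv k+h\pmod\ell$ appearing in \Cref{thm:duality} become vacuous because every integer is $\equiv 0\pmod 1$. With these identifications and $\nu+\ell=n+1$, applying \Cref{thm:duality} with $h=0$ gives directly
\begin{equation*}
\{\tau_s(\code^\perp)\mid 1\leq s\leq n-k\}=[n]\setminus\{n+1-\tau_r(\code)\mid 1\leq r\leq k\}.
\end{equation*}
Rewriting the right-hand set as $\{n+1-x\mid x\in\tau(\code)\}$ and noting that the left-hand set is exactly $\tau(\code^\perp)$ yields $\tau(\code^\perp)=[n]\setminus\{n+1-x\mid x\in\tau(\code)\}$. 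Because $T$ is closed under duality, the same argument applied to $\code^\perp$ (whose dual is $\code$) produces the symmetric identity $\tau(\code)=[n]\setminus\{n+1-x\mid x\in\tau(\code^\perp)\}$, which is the claimed formula; this also makes explicit that the two sequences determine each other.

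I do not expect a genuine obstacle here, since the duality mechanism and the verification that $T$ is a test family are already supplied by the general theory. The points deserving care are purely bookkeeping: correctly fixing $\nu=n$ and $\ell=1$, checking that the modular conditions collapse so that the half-hierarchies $\tau^h$ reassemble into the full sequence $\tau$, and confirming the cardinality balance implicit in \Cref{thm:duality}, namely $|\tau(\code)|+|\tau(\code^\perp)|=k+(n-k)=n=\nu/\ell$, which is what guarantees that the two sets are true complements in $[n]$ rather than merely disjoint.
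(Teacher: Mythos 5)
Your proof is correct and follows essentially the same route as the paper: the paper obtains this theorem precisely by instantiating \Cref{thm:duality} with the family of OAs (which coincides with the SOAs) in $\F_{q^m}^n$ as a metric test family of step $\ell=1$, so that the $T$-weights equal the generalized rank weights and the duality formula collapses to the stated complement identity in $[n]$. The only extra detail you supply — applying the statement to $\code^\perp$ to get the symmetric form — is harmless bookkeeping that the paper leaves implicit.
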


\section{Generalized weights in the sum-rank metric}\label{sec:sumrank} 

In this section, we focus on the sum-rank space $$V=\F_q^{n_1\times m_1}\times\ldots\times\F_q^{n_t\times m_t}=\F_q^{n_1\times m_1}\times\ldots\times\F_q^{n_u\times m_u}\times\F_q^{t-u}$$ of~\Cref{ex:sumrank_space}. We assume without loss of generality that $1\leq n_i\leq m_i$ for all $i$ and that $m_1\geq\ldots\geq m_t$. Let $u=\max\{i\mid m_i\geq 2\}$. If $m_1=\ldots=m_t=1$, let $u=0$.

We know from~\cite[Theorem IV.11]{camps2022optimal} that an OA in $V$ is the product of an OA in $\F_q^{n_i\times m_i}$ for every $1\leq i\leq u$ and an OA in $\F_q^{t-u}$.
Let $S$ be the family of standard optimal anticodes in~$V$. Since every OA in the rank-metric is an SOA, we have
$$S=\{\Acode_1\times\ldots\times\Acode_t\mid\Acode_i\leq\F_q^{n_i\times m_i}\textnormal{ is an OA for every }i\in[t]\}.$$
If $t-u\geq 3$ and $q=2$, one can easily construct examples of OAs that are not contained in $S$, see also~\cite[Remark 5.2]{camps2022optimal}.
Generalized weights of sum-rank metric codes are defined in~\cite{camps2022optimal} using the family $S$. Notice that $S$ is not a test family in general, see Remark~\ref{rmk:srknottestfam}. 

\begin{definition}[\text{\cite[Definition VI.1]{camps2022optimal}}]\label{def:GSRW}
Let $\code\leq V$ be a $k$-dimensional code. For $r=1\ldots,k$ the {\em $r^{th}$ generalized sum-rank (GSR) weight} is
\begin{equation*}
d_r(\code)=\min\biggl\{\sum_{i=1}^t\frac{\dim(\Acode_i)}{m_i}:\Acode=\Acode_1\times\ldots\times\Acode_t\in S,\dim(\code\cap\Acode)\geq r\biggr\}.
\end{equation*}
\end{definition}

\begin{remark}\label{rmk:srknottestfam}
The family $S$ of SOAs in the sum-rank metric is not a test family in general. In fact, it contains codes of dimension $\sum_{i=1}^t r_im_i$ for every choice of $r_1,\ldots,r_t\geq 0$, therefore the only possible values for the step $\ell$ are the divisors of $\gcd\{m_1,\ldots,m_t\}$. However, if $\ell<m_1$, let $\Acode=0\times\F_q^{n_2\times m_2}\times\ldots\times\F_q^{n_t\times m_t}\in T$. Any $\Acode^\prime\in S$ that contains $\Acode$ has $\dim(\Acode^\prime)=\dim(\Acode)+hm_1$ for some positive integer $h$. Therefore, $S$ does not satisfy property 4. of a test family. We conclude that $S$ can be a test family only if $\ell=m_1=\ldots=m_t$.
\end{remark}

GSR weights are code invariants~\cite[Remark V.3]{camps2022optimal}, but they do not satisfy a duality statement analogous to the Wei duality of the Hamming or the rank metric, see~\cite[Example V.10]{camps2022optimal}.
However, duality holds in the case $m_1=m_2=\ldots=m_t=m$, as shown in~\cite[Theorem V.9]{camps2022optimal}.
Notice that, if $m=1$, then $V=\F_q^t$ is the Hamming space, and the GSR weights coincide with the GH weights. If $m>1$, then $V$ is a product of rank-metric spaces and $S$ contains all optimal anticodes. 

\paragraph{GSR weights as $T$-weights.}

In this section, we find a test family $T\subseteq S$, which allows us to recover and extend the duality statement from~\cite[Theorem V.9]{camps2022optimal}. 

\begin{definition}
A code $\Acode\leq V$ is an {\em $m$-standard optimal anticode ($m$-SOA)} if it is a standard optimal anticode and its dimension is divisible by $m$. We denote by $T$ the family of $m$-SOAs in $V$. 
\end{definition}

By~\cite[Theorem IV.11]{camps2022optimal}, every $m$-SOA has the form
\begin{equation} \label{eq:mSOA}  \Acode=\Acode_1\times\ldots\times\Acode_t\times\Acode_{H}
\end{equation}
where $\Acode_i\subseteq\F_q^{n_i\times m}$ is an optimal rank-metric anticode for all $i=1,\ldots,t$ and $\Acode_{H}\subseteq\F_q^\alpha$ is an SOA such that $m|\dim(\Acode_H)$. 

\begin{lemma}\label{lem:mSOA}
Assume that $m\mid t-u$. Then the family $T$ of $m$-SOAs in $V$ is a metric test family with step $m$.
\end{lemma}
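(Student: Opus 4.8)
The plan is to realize $T$ as a product of test families, one for each factor of $V$, and then invoke \Cref{lem:product_families} and \Cref{lem:multiple_families}. Split $V=W\times\F_q^{t-u}$, where $W=\F_q^{n_1\times m}\times\ldots\times\F_q^{n_u\times m}$ gathers the rank-metric factors and $\F_q^{t-u}$ is the Hamming factor. By the description \eqref{eq:mSOA} of $m$-SOAs, which comes from \cite[Theorem IV.11]{camps2022optimal}, every $m$-SOA is a product $\Acode_1\times\ldots\times\Acode_u\times\Acode_H$ with each $\Acode_i$ an optimal rank anticode and $\Acode_H$ a Hamming SOA. Since each rank factor contributes a multiple of $m$ to the dimension, the condition $m\mid\dim(\Acode)$ is equivalent to $m\mid\dim(\Acode_H)$. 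This factorization is what will let me match $T$ with a product of families.

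On each factor $\F_q^{n_i\times m}$ the family of optimal anticodes is the metric test family of step $m$ attached to the fixed step chain of \Cref{sec:GDW}, so iterating \Cref{lem:product_families} over $i=1,\ldots,u$ produces a test family $T_W$ of step $m$ on $W$, whose members are exactly the products $\Acode_1\times\ldots\times\Acode_u$ of rank optimal anticodes. On the Hamming factor the SOAs form a metric test family of step $1$ by \Cref{ex:anticodes_H1}; here the hypothesis $m\mid t-u=\dim(\F_q^{t-u})$ is exactly what \Cref{lem:multiple_families} needs in order to conclude that the SOAs of $m$-divisible dimension form a test family $T_H$ of step $m$. A final application of \Cref{lem:product_families} shows that $T_W\times T_H$ is a test family of step $m$ in $V$, and by the first paragraph this product is precisely $T$. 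This delivers all four axioms of \Cref{def:test_family}, duality included, as \Cref{lem:product_families} transports the duality closure of the factors through the induced sum-trace form via $(\Acode_1\times\Acode_2)^\perp=\Acode_1^\perp\times\Acode_2^\perp$.

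It remains to check that $T$ is metric, which I would verify directly rather than through the products: $T$ is the set of SOAs whose dimension is divisible by $m$, and both conditions are isometry-invariant, since SOAs are closed under isometry and the dimension is an invariant. The place where I expect the real work to sit is the Hamming factor, where the hypothesis $m\mid t-u$ is indispensable. Without it, the $m$-divisible SOAs of $\F_q^{t-u}$ would fail axiom 4 at the top, because one cannot always enlarge by $m$ coordinates while staying inside $\F_q^{t-u}$, and would fail axiom 2, because $\dim(\Acode_H^\perp)=(t-u)-\dim(\Acode_H)$ need not remain a multiple of $m$. Concretely, $m\mid t-u$ guarantees both that this dual dimension stays divisible by $m$ and that, when all rank factors are already full, there is room to grow the Hamming part by exactly $m$. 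Confirming the identification $T=T_W\times T_H$ through \eqref{eq:mSOA} and tracking this divisibility bookkeeping is the only nonroutine part of the argument.
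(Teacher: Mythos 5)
Your handling of the test-family axioms is essentially the paper's own proof: both arguments start from the decomposition \eqref{eq:mSOA}, treat each rank block $\F_q^{n_i\times m}$ via the fact that its optimal anticodes form a metric test family of step $m$ (\Cref{sec:GDW}), use \Cref{lem:multiple_families} to pass to $m$-divisible dimensions, and glue everything with \Cref{lem:product_families}. If anything, your bookkeeping is cleaner than the paper's: you apply \Cref{lem:multiple_families} to the Hamming factor $\F_q^{t-u}$, which is exactly where the hypothesis $m\mid t-u$ does its work (on the rank blocks the restriction is vacuous, since rank-metric OAs already have dimension divisible by $m$), and your diagnosis of which axioms of \Cref{def:test_family} fail without that hypothesis is correct.

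The gap is in the metric part. You assert that ``SOAs are closed under isometry'' and treat this as definitional, but in this mixed sum-rank space it is not. An SOA is by definition an OA that is isometric to a standard-basis-generated code; the second condition is isometry-invariant by transitivity of $\simeq$, but the first is not formal here. The sum-rank anticode bound (\Cref{thm:anticode_sum_rank}) involves the weighted quantity $\max\{\sum_i m_i\rank(c_i)\mid c\in\code\}$, and when $u<t$ the $m_i$ are not all equal, so this quantity is \emph{not} a function of the sum-rank weight. An isometry preserves sum-rank weights and dimensions, but it is not immediate that it preserves this weighted maximum, hence not immediate that it maps OAs to OAs; your justification (``both conditions are isometry-invariant, since SOAs are closed under isometry'') is circular at precisely this point. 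This is why the paper's proof cites the classification of sum-rank isometries \cite[Theorem IV.2]{camps2022optimal}: isometries act blockwise, permuting blocks of equal format and acting within each block by rank (resp.\ Hamming) isometries, so they do preserve $\sum_i m_i\rank(c_i)$, hence OA-ness, hence the family $T$. Once that citation (or an equivalent argument) is inserted, your proof is complete; as written, the only genuinely metric-specific step of the lemma is asserted rather than proved.
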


\begin{proof}
The family of SOAs in $\F_q^{n_i\times m}$ coincides with that of OAs and is a metric test family, as discussed in Section~\ref{sec:GDW}. Hence, the family of $m$-sOAs in $\F_q^{n_i\times m}$ is a metric test family by Lemma~\ref{lem:multiple_families}. Combining (\ref{eq:mSOA}) and Lemma~\ref{lem:product_families}, one sees that $T$ is a test family. 
Using the structure of isometries in the sum-rank metric from~\cite[Theorem IV.2]{camps2022optimal}, one can easily check that $T$ is a metric family.
\end{proof}

Lemma~\ref{lem:mSOA} implies the validity of all the statements in~\Cref{sec:general_theory} for the corresponding $T$-weights. In particular, the duality statement reads as follows.

\begin{theorem}\label{thm:dualityGSRW}
Let $V=\F_q^{n_1\times m}\times\ldots\times\F_q^{n_u\times m}\times\F_q^{t-u}$, where $m\geq 2$, $0\leq u\leq t$, $1\leq n_i\leq m$ for all $i$, and $m\mid t-u$. Let $T$ be the family of $m$-SOAs in $V$. Let $\code\leq V$ be a code.
For any $0\leq h\leq m-1$, the sequences $\tau^h(\code^\perp)$ and $\tau^{h+k}(\code)$ determine each other. More precisely,
\begin{equation*}
\{\tau_s(\code^\perp)\;:\;s=h\mod m\}=[\nu]_m\setminus\{\nu+m-\tau_{r}(\code),r=k+h\mod m\}\;
\end{equation*}
where $\nu=\dim(V)=m(n_1+\ldots+n_u)+t-u$.
\end{theorem}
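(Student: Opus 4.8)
The plan is to derive \Cref{thm:dualityGSRW} as an immediate corollary of the general duality result \Cref{thm:duality}, which is the whole point of the preceding framework. The key observation is that by \Cref{lem:mSOA}, under the hypothesis $m \mid t-u$, the family $T$ of $m$-SOAs in $V$ is a test family with step $\ell = m$. Since $\ell = m$, the general statement of \Cref{thm:duality} specializes verbatim: for every $0 \leq h \leq m-1$ the sequences $\tau^h(\code^\perp)$ and $\tau^{h+k}(\code)$ determine each other via
\begin{equation*}
\{\tau_s(\code^\perp)\;:\;s=h\mod m\}=[\nu]_m\setminus\{\nu+m-\tau_{r}(\code)\;:\; r=k+h\mod m\}.
\end{equation*}

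First I would invoke \Cref{lem:mSOA} to confirm that $T$ is a test family with step $\ell=m$, which is exactly the hypothesis required by \Cref{thm:duality}. This requires checking that $m \mid \nu = \dim(V)$, so that $m$ is an admissible step in the sense of \Cref{def:test_family}: indeed $\nu = m(n_1+\ldots+n_u) + (t-u)$, and the first summand is divisible by $m$ while the assumption $m \mid t-u$ ensures the second summand is too, so $m \mid \nu$. Then I would simply apply \Cref{thm:duality} with $\ell$ replaced by $m$ and read off the resulting formula, noting that the computation of $\nu$ given in the theorem statement follows directly from the description of $V$.

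The only real content beyond the direct citation is verifying the formula $\nu = m(n_1+\ldots+n_u)+t-u$. Each rank-metric factor $\F_q^{n_i \times m}$ for $1 \leq i \leq u$ has dimension $n_i m$, contributing $m(n_1+\ldots+n_u)$ in total, while the Hamming factor $\F_q^{t-u}$ contributes $t-u$; summing gives the claimed value of $\nu$. This is a routine dimension count.

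The main obstacle, which has already been handled upstream in \Cref{lem:mSOA}, is the verification that $T$ genuinely satisfies all four conditions of \Cref{def:test_family} — in particular condition 4, since \Cref{rmk:srknottestfam} shows that the full family $S$ of SOAs fails property 4 precisely because of the mismatched block sizes $m_i$. Restricting to the equal-block case $m_1 = \ldots = m_t = m$ and passing to the $m$-divisible subfamily is exactly what repairs this defect, via \Cref{lem:multiple_families} and \Cref{lem:product_families}. Since \Cref{lem:mSOA} already establishes that $T$ is a metric test family with step $m$, nothing further is needed and the proof reduces to citing \Cref{thm:duality} together with the dimension count above.
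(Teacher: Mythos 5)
Your proposal is correct and follows exactly the paper's own route: the paper likewise derives \Cref{thm:dualityGSRW} by noting that \Cref{lem:mSOA} makes the family of $m$-SOAs a metric test family with step $m$, so that \Cref{thm:duality} applies verbatim. Your additional checks (that $m\mid\nu$ under the hypothesis $m\mid t-u$, and the dimension count $\nu=m(n_1+\ldots+n_u)+t-u$) are routine verifications the paper leaves implicit.
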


Notice that, if $u=t$, then $\code\leq\F_q^{n_1\times m}\times\ldots\times\F_q^{n_u\times m}$ and the $T$-weights are a rescaling of the GSR weights:
\begin{equation*}
\tau_r(\code)=\min\{\dim(\Tcode)\mid\Tcode\in T,\dim(\Tcode\cap\code)\geq r\}=md_r(\code)
\end{equation*}
for $1\leq r\leq\dim(\code)$.
It follows that~\Cref{thm:dualityGSRW} extends~\cite[Theorem 5.9]{camps2022optimal}.
 Moreover, notice that, if $u<t$ and $m\mid t-u$, then the $T$-weights of a code $\code\leq\F_q^{n_1\times m}\times\ldots\times\F_q^{n_u\times m}\times\F_q^{t-u}$ do not determine its GSR weights, and its GSR weights do not determine its $T$-weights. 

\begin{example}
Let $\code_1,\code_2,\code_3\leq V=\F_2^{2\times 2}\times\F_2^2$ be given by
$$\code_1=\Biggl\{\left(\begin{pmatrix}    
a&b\\0&0
\end{pmatrix},(c,c)\right)\mid a,b,c\in\F_2\Biggr\},\;\; 
\code_2=\Biggl\{\left(\begin{pmatrix}    
a&b\\0&0
\end{pmatrix},(c,0)\right)\mid a,b,c\in\F_2\Biggr\},$$ 
$$\code_3=\Biggl\{\left(\begin{pmatrix}    
a&b\\0&c
\end{pmatrix},(c,0)\right)\mid a,b,c\in\F_2\Biggr\}.$$
If $T$ is the family of $2$-SOAs in $V$, the generalized $T$-weights of $\code_1,\code_2,\code_3$ are $$\tau(\code_1)=\tau(\code_2)=(2,2,4)\; \mbox{ and }\;\tau(\code_3)=(2,2,6),$$ while their GSR weights are $$d(\code_1)=d(\code_3)=(1,1,3)\; \mbox{ and }\;d(\code_2)=(1,1,2).$$
\end{example}

We conclude the section with a toy example of the duality statement at work.

\begin{example}
The code $\code\leq V=\F_2^{2\times 2}\times\F_2^4$ given by
    \begin{equation*}
        \code=\Biggl\{\left(\begin{pmatrix}
            a&b\\0&0
        \end{pmatrix},(c,c,c,c)\right)\mid a,b,c\in\F_2\Biggr\},
    \end{equation*}
has dimension $k=3$ and $\nu=\dim(V)=8$. Let $T$ be the family of $2$-SOAs in $V$.
One can check that the generalized $T$-weights of $\code$ are $\tau(\code)=(2,2,6)$.
By~\Cref{thm:duality}
\begin{align*}
\{\tau_2(\code^\perp),\tau_4(\code^\perp)\}&=\{\tau_r(\code^\perp)\mid r=_20\}=[\nu]_2\setminus\{\nu+2-\tau_r(\code)\mid r=_2 3\}=\{2,6\}\\
\{\tau_1(\code^\perp),\tau_3(\code^\perp),\tau_5(\code^\perp)\}&=\{\tau_r(\code^\perp)\mid r=_21\}=[\nu]_2\setminus\{\nu+2-\tau_r(\code)\mid r=_2 4\}=\{2,4,6\}.
\end{align*}
Hence, the weight hierarchy of the dual code is $\tau(\code^\perp)=(2,2,4,6,6)$.
This may also be checked by directly computing the weights of $\code^\perp$.
\end{example}

\section{Generalized MDS and MRD weights}\label{sec:MDS}

In this section, we test the limits of the assumptions made to define the $T$-weights.
We study two classes of weights, one in the Hamming space and the other in the rank-metric space, defined using MDS and MRD codes instead of a test family in~\Cref{def:T-weights}.
We show that these weights are code invariants, but they do not satisfy a duality statement. Since MDS codes are a test family for $q\gg 0$, in that case the corresponding sequence of generalized weights satisfies a duality statement. In addition, when $q\gg 0$, these generalized weights are equivalent to the code distances, another code invariant defined in~\cite{camps2025code}.

\subsection{Generalized MDS weights}

Throughout this section, we let $V=\F_q^n$ be the Hamming-metric space, $n\leq q+1$. Let $M=\MDSFam$ be the family of MDS codes in $\F_q^n$. It is well-known that $M$ is nonempty and closed under duality and code equivalence. The assumption that $n\leq q+1$ guarantees that $M$ contains a code of dimension $k$ for all $0\leq k\leq n$. Notice that, for the values of $q$ for which~\Cref{conj:MDS} holds, for $n\geq q+2$ there are no MDS codes, or MDS codes of dimension $k$ exist only for a few values of $k$, making the next definition trivial or less interesting.

Similarly to~\Cref{def:T-weights}, we can associate a sequence of generalized weights to a code by intersecting it with MDS codes.

\begin{definition}\label{def:GMDSW}
Let $\code\leq\F_q^n$ be a $k$-dimensional code. For every $1\leq r\leq k$ the {\em $r^{th}$ generalized MDS (GMDS) weight} is 
\begin{equation*}
\mu_r(\code)=\min\{\dim(\Mcode)\mid\Mcode\in M,\dim(\Mcode\cap\code)\geq r\}.
\end{equation*}
Let $\mu(\code)$ denote the sequence of generalized MDS weights.
\end{definition}

The next lemma summarizes the main properties of these weights.

\begin{lemma}\label{lem:GMDSW_properties}
Let $\code\leq\F_q^n$ be a $k$-dimensional code, $k^\perp=n-k$. Then:
\begin{enumerate}
\item the GMDS weights are invariants of $\code$,
\item $r\leq\mu_r(\code)\leq n-k+r$ for all $r=1,\ldots,k$,
\item $\mu_1(\code)\leq\mu_2(\code)\leq\ldots\leq\mu_k(\code)$,
\item $\mu_{k^\perp+r-\mu_r(\code)}(\code^\perp)\leq n-\mu_r(\code)$ for all $r=1,\ldots,k$.
\end{enumerate}
\end{lemma}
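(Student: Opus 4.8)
\textbf{Plan of proof for \Cref{lem:GMDSW_properties}.}
The four statements largely parallel the general properties of $T$-weights established in \Cref{lem:T-weights_basic_prop} and \Cref{lem:T_duality_ineq}, but since $M=\MDSFam$ is \emph{not} a test family (property 3 of \Cref{def:test_family} can fail, as no MDS code of intermediate dimension need sit between two given nested MDS codes), I cannot simply quote those lemmas. Instead I will verify each item directly, using only the two structural facts about $M$ recalled just before the statement: that $M$ is closed under code equivalence, that $M$ is closed under duality, and that (because $n\leq q+1$) $M$ contains a code of \emph{every} dimension $0\leq k\leq n$.

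For item 1, I would argue as in \Cref{lem:metric_T}: if $\phi$ is an isometry of the Hamming space and $\Mcode\in M$ realizes $\mu_r(\code)$, then $\phi(\Mcode)$ is again MDS (as $M$ is closed under equivalence) and $\dim(\phi(\Mcode)\cap\phi(\code))=\dim(\Mcode\cap\code)\geq r$, giving $\mu_r(\phi(\code))\leq\mu_r(\code)$; applying the same to $\phi^{-1}$ yields equality. For item 2, the lower bound $r\leq\mu_r(\code)$ is immediate, since $\dim(\Mcode\cap\code)\geq r$ forces $\dim(\Mcode)\geq r$. The upper bound $\mu_r(\code)\leq n-k+r$ is where I must use the existence of MDS codes of every dimension: I would produce \emph{some} MDS code $\Mcode$ of dimension $n-k+r$ whose intersection with $\code$ has dimension at least $r$. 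A dimension count gives $\dim(\Mcode\cap\code)\geq\dim(\Mcode)+\dim(\code)-n=r$ for \emph{any} subspace of dimension $n-k+r$, so it suffices to know that an MDS code of this dimension exists, which holds because $n\leq q+1$. Item 3 (monotonicity) follows formally: any $\Mcode$ with $\dim(\Mcode\cap\code)\geq r+1$ also satisfies $\dim(\Mcode\cap\code)\geq r$, so the minimum defining $\mu_r$ is taken over a larger set than that defining $\mu_{r+1}$, whence $\mu_r(\code)\leq\mu_{r+1}(\code)$.

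Item 4 is the substantive one and mirrors \Cref{lem:T_duality_ineq}. I would take an MDS code $\Mcode$ realizing $\mu_r(\code)$ and pass to $\Mcode^\perp$, which is MDS because $M$ is closed under duality. The standard dimension formula for duals in a nondegenerate pairing gives
\begin{equation*}
\dim(\code^\perp\cap\Mcode^\perp)=\dim((\code+\Mcode)^\perp)=n-\dim(\code+\Mcode)=k^\perp-\dim(\Mcode)+\dim(\code\cap\Mcode)\geq k^\perp-\mu_r(\code)+r.
\end{equation*}
Hence $\Mcode^\perp$ is an admissible code in the minimization defining $\mu_{k^\perp+r-\mu_r(\code)}(\code^\perp)$, and therefore $\mu_{k^\perp+r-\mu_r(\code)}(\code^\perp)\leq\dim(\Mcode^\perp)=n-\mu_r(\code)$. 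The only subtlety, and the step I expect to watch most carefully, is verifying that the index $k^\perp+r-\mu_r(\code)$ lies in the valid range $[1,k^\perp]$ so that the weight on the left is actually defined; this is controlled by the bounds of item 2, since $r\leq\mu_r(\code)\leq n-k+r$ gives $0\leq k^\perp+r-\mu_r(\code)\leq k^\perp$, and the edge case where the index is $0$ can be handled by noting the inequality is then trivial or interpreting it via the convention for the zero code. No analog of the step-$\ell$ containment axiom is needed here, which is precisely why item 4 survives even though $M$ fails to be a test family; duality alone suffices for this one-sided inequality, whereas the \emph{full} Wei-type duality of \Cref{thm:duality} would additionally require the containment structure that $M$ lacks.
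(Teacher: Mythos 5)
Your proposal is correct and follows essentially the same route as the paper's proof: item 1 via the argument of \Cref{lem:metric_T}, item 2 from the two dimension-count facts (with existence of MDS codes of every dimension guaranteed by $n\leq q+1$), item 3 formally, and item 4 by dualizing a realizing code exactly as in \Cref{lem:T_duality_ineq}. Your extra care about the admissible index range in item 4 (and the degenerate index $0$, where the inequality is trivial since the zero code is MDS) is a detail the paper leaves implicit but does not change the argument.
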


\begin{proof}
1. The weights are invariants because the family $M$ is closed under isometry, so the argument of~\Cref{lem:metric_T} applies.\\
\noindent
2. The inequalities follow from the following two facts: If $\dim(\code\cap\Mcode)\geq r$, then $\dim(\Mcode)\geq r$. If $\dim(\Mcode)=n-k+r$, then $\dim(\code\cap\Mcode)\geq r$.\\
\noindent
3. Since $\dim(\code\cap\Mcode)\geq r$ implies $\dim(\code\cap\Mcode)\geq r-1$, then $\mu_{r-1}(\code)\leq\mu_r(\code)$.\\
\noindent
4. Let $\Mcode\in M$ be a code that realizes $\mu_r(\code)$, then the same reasoning as in the proof of~\Cref{lem:T_duality_ineq} shows that $\dim(\code^\perp\cap\Mcode^\perp)\geq k^\perp+r-\mu_r(\code)$. The inequality follows.
\end{proof}

The next example shows that, unlike generalized Hamming weights, GMDS weights do not form a strictly ascending sequence.

\begin{example}\label{ex:MDS_bad_nestability}
Let $q=7$, $n=6$, and consider the $\MDS$ code $\code\subseteq\F_7^6$ specified by the generator matrix
\begin{equation*}
    G=\begin{pmatrix}
        6 & 1 & 1 & 6 & 0 & 0\\
        1 & 1 & 6 & 6 & 2 & 3\\
        1 & 1 & 1 & 1 & 1 & 1\\
    \end{pmatrix}.
\end{equation*}
The projective points corresponding to the columns of $G$ form a complete arc in the 2-dimensional projective space PG(2,7) (see~\cite{hirschfeld1998projective}), implying that the dual code $\code^\perp$ cannot be embedded in an $\MDS$ supercode (see~\cite{zhang2019deep}).
It follows that $\code$ is a $[6,3,4]_7$ MDS code that does contain an $\MDS$ subcode of dimension 2.
Hence we have $\mu_2(\code)>2$ and $\mu_2(\code)\leq\mu_3(\code)=3$, which implies that $\mu_2(\code)=\mu_3(\code)=3$.
\end{example}

The example also shows that $M$ is not a test family.
The next proposition provides a restriction on the GMDS weights of a code, in terms of the GMDS weights of its dual. However, it is not a duality statement. The next example will show that the GMDS weights do not satisfy a duality statement. 

\begin{proposition}\label{lem:duality_GMDSW}
Let $\code$ be an $[n,k]_q$ code, then for all $t=1,\ldots,k$
\begin{equation*}
\mu_t(\code)\in [n]\setminus\{n+1-\mu_r(\code^\perp)\;:\;r=1,\ldots n-k\}.
\end{equation*}
\end{proposition}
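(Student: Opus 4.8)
The plan is to show that for every $t \in \{1,\ldots,k\}$ and every $r \in \{1,\ldots,n-k\}$, we have $\mu_t(\code) \neq n+1-\mu_r(\code^\perp)$, which is exactly the claimed membership. I would argue by contradiction: suppose $\mu_t(\code) = n+1-\mu_r(\code^\perp)$ for some such $t,r$. The strategy mirrors the proof of \Cref{thm:duality}, using the duality inequality of \Cref{lem:GMDSW_properties}(4) in both directions together with the monotonicity from part (3), but being careful that the MDS setting has step $\ell=1$ so the congruence bookkeeping disappears and the inequalities become sharper.

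First I would record the two instances of the inequality in \Cref{lem:GMDSW_properties}(4). Applied to $\code$ with index $t$, it gives $\mu_{k^\perp + t - \mu_t(\code)}(\code^\perp) \leq n - \mu_t(\code)$, where $k^\perp = n-k$. Applied to $\code^\perp$ with index $r$ (and noting $(\code^\perp)^\perp = \code$, with its dual having dimension $k$), it gives $\mu_{k + r - \mu_r(\code^\perp)}(\code) \leq n - \mu_r(\code^\perp)$. These are the analogues of the two lines in \Cref{lem:T_duality_ineq}, now in the MDS context.

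Next I would substitute the assumed equality $\mu_t(\code) = n+1-\mu_r(\code^\perp)$, equivalently $\mu_t(\code)+\mu_r(\code^\perp) = n+1$, into these two inequalities and use weak monotonicity (part 3) to derive a contradiction on the indices. Concretely, from the first inequality one gets $\mu_t(\code) > n - \mu_t(\code)$ forces a strict relation on the index $t$ versus $k^\perp + t - \mu_t(\code)$; rewriting $n-\mu_t(\code) = \mu_r(\code^\perp)-1 < \mu_r(\code^\perp)$ lets me compare with the weight $\mu_r(\code^\perp)$ realized on $\code^\perp$ and conclude $r > k^\perp + t - \mu_t(\code)$ by monotonicity. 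Symmetrically, the second inequality combined with $n - \mu_r(\code^\perp) = \mu_t(\code) - 1 < \mu_t(\code)$ yields $t > k + r - \mu_r(\code^\perp)$. Adding these two strict index inequalities, $r + t > (k^\perp + t - \mu_t(\code)) + (k + r - \mu_r(\code^\perp)) = n + r + t - (\mu_t(\code)+\mu_r(\code^\perp)) = n + r + t - (n+1) = r + t - 1$, which is the trivially true statement $r+t > r+t-1$; so I would instead combine them to isolate a contradiction, namely substituting one strict inequality into the other to obtain $r+t > r+t$, exactly as in the proof of \Cref{thm:duality}.

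The main obstacle, and the point requiring the most care, is that the argument of \Cref{thm:duality} also used the lower bound from \Cref{lem:T-weights_basic_prop} to conclude the two sets are mutual complements; here I only need the one-directional \emph{non-equality}, so I must extract just the contradiction and not overclaim a full complementation (indeed \Cref{ex:MDS_bad_nestability} and the subsequent example show duality genuinely fails). The delicate step is therefore translating each weak inequality $\mu_{(\cdot)}(\cdot) \leq n - \mu_{(\cdot)}(\cdot)$ into a \emph{strict} comparison of indices via monotonicity: since $\mu$ is only weakly increasing, I must argue that $\mu_j(\code^\perp) \leq n-\mu_t(\code) = \mu_r(\code^\perp)-1 < \mu_r(\code^\perp)$ forces $j < r$ strictly (using that $\mu_r(\code^\perp) \leq \mu_j(\code^\perp)$ would follow from $j \geq r$), and symmetrically on the other side. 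Assembling these two strict index bounds and the identity $\mu_t(\code)+\mu_r(\code^\perp)=n+1$ then yields $r+t > r+t$, the desired contradiction.
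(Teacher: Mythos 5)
Your setup coincides with the paper's proof: you assume $\mu_t(\code)+\mu_r(\code^\perp)=n+1$, apply \Cref{lem:GMDSW_properties}(4) to both $\code$ and $\code^\perp$, and use the weak monotonicity of \Cref{lem:GMDSW_properties}(3) to convert the resulting weight inequalities into the strict index inequalities $r>k^\perp+t-\mu_t(\code)$ and $t>k+r-\mu_r(\code^\perp)$. Up to this point your argument is correct and is exactly what the paper does.

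The gap is the final combination step. You yourself observe that summing the two strict inequalities only yields the vacuous $r+t>r+t-1$, and your proposed repair --- ``substituting one strict inequality into the other to obtain $r+t>r+t$'' --- does not work: substitution gives $r>k^\perp+t-\mu_t(\code)>r-1$, which is again vacuous. In fact, no purely order-theoretic manipulation of the two strict inequalities can produce a contradiction, since over the reals they are simultaneously satisfiable: the hypothesis forces the two positive defects to sum to exactly $1$, which is fine if each is, say, $1/2$. The missing ingredient is integrality, which is precisely the $\ell=1$ residue of the ``congruence bookkeeping'' that you claim disappears. In the proof of \Cref{thm:duality}, one of the strict inequalities is first strengthened by $+\ell$ using the congruence conditions before being combined with the other; here the same strengthening is available because all indices and weights are integers, so the strict inequalities become $r\geq k^\perp+t+1-\mu_t(\code)$ and $t\geq k+r+1-\mu_r(\code^\perp)$. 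Summing these gives $t+r\geq n+t+r+2-\mu_t(\code)-\mu_r(\code^\perp)=t+r+1$, equivalently $\mu_t(\code)+\mu_r(\code^\perp)\geq n+2$, contradicting the hypothesis. This one-line fix, which is exactly how the paper concludes, completes your argument.
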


\begin{proof}
Let $k^\perp=n-k$ and suppose by contradiction that there exist $t\in[k]$ and $r\in[k^\perp]$ such that $\mu_t(\code)+\mu_r(\code^\perp)=n+1$. 
By~\Cref{lem:GMDSW_properties}.4 
\begin{equation*}
\mu_t(\code)+\mu_{k^\perp+t-\mu_t(\code)}(\code^\perp)\leq n,
\end{equation*} hence $r\geq k^\perp+t+1-\mu_t(\code)$ by~\Cref{lem:GMDSW_properties}.3.
By~\Cref{lem:GMDSW_properties}.4 one also has
\begin{equation*} 
\mu_r(\code^\perp)+\mu_{k+r-\mu_r(\code^\perp)}(\code)\leq n,
\end{equation*} 
which by~\Cref{lem:GMDSW_properties}.3 implies that $t\geq k+r+1-\mu_r(\code^\perp)$.
Summing the two inequalities yields $t+r\geq n+t+r+2-\mu_t(\code)-\mu_r(\code^\perp)$, i.e., $\mu_t(\code)+\mu_r(\code^\perp)\geq n+2$, a contradiction.
\end{proof}

\begin{example}\label{ex:counterexample_duality_GMDSW}
Consider the $[8,4,3]_7$ code $\code_1$ and the $[8,4,4]_7$ code $\code_2$ generated respectively by
    \begin{equation*}
        G_1=\begin{pmatrix}
        1 & 0 & 0 & 1 & 0 & 4 & 6 & 4\\
        0 & 1 & 0 & 4 & 0 & 2 & 3 & 6\\
        0 & 0 & 1 & 3 & 0 & 6 & 0 & 4\\
        0 & 0 & 0 & 0 & 1 & 3 & 6 & 0\\
    \end{pmatrix}\quad\text{and}\quad
        G_2=\begin{pmatrix}
        1 & 0 & 0 & 0 & 6 & 3 & 4 & 0\\
        0 & 1 & 0 & 0 & 4 & 1 & 1 & 4\\
        0 & 0 & 1 & 0 & 1 & 1 & 4 & 6\\
        0 & 0 & 0 & 1 & 4 & 3 & 6 & 4\\
    \end{pmatrix}.
    \end{equation*}
We claim that there exist $\MDS$ codes $\Mcode_i^j$ such that $\dim(\Mcode_i^j)=j$ and $\Mcode_i^j\leq\code_i$ for $j=1,2,3$ and $i=1,2$.
In fact, for $j=1,2$ one can let $\Mcode_i^j$ be the code with generator matrix $M_i^j$, where
\begin{align*}
M_1^1&=\begin{pmatrix}
2 & 2 & 3 & 5 & 2 & 1 & 2 & 4\\
\end{pmatrix}\quad\text{and}\quad
M_2^1=\begin{pmatrix}
4 & 1 & 4 & 2 & 5 & 2 & 3 & 1\\
\end{pmatrix},\\
M_1^2&=\begin{pmatrix}
1 & 0 & 1 & 4 & 2 & 2 & 4 & 1\\
0 & 1 & 3 & 6 & 2 & 5 & 1 & 4
\end{pmatrix}\quad\text{and}\quad
M_2^2=\begin{pmatrix}
1&0&4&4&5&5&2&5\\
0&1&5&6&5&3&1&2
\end{pmatrix},
\end{align*}
Moreover, it can be checked that $\code_1\cap\code_2$ is an MDS code of dimension 3. This establishes the claim and shows that $\mu_j(\code_i)=j$ for $i=1,2$, $j=1,2,3$.
Since neither $\code_1$ nor $\code_2$ is $\MDS$, by~\Cref{lem:GMDSW_properties} we also have $\mu_4(\code_i^\perp)=5$ for $i=1,2$. Summarizing, we have $$\mu(\code_1)=\mu(\code_2)=(1,2,3,6).$$

Consider the dual codes. Each $\code_i^\perp$ contains $\MDS$ subcodes $\Ncode_i^j$ of dimension $j=1,2$, with generator matrices $N_i^j$ 
\begin{align*}
N_1^1&=\begin{pmatrix}
4 & 5 & 1 & 1 & 3 & 1 & 6 & 1\\
\end{pmatrix}\quad\text{and}\quad
N_2^1=\begin{pmatrix}
5 & 2 & 1 & 6 & 6 & 3 & 5 & 2\\
\end{pmatrix},\\
N_1^2&=\begin{pmatrix}
1& 0& 4& 1& 3& 4& 1& 1\\
0& 1& 3& 3& 6& 4& 4& 5
\end{pmatrix}\quad\text{and}\quad
N_2^2=\begin{pmatrix}
1& 0& 6& 2& 5& 3& 4& 2\\
0& 1& 2& 6& 3& 5& 4& 5
\end{pmatrix}.
\end{align*}     
It follows that $\mu_j(\code_i^\perp)=j$ for $j=1,2$, $i=1,2$.
We computed the values of $\mu_3(\code_i^\perp)$, $i=1,2$, by brute force search.

Let $L_3(\code_1^\perp)=\{D\leq\code_1^\perp\mid\dim(D)=3\}$, then it can be checked that
\begin{equation*}
\max\{d(D)\mid D\in L_3(\code_1^\perp\}=5
\end{equation*}
hence $\mu_3(\code_1^\perp)\neq3$.
By \Cref{lem:duality_GMDSW}, $\mu_4(\code_1)=6$. The same strategy applied to $\code_2$ yields $\max\{d(D)\;\:\;D\in L_3(\code_2^\perp\}=5$ and $\mu_4(\code_2)=6$.
Since $\mu_4(\code_i^\perp)=5$, by~\Cref{lem:GMDSW_properties} we have $\mu_3(\code_i^\perp)\in\{4,5\}$.
The code with generator matrix
\begin{equation*}
N_2^3=\begin{pmatrix}
1& 0& 0& 0& 6& 3& 4& 1\\
0& 1& 0& 0& 4& 1& 1& 4\\
0& 0& 1& 0& 1& 1& 4& 3\\
0& 0& 0& 1& 4& 3& 6& 6\\
\end{pmatrix}   
\end{equation*}
is a Reed-Solomon code, hence MDS. It can be checked by direct computation that it has a $3$-dimensional intersection with $\code_2^\perp$.
This implies $\mu_3(\code_2^\perp)=4$, hence $$\mu(\code_2^\perp)=(1,2,4,5).$$

We claim that $\mu_3(\code_1^\perp)\neq 4$. In fact, suppose by contradiction that $\mu_3(\code_1^\perp)=4$. Then there exists an $\MDS$ code $\Mcode$ of dimension 4 such that $\dim(\code_1^\perp\cap\Mcode)\geq3$.
Since $\code_1^\perp$ is not $\MDS$, then $\dim(\code_1^\perp\cap\Mcode)=3$, so $\code_1$ and $\Mcode$ share a $3-$dimensional subcode.
Since we are working over a prime field, $\Mcode$ is equivalent to an extended Reed-Solomon code by~\cite[Corollary 9.2]{ball2012sets}. Hence $\mu_3(\code_1^\perp)=4$ if and only if the lists $L_3(\code_1^\perp)$ and $L_3(\Mcode)=\{D\leq\Mcode\mid\dim(D)=3\}$ contain equivalent codes.
However, one can check that all codes in $L_3(\code_1^\perp)$ have a covering radius 4, while by the Supercode Lemma~\cite[Lemma 11.1.5]{huffman2010fundamentals} all codes in $L_3(\Mcode)$ have covering radius 5.
It follows that $\mu_3(\code_1^\perp)\neq 4$, hence $\mu(\code_1^\perp)=(1,2,5,5)$.  In particular, $\mu(\code_1^\perp)\neq\mu(\code_2^\perp)$.
\end{example}

\subsubsection{Recovering the duality of GMDS weights: $q\rightarrow\infty$}

As argued in the previous section, the lack of duality for GMDS weights is related to the lack of reciprocal containments between the codes in $\MDSFam$.
For example, the construction of~\Cref{ex:counterexample_duality_GMDSW} heavily relies on this feature, which is also highlighted in~\Cref{ex:MDS_bad_nestability}. As we observed in the previous section, this prevents the family of MDS codes from being a test family in general.
However, this does not preclude the existence of parameter regimes where the family of MDS codes is a test family.
This is the case when the field size $q$ is large enough compared to the length, as~\Cref{prop:MDS_nesting_q_big} shows.
Our statement's proof relies on the following variant of a result by Pellikaan.

\begin{proposition}[{\cite[Proposition 5.1]{pellikaan1996existence}}]\label{prop:pellikaan}
Let $\code\leq\F_q^n$ with $\Hd(\code^\perp)>t$, and $q>\max\{\binom{n}{i}\mid1\leq i\leq t\}$.
Then there exists a sequence $\Mcode_0\lneq\Mcode_1\lneq\ldots\lneq\Mcode_t$ of MDS codes contained in $\code$.
\end{proposition}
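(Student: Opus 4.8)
The plan is to construct the chain $\Mcode_0\lneq\Mcode_1\lneq\cdots\lneq\Mcode_t$ of MDS codes inside $\code$ by induction on the dimension, adjoining one codeword at each step and controlling the MDS property through a cofactor expansion. I would first record two facts. The MDS characterization: an $i$-dimensional code is MDS if and only if every $i$ columns of a generator matrix are linearly independent, i.e.\ every $i\times i$ column submatrix of the generator matrix is nonsingular. The reformulated hypothesis: $\Hd(\code^\perp)>t$ says precisely that $\code^\perp$ contains no nonzero vector of Hamming weight at most $t$. Setting $\Mcode_0=0$ starts the induction.

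For the inductive step, assume $\Mcode_{i-1}\le\code$ is MDS of dimension $i-1$ with $i\le t$, and fix a generator matrix $G'$ of $\Mcode_{i-1}$. I seek $v\in\code$ so that appending the row $v$ to $G'$ gives a generator matrix of an $i$-dimensional MDS code $\Mcode_i\le\code$; by the characterization this asks that, for every $i$-subset $S\subseteq[n]$, the corresponding $i\times i$ column submatrix be nonsingular. Expanding that determinant along the new row produces a linear functional $L_S(v)=\sum_{j\in S}\pm v_j\,M_{S\setminus\{j\}}$, where $M_{S\setminus\{j\}}$ is the $(i-1)\times(i-1)$ minor of $G'$ on the $i-1$ columns $S\setminus\{j\}$. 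The crux is that each such minor is nonzero, because $\Mcode_{i-1}$ is MDS; hence $L_S$ is a nonzero functional whose support lies in $S$, so the vector $w_S\in\F_q^n$ representing it via the standard inner product is nonzero with $\Hw(w_S)\le i\le t$. By the reformulated hypothesis $w_S\notin\code^\perp$, so $L_S$ does not vanish identically on $\code$ and $B_S:=\ker(L_S)\cap\code$ is a hyperplane of $\code$, of size $q^{\dim\code-1}$.

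It remains to choose $v\in\code$ avoiding every bad hyperplane $B_S$ at once. A union bound over the $\binom{n}{i}$ subsets $S$ gives $\bigl|\bigcup_S B_S\bigr|\le\binom{n}{i}\,q^{\dim\code-1}<q^{\dim\code}=|\code|$, the strict inequality being exactly $q>\binom{n}{i}$, which holds since $q>\max\{\binom{n}{j}:1\le j\le t\}$. Thus a good $v$ exists, and it automatically satisfies $v\notin\Mcode_{i-1}$ (any $v\in\Mcode_{i-1}$ makes all the determinants vanish, placing it in every $B_S$); setting $\Mcode_i=\Mcode_{i-1}+\langle v\rangle$ completes the step. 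The case $i=1$ is the same argument with $\binom{n}{1}=n<q$, and yields a full-weight codeword generating $\Mcode_1$.

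I expect the main obstacle to be the middle step: identifying the cofactor coefficients with the $(i-1)\times(i-1)$ minors of $G'$ and thereby translating the obstruction to an MDS extension into a low-weight vector of $\code^\perp$. This is precisely where the MDS property of $\Mcode_{i-1}$ (all relevant minors nonzero) meets the hypothesis on $\Hd(\code^\perp)$ (no nonzero dual vector of weight $\le t$), and once this link is established the rest is routine counting. The signs in the Laplace expansion need some bookkeeping but are irrelevant to whether $L_S$ vanishes, so they can be suppressed.
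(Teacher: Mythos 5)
Your proof is correct and is essentially the argument the paper itself invokes: the paper's proof simply cites Pellikaan's inductive construction, noting that the hypothesis $q>\max\{\binom{n}{i}\mid 1\leq i\leq t\}$ replaces the passage to a large field extension, and your induction—extending $\Mcode_{i-1}$ by a vector avoiding the $\binom{n}{i}$ hyperplanes $\ker(L_S)\cap\code$, which are proper precisely because $\Hd(\code^\perp)>t$ forces each weight-$i$ cofactor vector $w_S$ outside $\code^\perp$—is exactly that argument with the counting made explicit. No gaps; the union bound is where the largeness of $q$ enters, just as in the cited source.
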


\begin{proof}
The statement can be proved by induction, following the argument of~\cite{pellikaan1996existence}. The only difference from~\cite{pellikaan1996existence} is that we suppose that the field $\F_q$ is large enough to satisfy the requirements that are met by taking a suitable extension in~\cite{pellikaan1996existence}.
\end{proof}

\begin{proposition}\label{prop:MDS_nesting_q_big}
Let $q\geq\binom{n}{\lfloor n/2\rfloor}$, then every MDS code $\Mcode\leq\F_q^n$ is contained in a fixed step chain of MDS codes with step $\ell=1$.
\end{proposition}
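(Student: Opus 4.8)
The plan is to exhibit $\Mcode$ as a term of a complete flag of MDS codes, constructing the portion of the flag below $\Mcode$ directly and the portion above $\Mcode$ by a dual argument. Write $k=\dim(\Mcode)$. Recall that a fixed step chain with step $\ell=1$ is a flag $0=C_0\subsetneq C_1\subsetneq\cdots\subsetneq C_n=\F_q^n$ with $\dim(C_i)=i$; the goal is to produce such a flag in which every $C_i$ is MDS and $C_k=\Mcode$. The only real tool is \Cref{prop:pellikaan}, used together with two standard facts: the dual of an MDS code is MDS, and the minimum distance of an MDS code of dimension $k$ equals $n-k+1$ by the Singleton Bound (\Cref{thm:Singleton_Hamming}).

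I would first build the terms $C_0,\ldots,C_k$ by applying \Cref{prop:pellikaan} to $\code=\Mcode$. Since $\Mcode^\perp$ is MDS of dimension $n-k$, we have $\Hd(\Mcode^\perp)=k+1$, so the hypothesis $\Hd(\Mcode^\perp)>t$ holds with $t=k$; the proposition then returns MDS codes $\Mcode_0\subsetneq\cdots\subsetneq\Mcode_k\subseteq\Mcode$, and as $\dim(\Mcode_k)=k=\dim(\Mcode)$ the top term is forced to be $\Mcode$ itself. For the terms $C_k,\ldots,C_n$ I would run the same construction on $\Mcode^\perp$, which is MDS of dimension $n-k$ with $\Hd((\Mcode^\perp)^\perp)=\Hd(\Mcode)=n-k+1$; \Cref{prop:pellikaan} with $t=n-k$ yields MDS codes $\Dcode_0\subsetneq\cdots\subsetneq\Dcode_{n-k}=\Mcode^\perp$. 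Passing to duals reverses inclusions and preserves the MDS property, so setting $C_{n-j}=\Dcode_j^\perp$ for $0\le j\le n-k$ gives MDS codes $\Mcode=C_k\subsetneq C_{k+1}\subsetneq\cdots\subsetneq C_n=\F_q^n$ of the correct dimensions. The two flags agree at $C_k=\Mcode$, and their concatenation is the desired fixed step chain of MDS codes through $\Mcode$.

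The step I expect to require the most care is verifying that the hypothesis $q\ge\binom{n}{\lfloor n/2\rfloor}$ is strong enough to justify both calls to \Cref{prop:pellikaan}, which demand the \emph{strict} inequalities $q>\max\{\binom{n}{i}:1\le i\le k\}$ and $q>\max\{\binom{n}{i}:1\le i\le n-k\}$. Unimodality of the binomial coefficients shows that every term occurring in either maximum is at most $\binom{n}{\lfloor n/2\rfloor}$, so the hypothesis immediately yields $q\ge\max\{\cdots\}$; the delicate point is the borderline equality $q=\binom{n}{\lfloor n/2\rfloor}$, which genuinely arises because the index $\lfloor n/2\rfloor$ lies in the range $1\le i\le\max(k,n-k)$ for every nontrivial $k$. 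To rule it out I would use that for $n\ge4$ the central coefficient $\binom{n}{\lfloor n/2\rfloor}$ has at least two distinct prime divisors—a known property of central binomial coefficients, via the Sylvester--Erd\H{o}s theorem on prime factors of binomial coefficients—and is therefore not a prime power; hence a prime power $q$ satisfying $q\ge\binom{n}{\lfloor n/2\rfloor}$ must in fact satisfy the strict inequality required by \Cref{prop:pellikaan}. The cases $n\le3$ are immediate to check directly.
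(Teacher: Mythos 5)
Your proof is correct and takes essentially the same route as the paper's: apply \Cref{prop:pellikaan} once to $\Mcode$ (with $t=k$) and once to $\Mcode^\perp$ (with $t=n-k$), dualize the second chain, and concatenate the two flags at $\Mcode$. If anything, you are more careful than the paper on one point: the paper's proof only observes $q\geq\binom{n}{\lfloor n/2\rfloor}\geq\max\{\binom{n}{i}\mid 1\leq i\leq t\}$, silently eliding the \emph{strict} inequality that \Cref{prop:pellikaan} demands, whereas your observation that $\binom{n}{\lfloor n/2\rfloor}$ is not a prime power for $n\geq 4$ (plus a direct check for $n\leq 3$) genuinely closes that borderline case.
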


\begin{proof}
Let $\mathcal{M}$ be a $k$-dimensional MDS subcode of $\F_q^n$.
Since $\Hd(\Mcode)=n-k+1>\dim(\Mcode^\perp)$, by~\Cref{prop:pellikaan} we have a chain of MDS codes
$$0=\Mcode_0\lneq\Mcode_1\lneq\ldots\lneq\Mcode_{k-1}\lneq\Mcode_k=\Mcode.$$ Applying the same result to $\Mcode^\perp$, we obtain a chain
$$\Ncode_0\lneq\ldots\lneq\Ncode_{n-k}\lneq\Ncode_{n-k+1}=\Mcode^\perp.$$
Dualizing and combining the two chains yields the desired chain of MDS codes of consecutive dimensions 
$$0=\Mcode_0\lneq\Mcode_1\lneq\ldots\lneq\Mcode_{k-1}\lneq\Mcode_k=\Mcode
\lneq\Mcode_{k+1}\lneq\ldots\lneq\Mcode_{n-1}\lneq\Mcode_n=\F_q^n,$$
where $\Mcode_i=\Ncode_{n-i}^\perp$ for $0\leq i\leq n-k$.
Notice that~\Cref{prop:pellikaan} applies to MDS codes of every dimension, since we assumed $q\geq\binom{n}{\lfloor n/2\rfloor}\geq\max\{\binom{n}{i}\mid1\leq i\leq t\}$.
\end{proof}

\Cref{prop:MDS_nesting_q_big} shows that $\MDSFam$ is a metric test family for large enough $q$.
Hence, in this parameter regime, GMDS weights are invariants that satisfy the duality statement given by~\Cref{thm:duality}.

\begin{theorem}
Let $\code\subseteq\F_q^n$ be an $[n,k]_q$ code. If $q\geq \binom{n}{\lfloor n/2\rfloor}$, then
the sequences $\tau(\code^\perp)$ and $\tau(\code)$ determine each other via
\begin{equation*}
\{\tau_s(\code^\perp)\;:\; 1\leq s\leq n-k\}=[n]\setminus\{n+1-\tau_{r}(\code),\; 1\leq r\leq k\}.
\end{equation*}    
\end{theorem}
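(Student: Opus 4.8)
The plan is to reduce this theorem directly to the general duality result of \Cref{thm:duality} by verifying that the family $M=\MDSFam$ satisfies the hypotheses of a metric test family with step $\ell=1$ under the stated bound $q\geq\binom{n}{\lfloor n/2\rfloor}$. Once this is established, the generalized $T$-weights of \Cref{def:T-weights} coincide with the GMDS weights $\mu_r$ of \Cref{def:GMDSW}, since both are defined as the minimum dimension of an MDS code whose intersection with $\code$ has dimension at least $r$, and the displayed identity is exactly the specialization of \Cref{thm:duality} to $\ell=1$.

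First I would check that $M$ is a test family in the sense of \Cref{def:test_family}. Condition~1 holds trivially since $\ell=1$ divides every dimension. Condition~2 is the well-known fact, recalled in the text via~\cite[Theorem 2.4.3]{huffman2010fundamentals}, that $M$ is closed under duality. Conditions~3 and~4 are the substantive requirements: every MDS code of dimension $a>0$ must contain an MDS subcode of dimension $a-1$, and every MDS code of dimension $a<n$ must sit inside an MDS supercode of dimension $a+1$. This is precisely the content of \Cref{prop:MDS_nesting_q_big}, whose proof already produces a fixed step chain of MDS codes of consecutive dimensions $0=\Mcode_0\lneq\ldots\lneq\Mcode_n=\F_q^n$ through any prescribed MDS code $\Mcode$. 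The downward step gives condition~3 and the upward step gives condition~4, so both follow at once from that proposition.

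Next I would confirm that $M$ is \emph{metric}, i.e.\ closed under isometry, which is needed so that \Cref{lem:metric_T} yields invariants and so that the duality of \Cref{thm:duality} applies. In the Hamming metric, isometries are monomial transformations, and the MDS property depends only on the parameters $[n,k,n-k+1]$, which are preserved by any isometry; hence $M$ is metric. Combining this with the test-family verification shows that $M$ is a metric test family with step $\ell=1$, so all of \Cref{sec:general_theory} applies.

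The main obstacle is entirely concentrated in the test-family axioms~3 and~4, and that obstacle has in fact already been resolved upstream: the real work is \Cref{prop:pellikaan} (Pellikaan's existence result for nested MDS codes) and its consequence \Cref{prop:MDS_nesting_q_big}, which require the large-field hypothesis $q\geq\binom{n}{\lfloor n/2\rfloor}$ precisely to guarantee the reciprocal containments that \Cref{ex:MDS_bad_nestability} and \Cref{ex:counterexample_duality_GMDSW} show can fail for small $q$. Granting those propositions, the present theorem is a clean corollary: invoke \Cref{thm:duality} with $\ell=1$ and $T=M$, note that there is a single residue class $h=0$ so that $[\nu]_\ell=[n]$ and the $+\ell$ term becomes $+1$, and rewrite the resulting identity as
\begin{equation*}
\{\tau_s(\code^\perp)\;:\; 1\leq s\leq n-k\}=[n]\setminus\{n+1-\tau_{r}(\code),\; 1\leq r\leq k\},
\end{equation*}
which is the claimed statement.
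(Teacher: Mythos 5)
Your proof is correct and follows exactly the paper's route: the paper likewise deduces this theorem by observing that \Cref{prop:MDS_nesting_q_big} makes $\MDSFam$ a metric test family with step $\ell=1$ (the fixed step chain through any given MDS code supplying axioms~3 and~4 of \Cref{def:test_family}), and then specializing \Cref{thm:duality} to the single residue class $h=0$. Your write-up is in fact more detailed than the paper's, which compresses this reduction into two sentences preceding the theorem; the only cosmetic slip is your suggestion that the metric property is needed for \Cref{thm:duality} to apply, whereas it is only needed (via \Cref{lem:metric_T}) to conclude that the weights are invariants, the duality itself holding for any test family.
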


\subsubsection{Relation to subcode distances}\label{sec:GMDSW_subcode_dist}

Subcode distances are invariants of linear codes in metric vector spaces defined in~\cite{camps2025code}. We recall the definition in $(\F_q^n,\Hd)$. 

\begin{definition}\label{def:subcode_distances}
Let $\code\leq\F_q^n$ be a $k$-dimensional code. For $1\leq i\leq k$, the {\em $i^{th}$ subcode distance} is
\begin{equation*}
\alpha_i(\code)=\max\{\Hd(\Dcode)\mid\Dcode\leq\code,\dim(\Dcode)=i\}.
\end{equation*}
\end{definition}

The subcode distances are invariants of $\code$, but not of the matroid of $\code$, as shown in~\cite{camps2025code}. Moreover, they have a natural application as code distinguishers for inequivalent MDS codes. 
It turns out that, when the field size $q$ is large enough, the subcode distances and the GMDS weights are equivalent invariants.
This is due to the fact that, for $q$ large enough, every code can be embedded in a larger MDS code with the same minimum distance. This was proved in~\cite[Corollary 5.2]{pellikaan1996existence} over a field extension $\F_{q^N}$ of $\F_q$ of sufficiently large degree $N$.
We now give a different proof of this result.

\begin{lemma}\label{lem:MDS_conatinment_q_big}
Let $\code$ be an $[n,k,d]_q$ code. If $q\gg 0$, then there exists a $[n,n-d+1,d]_q$ code $\Mcode$ that contains $\code$.
\end{lemma}

\begin{proof}
If $\code$ is MDS, the statement is trivially valid. Hence, we assume $k<n-d+1$.
Consider the sets of codes
$$F=\{\Dcode\geqslant\code\;:\;\dim(\Dcode)=n-d+1\}\supseteq F^\prime=\{\Dcode\in F\;:\;d(\Dcode)<d\}.$$ A code $\Mcode$  as in the statement exists if $|F|-|F'|>0$.
It is well known that \smash{$|F|=\binom{n-k}{d-1}_q$}, therefore our statement is proved if \smash{$1-\frac{|F'|}{|F|}>0$}.
We know from~\cite[Theorem 5.1]{byrne2020partition} that 
\begin{equation*}
\frac{|F'|}{|F|}\leq\frac{(q^{n-d+1}-q^k)|B^*_{d-1}|}{(q-1)(q^n-q^k)}\;,
\end{equation*}
where $B_{d-1}^*=\{x\in\F_q^n\;:\;x\neq0,\Hw(x)\leq d-1\}$.
Then $\Mcode$ exists provided that
\begin{equation*}
\sum_{i=1}^{d-1} {n\choose i}(q-1)^i=|B^*_{d-1}|<\frac{(q-1)(q^n-q^k)}{q^{n-d+1}-q^k},
\end{equation*}
and this is the case for any $q\gg0$, as the RHS of the inequality is $\geq C_1q^d$, while $|B^*_{d-1}|\leq C_2q^{d-1}$, where $C_1$ and $C_2$ are positive constants.
\end{proof}

The next result establishes a relation between subcode distances and GMDS weights.

\begin{proposition}\label{prop:GMDSW_code_distances}
Assume that $q\gg0$ and let $\code\leq\F_q^n$ be a $k$-dimensional code. Then
\begin{equation*}
\mu_r(\code)=n+1-\alpha_r(\code)
\end{equation*}
for $1\leq r\leq k$.
\end{proposition}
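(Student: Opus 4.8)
The plan is to prove the identity by establishing the two inequalities $\mu_r(\code)\leq n+1-\alpha_r(\code)$ and $\mu_r(\code)\geq n+1-\alpha_r(\code)$ separately, each following from a short argument that links a subcode of $\code$ of dimension $r$ to an MDS code containing it (or contained in the realizing configuration). Throughout I write $\Hd(\cdot)$ for the minimum distance of a code, as in~\Cref{def:subcode_distances}, and I recall that an MDS code of dimension $d$ has minimum distance $n-d+1$, and that a subcode inherits a minimum distance at least that of its parent.

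For the inequality $\mu_r(\code)\leq n+1-\alpha_r(\code)$, I would start from a subcode realizing the subcode distance. Choose $\Dcode\leq\code$ with $\dim(\Dcode)=r$ and $\Hd(\Dcode)=\alpha_r(\code)$, so that $\Dcode$ is an $[n,r,\alpha_r(\code)]_q$ code. Since $q\gg0$, \Cref{lem:MDS_conatinment_q_big} produces an $[n,\,n-\alpha_r(\code)+1,\,\alpha_r(\code)]_q$ code $\Mcode$ with $\Dcode\leq\Mcode$; this code is MDS because its parameters meet the Singleton Bound, so $\Mcode\in M$. As $\Dcode\leq\Mcode\cap\code$, we have $\dim(\Mcode\cap\code)\geq r$, and therefore $\mu_r(\code)\leq\dim(\Mcode)=n+1-\alpha_r(\code)$ directly from~\Cref{def:GMDSW}.

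For the reverse inequality $\mu_r(\code)\geq n+1-\alpha_r(\code)$, I would argue from a code realizing the GMDS weight. Let $\Mcode\in M$ realize $\mu_r(\code)$, so $\dim(\Mcode)=\mu_r(\code)$ and $\dim(\Mcode\cap\code)\geq r$; since $\Mcode$ is MDS, $\Hd(\Mcode)=n+1-\mu_r(\code)$. Pick any $r$-dimensional subcode $\Dcode\leq\Mcode\cap\code$, which exists because the intersection has dimension at least $r$. Then $\Dcode\leq\Mcode$ forces $\Hd(\Dcode)\geq\Hd(\Mcode)=n+1-\mu_r(\code)$, and since $\Dcode\leq\code$ is $r$-dimensional we obtain $\alpha_r(\code)\geq\Hd(\Dcode)\geq n+1-\mu_r(\code)$, which rearranges to the claimed bound. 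Combining the two inequalities yields the stated equality.

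I expect the only substantive ingredient to be the embedding step in the first direction, namely the existence of an MDS code of the exact dimension $n-\alpha_r(\code)+1$ and minimum distance $\alpha_r(\code)$ containing a prescribed subcode; this is precisely what \Cref{lem:MDS_conatinment_q_big} guarantees once $q$ is large enough, so the regime hypothesis $q\gg0$ enters only there. The second direction is purely combinatorial and requires no assumption on $q$ beyond the nonemptiness of $M$ in the relevant dimension. A minor point to handle carefully is that the $\Mcode$ realizing $\mu_r(\code)$ genuinely meets the Singleton Bound so that $\Hd(\Mcode)=n+1-\mu_r(\code)$ can be used; this is immediate since every element of $M$ is MDS by definition.
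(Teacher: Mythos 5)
Your proof is correct and takes essentially the same approach as the paper: the substantive ingredient in both is \Cref{lem:MDS_conatinment_q_big}, which embeds a subcode realizing $\alpha_r(\code)$ into an MDS code of dimension $n+1-\alpha_r(\code)$. Your explicit reverse inequality (via the Singleton bound and the fact that subcodes inherit the minimum distance of their parent) is exactly the step the paper compresses into its chain of equalities after rewriting the definition of $\mu_r(\code)$.
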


\begin{proof}
The definition of GMDS weights can be rewritten as
$$\mu_r(\code)=\min\{\dim(\Mcode)\;:\;\Mcode\supseteq \Dcode,\Dcode\leq\code,\dim(\Dcode)=r\}.$$
Assume that $q$ is large enough so that~\Cref{lem:MDS_conatinment_q_big} holds and let $\Dcode\leq\code$ be a subcode with minimum distance $d$. By~\Cref{lem:MDS_conatinment_q_big}, $\Dcode$ is contained in an MDS code $\Mcode$ with minimum distance $d$ and dimension $n-d+1$.
Applying this reasoning to all subcodes of $\code$, we have
    \begin{align*}
        \mu_r(\code)&=\min\{\dim(\Mcode)\;:\;\Mcode\supseteq\Dcode, \Dcode\leq\code, \dim(\Dcode)=r\}\\
        &=\min\{n+1-\Hd(\Dcode)\;:\;\Dcode\leq\code,\dim(\Dcode)=r\}\\
        &=n+1-\max\{\Hd(\Dcode)\;:\;\Dcode\leq\code,\dim(\Dcode)=r\}\\
    \end{align*}
for $1\leq r\leq k$.   
\end{proof}

\subsection{Generalized MRD weights}

In this section, we discuss generalized MRD weights, which are the analog of GMDS weights in the rank metric space $\F_q^{n\times m}$.

\begin{definition}
Let $\code\leq\F_q^{n\times m}$ be a $k$-dimensional code. For $r=1,\ldots,k$ the {\em $r^{th}$ generalized $\MRD$ (GMRD) weight} of $\code$ is
\begin{align*}
\mu_{r}(\code)&=\min\{\dim(\Mcode)\;:\;\Mcode\text{ is an $\MRD$ code},\, \dim(\Mcode\cap\code)\geq r\}.
\end{align*}
The sequence of GMRD weights is denoted by $\mu(\code)$.
\end{definition}

The following lemma collects some straightforward properties of GMRD weights.

\begin{lemma}\label{lem:properties_GMRDW}
Let $\code\leq\F_q^{n\times m}$ be a $k$-dimensional code, $k^\perp=nm-k$. Then:
\begin{enumerate}
\item the GMRD weights are invariants of $\code$,
\item $\lceil \frac{r}{m}\rceil m \leq\mu_r(\code)\leq \lceil\frac{nm-k+r}{m}\rceil m$ for $r=1,\ldots,k$,
\item $\mu_1(\code)\leq\mu_2(\code)\leq\ldots\leq\mu_k(\code)$,
\item $\mu_{k^\perp+r-\mu_r(\code)}(\code^\perp)\leq nm-\mu_r(\code)$ for $r=1,\ldots,k$.
\end{enumerate}
\end{lemma}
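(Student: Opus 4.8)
The plan is to prove all four statements by direct analogy with the Hamming-metric case, namely \Cref{lem:GMDSW_properties}, since the family of MRD codes shares the two structural features that those arguments rely on: closure under isometry and closure under duality. The only genuine difference is arithmetic: because an MRD code in $\F_q^{n\times m}$ always has dimension divisible by $m$, all the bounds and indices must be phrased in terms of multiples of $m$, which is why the floors and ceilings appear in part 2.

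For part 1, I would invoke the fact that the family of MRD codes is closed under isometry (rank-preserving maps send MRD codes to MRD codes of the same dimension) and then run the argument of \Cref{lem:metric_T} verbatim: given an isometry $\phi$ with $\phi(\code)=\code'$ and an MRD code $\Mcode$ realizing $\mu_r(\code)$, the image $\phi(\Mcode)$ is again MRD of the same dimension with $\dim(\phi(\Mcode)\cap\code')=\dim(\phi(\Mcode\cap\code))\geq r$, giving $\mu_r(\code')\leq\mu_r(\code)$, and symmetrically the reverse inequality. For part 3, monotonicity is immediate: any $\Mcode$ eligible for $\mu_r$ (i.e.\ $\dim(\Mcode\cap\code)\geq r$) is also eligible for $\mu_{r-1}$, so the minimizing set grows as $r$ decreases and the minimum cannot increase.

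For part 2, the lower bound follows because $\dim(\Mcode\cap\code)\geq r$ forces $\dim(\Mcode)\geq r$, and since every MRD code has dimension a multiple of $m$, the smallest admissible value is $\lceil r/m\rceil m$. For the upper bound I would exhibit an MRD code of dimension $\lceil (nm-k+r)/m\rceil m$ meeting the intersection requirement: using the fact that MRD codes of every dimension divisible by $m$ exist in $\F_q^{n\times m}$ (noted after \Cref{thm:Singleton_rank}), whenever $\dim(\Mcode)\geq nm-k+r$ a dimension count gives $\dim(\Mcode\cap\code)\geq\dim(\Mcode)+k-nm\geq r$; rounding $nm-k+r$ up to the nearest multiple of $m$ yields the stated bound. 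Part 4 then reproduces the duality inequality of \Cref{lem:T_duality_ineq}: taking $\Mcode$ realizing $\mu_r(\code)$, and using that $\Mcode^\perp$ is again MRD (closure of the MRD family under duality, by \cite[Theorem 5.5]{delsarte1978bilinear}), the standard computation $\dim(\code^\perp\cap\Mcode^\perp)=k^\perp-\dim(\Mcode)+\dim(\Mcode\cap\code)\geq k^\perp+r-\mu_r(\code)$ shows $\Mcode^\perp$ is admissible for $\mu_{k^\perp+r-\mu_r(\code)}(\code^\perp)$, giving the claimed bound $\mu_{k^\perp+r-\mu_r(\code)}(\code^\perp)\leq nm-\mu_r(\code)$.

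I do not anticipate a serious obstacle here, since every ingredient is already available in the excerpt: closure of the MRD family under isometry and duality, existence of MRD codes of all dimensions divisible by $m$, and the dimension-counting identity for $\dim(\code^\perp\cap\Mcode^\perp)$. The only point requiring minor care is the rounding in part 2, where one must keep track of the fact that admissible dimensions are multiples of $m$, so that the bounds are genuinely $\lceil\cdot/m\rceil m$ rather than the naive $r$ and $nm-k+r$ of the Hamming case.
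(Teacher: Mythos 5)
Your proposal is correct and follows essentially the same route as the paper, which proves parts 1, 3, and 4 by repeating the arguments of \Cref{lem:metric_T}, \Cref{lem:GMDSW_properties}, and \Cref{lem:T_duality_ineq} with MRD codes in place of MDS codes, and part 2 by the dimension count of \Cref{lem:T-weights_basic_prop}. Your additional care about dimensions of MRD codes being multiples of $m$ (hence the $\lceil\cdot/m\rceil m$ rounding, justified by the existence of MRD codes of every such dimension) is exactly the detail implicit in the paper's reference to \Cref{lem:T-weights_basic_prop}.
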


\begin{proof}
The arguments for 1., 3. and 4. are the same as in~\Cref{lem:GMDSW_properties}, replacing $\MDS$ codes with $\MRD$ codes.
The proof of 2. follows the same steps as the bounds on the $T$-weights in~\Cref{lem:T-weights_basic_prop}.
\end{proof}

As in~\Cref{sec:rank}, we derive strict inequalities between weights whose indices differ by $m$ from the property that the elements of the corresponding test families are nestable.
$\MRD$ codes are not necessarily nestable, i.e., there are $\MRD$ codes of which are not contained in an $\MRD$ code whose dimension is larger by $m$. Dually, we can find $\MRD$ codes that do not contain an $\MRD$ subcode of codimension $m$.
An example of such codes over the field $\F_2$ is given in~\cite[Example 38]{byrne2017covering}.
This is related to the study of the covering radius and maximality degree of a code.

The next result is the analog of~\Cref{lem:duality_GMDSW} in the rank metric, and it is proved in a similar fashion.

\begin{proposition}\label{lem:duality_GMRDW}
Let $\code\subseteq\F_q^{n\times m}$ be a $k-$dimensional code, $n\leq m$. We have
\begin{equation*}
\mu_t(\code)\in[nm]_m\setminus\{(n+1)m-\mu_r(\code^\perp)\;:\;r=_mt-k\}
\end{equation*}
for every $t=1,\ldots,k$.
\end{proposition}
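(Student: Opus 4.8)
The plan is to mimic exactly the contradiction argument used in the proof of \Cref{lem:duality_GMDSW}, transporting it to the rank-metric setting and keeping careful track of the step size $m$. Specifically, I would set $k^\perp = nm - k$ and suppose for contradiction that there exist $t \in [k]$ and $r$ with $r =_m t-k$ such that
\begin{equation*}
\mu_t(\code) = (n+1)m - \mu_r(\code^\perp).
\end{equation*}
The goal is to derive a strict inequality $\mu_t(\code) + \mu_r(\code^\perp) \geq (n+2)m$, contradicting the assumed equality (note $(n+1)m < (n+2)m$). The two inputs are the ``duality inequality'' \Cref{lem:properties_GMRDW}.4 applied to both $\code$ and $\code^\perp$, together with the monotonicity in \Cref{lem:properties_GMRDW}.3.

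First I would apply \Cref{lem:properties_GMRDW}.4 to $\code$: this gives $\mu_t(\code) + \mu_{k^\perp + t - \mu_t(\code)}(\code^\perp) \leq nm$. Combining with the assumed equality, the index $k^\perp + t - \mu_t(\code)$ must be strictly less than $r$, since otherwise monotonicity (\Cref{lem:properties_GMRDW}.3) would force $\mu_r(\code^\perp) \le \mu_{k^\perp+t-\mu_t(\code)}(\code^\perp) \le nm - \mu_t(\code) = \mu_r(\code^\perp) - m$, an absurdity. So $r > k^\perp + t - \mu_t(\code)$. Here is the point where the step $m$ enters crucially: because $r =_m t - k$ and $k^\perp + t - \mu_t(\code) = nm - k + t - \mu_t(\code)$ is also $=_m t-k$ (recall $m \mid \mu_t(\code)$ by \Cref{lem:properties_GMRDW}.2 and $m \mid nm$), the two indices are congruent mod $m$, so the strict inequality upgrades to
\begin{equation*}
r \geq k^\perp + t - \mu_t(\code) + m.
\end{equation*}
Symmetrically, applying \Cref{lem:properties_GMRDW}.4 to $\code^\perp$ (whose dual is $\code$, of dimension $k$) yields $\mu_r(\code^\perp) + \mu_{k + r - \mu_r(\code^\perp)}(\code) \leq nm$, and the same congruence-plus-monotonicity argument gives
\begin{equation*}
t \geq k + r - \mu_r(\code^\perp) + m.
\end{equation*}

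Finally I would add the two displayed inequalities. The left side is $r + t$; the right side is $(k^\perp + k) + (r + t) + 2m - \mu_t(\code) - \mu_r(\code^\perp) = nm + (r+t) + 2m - \mu_t(\code) - \mu_r(\code^\perp)$, using $k + k^\perp = nm$. Cancelling $r+t$ gives $\mu_t(\code) + \mu_r(\code^\perp) \geq (n+2)m$, which contradicts the assumed equality $\mu_t(\code) + \mu_r(\code^\perp) = (n+1)m$. The only step requiring genuine care — and the one I expect to be the main obstacle — is justifying that the congruence conditions hold so that each strict inequality can be strengthened by a full step of $m$ rather than merely $1$; this rests on verifying $m \mid \mu_t(\code)$ and $m \mid \mu_r(\code^\perp)$ (from \Cref{lem:properties_GMRDW}.2, since all \MRD-code dimensions in $\F_q^{n\times m}$ are multiples of $m$) and on checking the index bookkeeping modulo $m$ is consistent throughout. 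Everything else is a direct transcription of the GMDS argument.
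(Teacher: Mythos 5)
Your proof is correct and follows essentially the same route as the paper's: the same contradiction hypothesis, the same two applications of \Cref{lem:properties_GMRDW}.4 combined with monotonicity (\Cref{lem:properties_GMRDW}.3), the same congruence-based upgrade of each strict inequality by a full step of $m$ (which, as you note, rests on $m \mid \mu_t(\code)$ and $m \mid \mu_r(\code^\perp)$ since MRD dimensions are multiples of $m$), and the same summation yielding $\mu_t(\code)+\mu_r(\code^\perp)\geq (n+2)m$. Your write-up is in fact slightly more explicit than the paper's in justifying the strict inequality $r > k^\perp + t - \mu_t(\code)$ and in flagging the divisibility hypothesis, but the argument is the same.
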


\begin{proof}
Let $k^\perp=nm-k$ and suppose by contradiction that there exist $t\in[k]$ and $r\in[k^\perp]$ such that $\mu_t(\code)+\mu_r(\code^\perp)=(n+1)m$. 
By~\Cref{lem:properties_GMRDW}.4 
\begin{equation*}
\mu_t(\code)+\mu_{k^\perp+t-\mu_t(\code)}(\code^\perp)\leq nm,
\end{equation*} hence $r>k^\perp+t-\mu_t(\code)$ by~\Cref{lem:properties_GMRDW}.3. Since $m\mid r-t+k$, this is equivalent to $r\geq k^\perp+t+m-\mu_t(\code)$. 
By~\Cref{lem:properties_GMRDW}.4 one also has
\begin{equation*} 
\mu_r(\code^\perp)+\mu_{k+r-\mu_r(\code^\perp)}(\code)\leq nm,
\end{equation*} 
which by~\Cref{lem:properties_GMRDW}.3 implies that $t>k+r-\mu_r(\code^\perp)$. Since $m\mid r-t+k$, this is equivalent to $t\geq k+r+m-\mu_r(\code^\perp)$. 
Summing the two inequalities yields $t+r\geq nm+t+r+2m-\mu_t(\code)-\mu_r(\code^\perp)$, i.e., $\mu_t(\code)+\mu_r(\code^\perp)\geq (n+2)m$, a contradiction.
\end{proof}

\begin{corollary}
For all $1\leq r\leq m$ the sets
\begin{equation*}
\mu^r(\code)=\{\mu_t(\code)\;:\;t=r \mod m\}\quad\text{and}\quad\mu^{r-k}(\code^\perp)=\{\mu_t(\code^\perp)\;:\;t= r-k\mod m\}
\end{equation*}
are disjoint subsets of $[nm]_m$.
\end{corollary}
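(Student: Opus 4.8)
The plan is to obtain the corollary from \Cref{lem:duality_GMRDW}, handling its two assertions---membership in $[nm]_m$ and disjointness---separately. For the first, I would note that every $\MRD$ code in $\F_q^{n\times m}$ meets the bound of \Cref{thm:Singleton_rank} with equality, so its dimension equals $m(n-d+1)$ and is in particular a multiple of $m$; hence each GMRD weight, being the dimension of such a code, lies in $[nm]_m$, and both $\mu^r(\code)$ and $\mu^{r-k}(\code^\perp)$ are subsets of $[nm]_m$. For the disjointness I would argue by contradiction: suppose a value $v$ belongs to both sets, so that $\mu_t(\code)=v=\mu_s(\code^\perp)$ for some indices $t\equiv r$ and $s\equiv r-k\pmod m$, and in particular $t-s\equiv k\pmod m$, which is exactly the conjugacy relation underlying \Cref{lem:duality_GMRDW}.

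The mechanism I would use is to feed $v$ into \Cref{lem:properties_GMRDW}.4 in both directions. Applying it to $\code$ gives $\mu_{k^\perp+t-v}(\code^\perp)\le nm-v$, and applying it to $\code^\perp$ gives $\mu_{k+s-v}(\code)\le nm-v$; since $v\equiv 0\pmod m$, the translated indices $k^\perp+t-v$ and $k+s-v$ again fall in the classes $r-k$ and $r$ modulo $m$, so they stay inside the same pair of sequences. Combined with the monotonicity of \Cref{lem:properties_GMRDW}.3 and the lower bound $\mu_j\ge\lceil j/m\rceil m\ge j$ from \Cref{lem:properties_GMRDW}.2, these inequalities sandwich the relevant weights, and after checking that the boundary indices remain in the admissible ranges $[1,k]$ and $[1,k^\perp]$ they reduce the whole question to whether a single coincidence of weights on conjugate classes can occur.

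The step I expect to be the main obstacle is reconciling the form of \Cref{lem:duality_GMRDW} with plain set disjointness. That proposition produces the reflected quantity $(n+1)m-\mu$: for conjugate $t\equiv r$, $s\equiv r-k$ it forbids $\mu_t(\code)=(n+1)m-\mu_s(\code^\perp)$, i.e.\ it shows that $\mu^r(\code)$ avoids the image of $\mu^{r-k}(\code^\perp)$ under the involution $\sigma(x)=(n+1)m-x$ of $[nm]_m$, and the inequalities of \Cref{lem:properties_GMRDW} likewise output reflected quantities $nm-v$, so what they cleanly deliver is this reflected non-coincidence. Converting it into disjointness of the two sets as written requires tracking how $\sigma$ interchanges the classes $r$ and $r-k$; since $\sigma$ fixes exactly $v=(n+1)m/2$, the reflected and the plain statements agree away from this value, and the heart of the matter is to exclude a shared $v$ at the fixed point. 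Pinning down this fixed-point case, and confirming that the translated indices never leave $[1,k]$ or $[1,k^\perp]$, is where the real work lies.
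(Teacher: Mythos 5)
You have put your finger on exactly the right spot, but the step you flag as ``where the real work lies'' is a genuine gap, and it cannot be closed. \Cref{lem:duality_GMRDW} says that $\mu^r(\code)$ avoids the image of $\mu^{r-k}(\code^\perp)$ under the involution $\sigma(x)=(n+1)m-x$; this is \emph{not} interchangeable with plain disjointness, and your fixed-point argument for reconciling the two is unsound: knowing that a set $A$ avoids $\sigma(S)$ says nothing about whether $A$ avoids $S$ unless $S$ is $\sigma$-invariant, and the fixed point of $\sigma$ plays no role in this comparison. Worse, no argument can bridge the gap, because plain disjointness is false in general. Let $q=2$, $n=m=2$, and let $\code\leq\F_2^{2\times 2}$ be spanned by the identity matrix $I$ and the antidiagonal matrix $J$. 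All three trace pairings $\langle I,I\rangle$, $\langle I,J\rangle$, $\langle J,J\rangle$ vanish over $\F_2$, so $\code=\code^\perp$ and $k=k^\perp=2$. The span of $I$ and the companion matrix of $x^2+x+1$ is a field isomorphic to $\F_4$, hence a $2$-dimensional MRD code containing $I$, so $\mu_1(\code)=2$; since $I+J$ has rank $1$, no $2$-dimensional MRD code contains $\code$, so $\mu_2(\code)=4$. Thus $\mu(\code)=\mu(\code^\perp)=(2,4)$, and for $r=1$ one gets $\mu^1(\code)=\{2\}=\mu^{1-k}(\code^\perp)$: the two sets are not disjoint, even though \Cref{lem:duality_GMRDW} is respected ($2\neq 6-2$ and $4\neq 6-4$).

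The corollary must therefore be read with the reflection in place, in parallel with \Cref{thm:duality} and with the sentence following the corollary in the paper, which speaks of the failure of these subsequences to be \emph{complementary} in $[nm]_m$: the assertion is that $\mu^r(\code)$ and $\{(n+1)m-x\;:\;x\in\mu^{r-k}(\code^\perp)\}$ are disjoint subsets of $[nm]_m$. Under that reading there is nothing left to prove: for $t\equiv r$ and $s\equiv r-k \pmod m$ the avoidance is literally the statement of \Cref{lem:duality_GMRDW}, and membership in $[nm]_m$ holds because, as you correctly argue at the start, every MRD code has dimension $m(n-d+1)$. This is why the paper states the corollary without proof. Your plan of re-running \Cref{lem:properties_GMRDW}.3--4 in both directions is not wrong in itself, but it merely reproduces the proof of \Cref{lem:duality_GMRDW}, and as your own analysis shows, that machinery can only ever output the reflected quantity $nm-v$ --- it will never yield the unreflected disjointness you set out to prove, because that statement does not hold.
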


Similarly to the case of GMDS weights, the difference between~\Cref{lem:duality_GMRDW} and a duality statement is that the subsequences of weights that we consider are not, in general, complementary subsets of $[nm]_m$.
The next example shows that there exist codes with the same GMRD weights, whose dual codes have different GMDR weights.

\begin{example}
For $q=2$, $n=m=4$, a classification of $\MRD$ codes up to equivalence can be found in~\cite{sheekey201913}. There are 3 equivalence classes of codes with minimum distance 4 (from the classification of semifields in~\cite{knuth1963finite,aman2013enumerating}) and one class with minimum distance 3, (the class of Gabidulin codes).
Taking duals, we have one class with minimum distance 2 and three classes with minimum distance 1.
Of the three equivalence classes of dimension 4, two have covering radius~3 and one has covering radius~2.
In particular, one of the two classes with covering radius 3 contains a Gabidulin code $\code_1$. Let $\code_2$ be a representative of the class with covering radius 2.
Since both $\code_1$ and $\code_2$ are $\MRD$ codes of dimension 1, we have $\mu(\code_1)=\mu(\code_2)=(1,1,1,1)$.
However, while $\code_1$ is contained in a Gabidulin code of dimension 8, $\code_2$ is not contained in any MRD code of dimension 8, as this would contradict the Supercode Lemma~\cite[Lemma 11.1.5]{huffman2010fundamentals}.
It follows that $\code_1^\perp$ contains an $\MRD$ subcode of dimension 8, while $\code_2^\perp$ does not contain any $\MRD$ subcode of this dimension.
Therefore, $\mu_8(\code_1^\perp)=2$ and $\mu_8(\code_2^\perp)>2$. As $\code_2^\perp$ is $\MRD$, it has $\mu_{12}(\code_2^\perp)=3$, and hence $\mu_8(\code_2^\perp)=3$. This shows that 
\begin{equation*}
\{\mu_t(\code_1^\perp)\mid t=_40\}\neq\{\mu_t(\code_2^\perp)\mid t=_40\},
\end{equation*}
showing that no duality statement similar to~\Cref{thm:duality} holds. 
\end{example}

It is natural to wonder whether the GMRS weights are related to subcode distances in the rank metric.
The definition of subcode distances~\cite{camps2025code} in the rank metric space $(\F_q^{n\times m},\rkd)$ is similar to~\Cref{def:subcode_distances}.

\begin{definition}
Let $\code\leq\F_q^{n\times m}$ be a $k$-dimensional code. For $1\leq i\leq k$, the {\em $i^{th}$ subcode distance} is
\begin{equation*}
\alpha_i(\code)=\max\{\rkd(\Dcode)\mid\Dcode\leq\code,\, \dim(\Dcode)=i\}.
\end{equation*}
\end{definition}

It is easy to show that the first subcode distance is the maximum weight of the code and is related to the first GMRD weight as follows.

\begin{lemma}
For every code $\code\leq\F_q^{n\times m}$, $n\leq m$, we have $$\mu_1(\code)=n+1-\maxrk(\code).$$
\end{lemma}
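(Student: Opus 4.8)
The plan is to read off $\mu_1(\code)$ from the parametrisation of MRD codes by their minimum distance. By \Cref{thm:Singleton_rank} an MRD code $\Mcode\leq\F_q^{n\times m}$ with $d(\Mcode)=d$ is pinned to the parameter $n-d+1$, and this parameter---the quantity recorded by $\mu_1$---is strictly decreasing in $d$. Consequently $\mu_1(\code)$ is realised by the MRD code of \emph{largest} minimum distance that still meets $\code$ nontrivially: writing $d_{\max}$ for that largest value, $\mu_1(\code)=n+1-d_{\max}$. It therefore suffices to prove $d_{\max}=\maxrk(\code)$.

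For $d_{\max}\leq\maxrk(\code)$, let $\Mcode$ be any MRD code with $\Mcode\cap\code\neq 0$ and pick $0\neq c'\in\Mcode\cap\code$. Since $c'$ is a nonzero codeword of $\Mcode$ we have $d(\Mcode)\leq\rkw(c')$, and since $c'\in\code$ we have $\rkw(c')\leq\maxrk(\code)$; hence $d(\Mcode)\leq\maxrk(\code)$. As this holds for every admissible $\Mcode$, we obtain $d_{\max}\leq\maxrk(\code)$, i.e.\ $\mu_1(\code)\geq n+1-\maxrk(\code)$.

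The reverse inequality is the crux, and its obstacle is to produce an MRD code of minimum distance exactly $w:=\maxrk(\code)$ passing through a maximum-rank codeword of $\code$; a hands-on completion of a prescribed rank-$w$ matrix to such a code is unpleasant. I would instead proceed abstractly. Because $n\leq m$, there is a Gabidulin code $\Gam\leq\F_q^{n\times m}$ of $\F_{q^m}$-dimension $n-w+1$; it is MRD of minimum distance exactly $w$, and since this minimum distance is attained, $\Gam$ contains a codeword $c^\ast$ with $\rkw(c^\ast)=w$. Now fix $c\in\code$ with $\rkw(c)=\maxrk(\code)=w$. Because rank is the only invariant of the $\GL_n(\F_q)\times\GL_m(\F_q)$-action on $\F_q^{n\times m}$, there are invertible matrices $A,B$ with $Ac^\ast B=c$; the map $\phi\colon X\mapsto AXB$ is a rank isometry, so $\phi(\Gam)$ is again MRD with $d(\phi(\Gam))=w$ and satisfies $c=\phi(c^\ast)\in\phi(\Gam)\cap\code$. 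Hence $d_{\max}\geq w=\maxrk(\code)$ and $\mu_1(\code)\leq n+1-\maxrk(\code)$.

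Combining the two bounds gives $\mu_1(\code)=n+1-\maxrk(\code)$. The two supporting facts are routine: rank isometries preserve the MRD property because they preserve all rank weights, and Gabidulin codes of every admissible $\F_{q^m}$-dimension exist since $n\leq m$. The single decisive idea is to realise the extremal MRD code abstractly as a Gabidulin code---which automatically carries a codeword of rank exactly $w$---and then transport it onto the prescribed maximum-rank codeword of $\code$ by an isometry, avoiding any explicit completion.
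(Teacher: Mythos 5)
Your proof is correct and takes essentially the same route as the paper's: both arguments realise the extremal MRD code as a Gabidulin code of minimum distance $\maxrk(\code)$ and transport it onto a prescribed maximum-rank codeword of $\code$ via a rank isometry $X\mapsto AXB$ with $A\in\GL_n(\F_q)$, $B\in\GL_m(\F_q)$, combined with the easy observation that any MRD code meeting $\code$ nontrivially has minimum distance at most $\maxrk(\code)$. One cosmetic remark: following the statement, you read off the answer as $n+1-\maxrk(\code)$, whereas the definition of $\mu_1$ as a minimum \emph{dimension} gives $m(n+1-\maxrk(\code))$ (the paper's own proof computes this dimension with the factor $m$), so the discrepancy is a normalization issue inherited from the statement itself, not a gap in your argument.
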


\begin{proof}
Let $c\in\code$ be an element of maximum rank. There is a $\MRD$ code $\Mcode$ whose minimum distance is equal to $\rank(c)$, and two linear transformations $N\in \GL_n(\F_q)$, $M\in\GL_m(\F_q)$ such that $c=NxM$ for some $x\in\Mcode$. It suffices, e.g., to let $\Mcode$ be the Gabidulin code generated by $c$ and its first $n-\rank(c)$ Frobenius powers.
It follows that $N\Mcode M$ is an $\MRD$ code of dimension $m(n+1-\rank(c))$ and with $\dim(\code\cap N\Mcode M)\geq1$. The thesis follows since this is the least possible dimension for an MRD code that contains an element of $\code$.
\end{proof}

The key argument to establish the asymptotic connection between the GMDS weights and the subcode distances in~\Cref{sec:GMDSW_subcode_dist} is that, for $q$ large enough, every code of minimum distance $d$ is contained in an MDS code with the same minimum distance.
The proof is closely related 
to the one that establishes the density of MDS codes, among codes of the same dimension, given in~\cite{byrne2020partition}.
Studying the density of MRD codes is a more complex problem. In fact, unlike MDS codes,
MRD codes are known to be sparse within codes sharing a given dimension, with very sporadic exceptions~\cite{gruica2022common}. In particular, the proof technique applied 
in~\Cref{sec:GMDSW_subcode_dist} does not naturally translate from the Hamming to the rank metric. 

However, if one considers subcode distances in the vector rank-metric space $(\F_{q^m}^n,\rkd)$, then one can establish a relation between GMRD weights and subcode distances similar to that of \Cref{sec:GMDSW_subcode_dist}. In fact, in this context the density of MRD codes is well-understood and arguments similar to those of \Cref{sec:GMDSW_subcode_dist} can be made.

\end{sloppypar}

\bibliographystyle{plain}
\bibliography{refs}

@article{aman2013enumerating,
  title={Enumerating semifields},
  author={Aman, Kelly},
  journal={Electronic Notes in Discrete Mathematics},
  volume={40},
  pages={15--19},
  year={2013},
  publisher={Elsevier}
}

@article{ashikhmin1998generalized,
  title={{On generalized Hamming weights for Galois ring linear codes}},
  author={Ashikhmin, Alexei},
  journal={Designs, Codes and Cryptography},
  volume={14},
  number={2},
  pages={107--126},
  year={1998},
  publisher={Springer}
}

@article{ball2012sets,
  title={On sets of vectors of a finite vector space in which every subset of basis size is a basis.},
  author={Ball, Simeon},
  journal={Journal of the European Mathematical Society (EMS Publishing)},
  volume={14},
  number={3},
  year={2012}
}

@article{byrne2017covering,
  title={Covering radius of matrix codes endowed with the rank metric},
  author={Byrne, Eimear and Ravagnani, Alberto},
  journal={SIAM Journal on Discrete Mathematics},
  volume={31},
  number={2},
  pages={927--944},
  year={2017},
  publisher={SIAM}
}

@article{byrne2020partition,
  title={Partition-balanced families of codes and asymptotic enumeration in coding theory},
  author={Byrne, Eimear and Ravagnani, Alberto},
  journal={Journal of Combinatorial Theory, Series A},
  volume={171},
  pages={105169},
  year={2020},
  publisher={Elsevier}
}

@article{camps2022optimal,
  title={{Optimal anticodes, MSRD codes, and generalized weights in the sum-rank metric}},
  author={Camps-Moreno, Eduardo and Gorla, Elisa and Landolina, Cristina and Garc{\'\i}a, Elisa Lorenzo and Mart{\'\i}nez-Pe{\~n}as, Umberto and Salizzoni, Flavio},
  journal={IEEE Transactions on Information Theory},
  volume={68},
  number={6},
  pages={3806--3822},
  year={2022},
  publisher={IEEE}
}

@article{camps2025code,
  title={Code distances: a new family of invariants of linear codes},
  author={Camps-Moreno, Eduardo and Gorla, Elisa and L{\'o}pez, Hiram H},
  journal={arXiv preprint arXiv:2509.16424},
  year={2025}
}

@article{cardell2020generalized,
  author    = {Cardell, Sara D. and Firer, Marcelo and Napp, Diego},
  title     = {Generalized column distances for convolutional codes},
  journal   = {IEEE Transactions on Information Theory},
  volume    = {66},
  number    = {11},
  pages     = {6863--6871},
  year      = {2020},
}

@article{delsarte1978bilinear,
  title={Bilinear forms over a finite field, with applications to coding theory},
  author={Delsarte, Philippe},
  journal={Journal of combinatorial theory, Series A},
  volume={25},
  number={3},
  pages={226--241},
  year={1978},
  publisher={Elsevier}
}

@article{forney1994dimension,
  title={Dimension/length profiles and trellis complexity of linear block codes},
  author={Forney, G. David},
  journal={IEEE transactions on information theory},
  volume={40},
  number={6},
  pages={1741--1752},
  year={1994}
}

@article{gorla2020rankmetric,
  title={Rank-metric codes and $q$-polymatroids},
  author={Gorla, Elisa and Jurrius, Relinde and López, Hiram H. and Ravagnani, Alberto},
  journal={Journal of Algebraic Combinatorics},
  volume={52},
  pages={1--19},
  year={2020},
  publisher={Springer}
}

@article{gorla2022generalized,
  title={Generalized weights of codes over rings and invariants of monomial ideals},
  author={Gorla, Elisa and Ravagnani, Alberto},
  journal={arXiv preprint arXiv:2201.05813},
  year={2022}
}

@article{gorla2023generalized,
  author    = {Elisa Gorla and Flavio Salizzoni},
  title     = {Generalized weights of convolutional codes},
  journal   = {IEEE Transactions on Information Theory},
  volume    = {69},
  number    = {8},
  pages     = {4927--4943},
  year      = {2023},
}

@article{gorla2024generalized,
  author    = {Gorla, Elisa and Salizzoni, Flavio},
  title     = {Generalized column weights and equivalences of convolutional codes},
  journal   = {Journal of Algebra and its Applications},
  volume    = {23},
  number    = {7},
  pages     = {2550072},
  year      = {2024}
}

@article{gruica2022common,
  title={{Common complements of linear subspaces and the sparseness of MRD codes}},
  author={Gruica, Anina and Ravagnani, Alberto},
  journal={SIAM Journal on Applied Algebra and Geometry},
  volume={6},
  number={2},
  pages={79--110},
  year={2022},
  publisher={SIAM}
}

@inproceedings{horimoto2001generalized,
  title={{On generalized Hamming weights for codes over finite chain rings}},
  author={Horimoto, Hiroshi and Shiromoto, Keisuke},
  booktitle={Applied Algebra, Algebraic Algorithms and Error-Correcting Codes: 14th International Symposium, AAECC-14 Melbourne, Australia, November 26--30, 2001 Proceedings 14},
  pages={141--150},
  year={2001},
  organization={Springer}
}

@article{johnsen2013hamming,
  title={{Hamming weights and Betti numbers of Stanley--Reisner rings associated to matroids}},
  author={Johnsen, Trygve and Verdure, Hugues},
  journal={Applicable Algebra in Engineering, Communication and Computing},
  volume={24},
  number={1},
  pages={73--93},
  year={2013},
  publisher={Springer}
}

@phdthesis{knuth1963finite,
  title={Finite semifields and projective planes},
  author={Knuth, Donald Ervin},
  year={1963},
  school={California Institute of Technology}
}

@article{kurihara2015relative,
  title={Relative generalized rank weight of linear codes and its applications to network coding},
  author={Kurihara, Jun and Matsumoto, Ryutaroh and Uyematsu, Tomohiko},
  journal={IEEE Transactions On information theory},
  volume={61},
  number={7},
  pages={3912--3936},
  year={2015},
  publisher={IEEE}
}

@article{liao2022relative,
  title={{Relative generalized Hamming weights over Frobenius rings}},
  author={Liao, Dajian and Liu, Zihui},
  journal={Indian Journal of Pure and Applied Mathematics},
  volume={53},
  number={2},
  pages={340--348},
  year={2022},
  publisher={Springer}
}

@article{martinez2017relative,
  title={Relative generalized matrix weights of matrix codes for universal security on wire-tap networks},
  author={Mart{\'\i}nez-Pe{\~n}as, Umberto and Matsumoto, Ryutaroh},
  journal={IEEE Transactions on Information Theory},
  volume={64},
  number={4},
  pages={2529--2549},
  year={2017},
  publisher={IEEE}
}

@article{pellikaan1996existence,
  title={On the existence of error-correcting pairs},
  author={Pellikaan, Ruud},
  journal={Journal of Statistical Planning and Inference},
  volume={51},
  number={2},
  pages={229--242},
  year={1996},
  publisher={Elsevier}
}

@article{ravagnani2016generalized,
  title={Generalized weights: an anticode approach},
  author={Ravagnani, Alberto},
  journal={Journal of Pure and Applied Algebra},
  volume={220},
  number={5},
  pages={1946--1962},
  year={2016},
  publisher={Elsevier}
}

@article{ravagnani2016rank,
  title={Rank-metric codes and their duality theory},
  author={Ravagnani, Alberto},
  journal={Designs, Codes and Cryptography},
  volume={80},
  number={1},
  pages={197--216},
  year={2016},
  publisher={Springer}
}

@article{RosenthalYork1997,
  author    = {Rosenthal, Joachim and York, Eric V. },
  title     = {{On the generalized Hamming weights of convolutional codes}},
  journal   = {IEEE Transactions on Information Theory},
  volume    = {43},
  number    = {1},
  pages     = {330--335},
  year      = {1997}
}

@article{segre1955,
  author    = {Segre, Beniamino},
  title     = {Ovals in a finite projective plane},
  journal   = {Canadian Journal of Mathematics},
  volume    = {7},
  pages     = {414--416},
  year      = {1955}
}

@inproceedings{segre1955p,
  author    = {Segre, Beniamino},
  title     = {Problemi di geometria finita},
  booktitle = {Atti del Congresso Internazionale di Matematica (Amsterdam, 1954)},
  volume    = {3},
  pages     = {187--197},
  year      = {1956},
  publisher = {Erven P. Noordhoff},
  address   = {Groningen}
}

@article{shehadeh2021space,
  title={Space--time codes from sum-rank codes},
  author={Shehadeh, Mohannad and Kschischang, Frank R},
  journal={IEEE Transactions on Information Theory},
  volume={68},
  number={3},
  pages={1614--1637},
  year={2021},
  publisher={IEEE}
}

@article{singleton1964maximum,
  title={Maximum distance q-nary codes},
  author={Singleton, Richard},
  journal={IEEE Transactions on Information Theory},
  volume={10},
  number={2},
  pages={116--118},
  year={1964},
  publisher={IEEE}
}

@article{wei1991generalized,
  title={{Generalized Hamming weights for linear codes}},
  author={Wei, Victor K},
  journal={IEEE Transactions on information theory},
  volume={37},
  number={5},
  pages={1412--1418},
  year={1991},
  publisher={IEEE}
}

@article{zhang2019deep,
  title={{Deep holes of projective Reed-Solomon codes}},
  author={Zhang, Jun and Wan, Daqing and Kaipa, Krishna},
  journal={IEEE Transactions on Information Theory},
  volume={66},
  number={4},
  pages={2392--2401},
  year={2019},
  publisher={IEEE}
}

@incollection{gorla2021rank,
  title={Rank-metric codes},
  author={Gorla, Elisa},
  booktitle={Concise Encyclopedia of Coding Theory},
  pages={227--250},
  year={2021},
  publisher={Chapman and Hall/CRC}
}

@book{hirschfeld1998projective,
  title={Projective geometries over finite fields},
  author={Hirschfeld, James and William, Peter},
  year={1998},
  publisher={Oxford University Press}
}

@book{huffman2010fundamentals,
  title={Fundamentals of error-correcting codes},
  author={Huffman, W. Cary and Pless, Vera},
  year={2010},
  publisher={Cambridge university press}
}

@incollection{napp2018multi,
  title={Multi-shot network coding},
  author={Napp, Diego and Santana, Filipa},
  booktitle={Network Coding and Subspace Designs},
  pages={91--104},
  year={2018},
  publisher={Springer}
}

@incollection{sheekey201913,
  title={{13. MRD codes: constructions and connections}},
  author={Sheekey, John},
  booktitle={Combinatorics and finite fields: Difference sets, polynomials, pseudorandomness and applications},
  volume={23},
  pages={255--286},
  year={2019},
  publisher={de Gruyter}
}

\end{document}